\title{On the Constant-Depth Circuit Complexity of Generating Quasigroups}
\newcommand{\ord}[1]{\operatorname{ord}(#1)}
\declaretheorem[style=italicized,sibling=theorem,name=Definition/Lemma]{deflem}
\newcommand{\jsay}[1]{{\color{red} Josh says: #1}}
\newcommand{\msay}[1]{{\color{blue} Michael says: #1}}
 	\newcommand{\Reach}{\mathop{
 			\mbox{{\fontfamily{cmr}\fontseries{m}\fontshape{n}\selectfont
 					Reach}}}\nolimits}
 	\newcommand{\Accept}{\mathop{
 			\mbox{{\fontfamily{cmr}\fontseries{m}\fontshape{n}\selectfont
 					Accept}}}\nolimits}
 	\newcommand{\Start}{\mathop{
 			\mbox{{\fontfamily{cmr}\fontseries{m}\fontshape{n}\selectfont
 					Start}}}\nolimits}
\newcommand{\algprobm}[1]{\textsc{#1}\xspace}
\newcommand{\betacc}[1]{\ifthenelse{\equal{#1}{1}}{\exists^{\log n}}{\exists^{\log^{#1}n}}} 
\newcommand{\alphacc}[1]{\ifthenelse{\equal{#1}{1}}{\forall^{\log n}}{\forall^{\log^{#1}n}}} 
\newcommand{\Oh}{O}
\newcommand{\Soc}{\text{Soc}}
\newcommand{\Z}{\mathbb{Z}}
\newcommand{\N}{\mathbb{N}}
\newcommand{\abs}[1]{\left|\mathinner{#1}\right|}
\newcommand{\ceil}[1]{\left\lceil\mathinner{#1}\right\rceil}
\renewcommand{\setminus}{\mysetminus}
\newcommand{\mysetminusD}{\raisebox{.8pt}{\hbox{\tikz{\draw[line width=0.6pt,line cap=round] (3.5pt,0pt) -- (0,5.2pt);}}}}
\newcommand{\mysetminusT}{\mysetminusD}
\newcommand{\mysetminusS}{\raisebox{.5pt}{\hbox{\tikz{\draw[line width=0.45pt,line cap=round] (2.2pt,0) -- (0,3.8pt);}}}}
\newcommand{\mysetminusSS}{\raisebox{.35pt}{\hbox{\tikz{\draw[line width=0.4pt,line cap=round] (1.5pt,0) -- (0,2.8pt);}}}}
\newcommand{\mysetminus}{\mathbin{\mathchoice{\mysetminusD}{\mysetminusT}{\mysetminusS}{\mysetminusSS}}}
\crefname{observation}{Observation}{Observations}
\newcommand{\Lem}[1]{Lem.~\ref{#1}\xspace}
\newcommand{\Cor}[1]{Cor.~\ref{#1}\xspace}
\newcommand{\Prop}[1]{Prop.~\ref{#1}\xspace}
\newcommand{\Thm}[1]{Thm.~\ref{#1}\xspace}
\DeclareMathOperator{\poly}{poly}
\newcommand*{\ComplexityClass}[1]{\ensuremath{\mathsf{#1}}\xspace}
\newcommand*{\NL}{\ComplexityClass{NL}}
\newcommand*{\NP}{\ComplexityClass{NP}}
\renewcommand{\P}{\ComplexityClass{P}}
\newcommand*{\EXP}{\ComplexityClass{EXP}}
\newcommand{\NTISP}{\ComplexityClass{NTISP}}
\newcommand{\DTISP}{\ComplexityClass{DTISP}}
\newcommand{\NTISPpll}{\ensuremath{\NTISP(\mathrm{polylog}(n),\mathrm{log}(n))}\xspace}
\newcommand{\DTISPpll}{\ensuremath{\DTISP(\mathrm{polylog}(n),\mathrm{log}(n))}\xspace}
\newcommand*{\LogSpace}{\ComplexityClass{L}}
\newcommand*{\PSPACE}{\ComplexityClass{PSPACE}}
\newcommand{\AC}{\ComplexityClass{AC}}
\newcommand{\SAC}{\ComplexityClass{SAC}}
\newcommand{\MACz}{\ComplexityClass{MAC^0}}
\newcommand{\ACz}{\ComplexityClass{AC^0}}
\newcommand{\TCz}{\ComplexityClass{TC^0}}
\newcommand{\qACz}{\ComplexityClass{quasiAC^0}}
\newcommand{\FOLL}{\ComplexityClass{FOLL}}
\newcommand{\DLOGTIME}{\ComplexityClass{DLOGTIME}} %
\newcommand{\LOGSPACE}{\ComplexityClass{L}} %
\newcommand{\DSPACE}{\ComplexityClass{DSPACE}}
\newcommand{\DTIME}{\ComplexityClass{DTIME}}
\newcommand{\NTIME}{\ComplexityClass{NTIME}}
\begin{document}
\maketitle

\begin{abstract}
We investigate the constant-depth 
circuit complexity 
of the \algprobm{Isomorphism Problem}, \algprobm{Minimum Generating Set Problem} (\algprobm{MGS}), and \algprobm{Sub(quasi)group Membership Problem} (\algprobm{Membership})  for groups and quasigroups (=Latin squares), given as input in terms of their multiplication (Cayley) tables.
Despite decades of research on these problems, lower bounds for these problems even against  depth-$2$ $\AC$ circuits remain unknown. Perhaps surprisingly, Chattopadhyay, Torán, and Wagner (FSTTCS 2010; \emph{ACM Trans. Comput. Theory}, 2013) showed that \algprobm{Quasigroup Isomorphism} could be solved by $\AC$ circuits of depth $O(\log \log n)$ using $O(\log^2 n)$ nondeterministic bits, a class we denote $\betacc{2}\FOLL$. We narrow this gap by improving the upper bound for many of these problems to $\qACz$, thus decreasing the depth to constant.

In particular, we show that \algprobm{Membership} can be solved in $\NTIME(\mathrm{polylog}(n))$ and use this to prove the following:
\begin{itemize}
\item \algprobm{MGS} for quasigroups belongs to $\betacc{2}\alphacc{1}\NTIME(\mathrm{polylog}(n))$ $\subseteq \qACz$. Papadimitriou and Yannakakis (\emph{J. Comput. Syst. Sci.}, 1996) conjectured that this problem was $\betacc{2}\P$-complete; our results refute a version of that conjecture for completeness under $\qACz$ reductions unconditionally, and under polylog-space reductions assuming $\EXP \neq \PSPACE$. 

It furthermore implies that this problem is not hard for any class containing \algprobm{Parity}. The analogous results concerning \algprobm{Parity} were known for \algprobm{Quasigroup Isomorphism} (Chattopadhyay, Tor\'an, \& Wagner, \emph{ibid.}) and \algprobm{Subgroup Membership} (Fleischer, \textit{Theory Comput.} 2022), though not for \algprobm{MGS}.

\item \algprobm{MGS} for groups belongs to $\AC^{1}(\LogSpace)$. Our $\AC^{1} (\LogSpace)$ bound improves on the previous, very recent, upper bound of $\mathsf{P}$ (Lucchini \& Thakkar, \textit{J. Algebra}, 2024). Our $\qACz$ upper bound is incomparable to $\P$, but has similar consequences to the above result for quasigroups. 

\item \algprobm{Quasigroup Isomorphism} belongs to $\betacc{2}\ACz(\mathsf{TIMESPACE}(\text{polylog}(n), \log(n))) \subseteq \qACz$. As a consequence of this result and previously known $\ACz$ reductions, this implies the same upper bound for the \algprobm{Isomorphism Problems} for: Steiner triple systems, pseudo-STS graphs, \algprobm{Latin Square Isotopy}, Latin square graphs, and Steiner $(t,t+1)$-designs. 
This improves upon the previous upper bound for these problems, which was $\betacc{2}\LogSpace \cap \betacc{2}\FOLL \subseteq \mathsf{quasiFOLL}$ (Chattopadhyay, Tor\'an, \& Wagner, \emph{ibid.}; Levet, \textit{Australas. J. Combin.} 2023).

\item  As a strong contrast, we show that MGS for arbitrary magmas is \NP-complete.
\end{itemize}

Our results suggest that understanding the constant-depth circuit complexity may be key to resolving the complexity of problems concerning (quasi)groups in the multiplication table model.
\end{abstract}



\setcounter{page}{1}

\section{Introduction}
The \algprobm{Group Isomorphism} (\algprobm{GpI}) problem is a central problem in computational complexity and computer algebra. When the groups are given as input by their multiplication (a.k.a. Cayley) tables, the problem reduces to \algprobm{Graph Isomorphism} (\algprobm{GI}), and because the best-known runtimes for the two are quite close ($n^{O(\log n)}$ \cite{MillerTarjan}\footnote{Miller \cite{MillerTarjan} credits Tarjan for $n^{\log n + O(1)}$.} vs. $n^{O(\log^2 n)}$ \cite{BabaiGraphIso}\footnote{Babai \cite{BabaiGraphIso} proved quasi-polynomial time, and the exponent of the exponent was analyzed and improved by Helfgott~\cite{HelgottGIAnalysis}.}), the former stands as a key bottleneck towards further improvements in the latter. 

Despite this, \algprobm{GpI} seems quite a bit easier than \algprobm{GI}. For example, Tarjan's $n^{\log n + O(1)}$ algorithm for groups \cite{MillerTarjan} can now be given as an exercise to undergraduates: every group is generated by at most $\lfloor \log_2 |G| \rfloor$ elements, so the algorithm is to try all possible $\binom{n}{\log n} \leq n^{\log n}$ generating sets, and for each, check in $n^{O(1)}$ time whether the map of generating sets extends to an isomorphism. In contrast, the quasi-polynomial time algorithm for graphs was a tour de force that built on decades of research into algorithms and the structure of permutation groups. Nonetheless, it remains unknown whether the problem for groups is actually easier than that for graphs under polynomial-time reductions, or even whether both problems are in $\P$!

Using a finer notion of reduction, Chattopadhyay, Torán, and Wagner \cite{ChattopadhyayToranWagner} proved that there was no $\ACz$ reduction from \algprobm{GI} to \algprobm{GpI}. This gave the first unconditional evidence that there is \emph{some formal sense} (namely, the $\ACz$ sense) in which \algprobm{GpI} really is easier than \algprobm{GI}. The key to their result was that the generator-enumeration technique described above can be implemented by non-deterministically guessing $\log^2 n$ bits (describing the $\log n$ generators, each of $\log n$ bits), and then verifying an isomorphism by a non-deterministic circuit of depth only $O(\log \log n)$, 
a class we denote $\betacc{2}\FOLL$. Observe that $\betacc{2}\FOLL \subseteq \mathsf{quasiFOLL}$ (by trying all $2^{O(\log^2 n)}$ settings of the non-deterministic bits in parallel), which cannot compute \algprobm{Parity} \cite{Hastad}, even if augmented with $\mathsf{Mod}_p$ gates for $p$ an odd prime \cite{Razborov, Smolensky87algebraicmethods, ChattopadhyayToranWagner}.  As \algprobm{GI} is $\mathsf{DET}$-hard \cite{Toran}---and hence can compute \algprobm{Parity}---there can be no $\ACz$ reduction from \algprobm{GI} to \algprobm{GpI}.

Such a low-depth circuit was quite surprising, although that surprise is perhaps tempered by the use of non-determinism. Nonetheless, it raises the question: 

\begin{quotation}
\noindent Is it possible that \algprobm{Group Isomorphism} is in $\mathsf{AC}^0$?
\end{quotation}

\noindent The authors would be shocked if the answer were ``yes,'' and yet we do not even have results showing that \algprobm{Group Isomorphism} cannot be computed by polynomial-size circuits of (!) depth 2. Indeed, it is not clear how to use existing $\ACz$ lower bound techniques against \algprobm{Group Isomorphism}.\footnote{The $\betacc{2}\FOLL$ upper bound unconditionally rules out reductions from \algprobm{Parity} and \algprobm{Majority}. While switching lemmas have been used to get $\ACz$ lower bounds on \algprobm{LogClique} (deciding if a graph has a clique on $O(\log n)$ vertices) \cite{lynch, Beame1990, RossmanClique} (covered in Beame's switching lemma primer \cite{BeamePrimer}), which is in $\betacc{2}\ACz \subseteq \qACz$, that problem feels quite different from \algprobm{Group Isomorphism}.}

In this paper, we aim to close the gap between $\ACz$ and $\betacc{2}\FOLL$ in the complexity of \algprobm{Group Isomorphism} and related problems. 
Our goal is to obtain constant-depth circuits of quasipolynomial size, a natural benchmark in circuit complexity \cite{BarringtonQuasipolynomial}. Significantly improving the size of these circuits would improve the state of the art run-time of \algprobm{Group Isomorphism}, a long-standing open question that we do not address here.

Our first main result along these lines is:

\begin{theorem}
\algprobm{(Quasi)Group Isomorphism} can be solved in $\qACz$.
\end{theorem}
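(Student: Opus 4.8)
The plan is to follow the generator-enumeration paradigm of Chattopadhyay, Tor\'an, and Wagner, but to replace the parallel closure computation that forces their $O(\log\log n)$ depth with a \emph{self-certifying} evaluation scheme that a constant-depth circuit can check. First I would bound the length of a generating sequence: in a finite quasigroup $Q$ of order $n$, every proper subquasigroup $S$ has order at most $n/2$, since for $a \notin S$ left-translation by $a$ embeds $S$ into $Q \mysetminus S$ (using left cancellation and closure of $S$ under division). Hence a greedy sequence $g_1, \dots, g_m$ in which each $g_{i+1} \notin \langle g_1, \dots, g_i\rangle$ at least doubles the subquasigroup and so has length $m \le \log_2 n$. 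Nondeterministically guessing such a sequence for $Q_1$ together with a candidate image sequence $h_1, \dots, h_m$ in $Q_2$ costs only $O(\log^2 n)$ bits, the $\betacc{2}$ layer. It then remains to verify, in constant depth, that $g_i \mapsto h_i$ extends to an isomorphism $Q_1 \cong Q_2$.

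The heart of the argument is \emph{word evaluation}. Because $\langle g_1, \dots, g_m\rangle$ is reached from the generators by $O(\log\log n)$ rounds of squaring the generating set, every element of $Q_1$ is the value of a balanced binary tree of depth $O(\log\log n)$, equivalently with $O(\log n)$ leaves, whose leaves are generators; evaluating all such trees in parallel is exactly what produces the $\FOLL$ depth in the previous bound. The key observation is that a certificate for membership can carry not only the tree but also the \emph{value at every internal node}, so that verifying it becomes a single conjunction of local Cayley-table consistency checks, one per node, which has constant depth. Such an annotated tree has $O(\log^2 n)$ bits, so membership (and the evaluation of the candidate map $\psi$ on a single element, obtained by running the same tree with each $g_i$ replaced by $h_i$ in $Q_2$) lies in $\NTIMEpl$ and is constant-depth verifiable from its certificate.

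I would then assemble the isomorphism test as an $\ACz$ circuit over the $n$ elements and $n^2$ product-pairs of $Q_1$ whose leaves are word-evaluation subroutines of type $\mathsf{TISP}(\mathrm{polylog}(n),\log n)$ (equivalently $\NTIMEpl$): check that the guessed sequence covers $Q_1$, that the induced map $\psi$ is a well-defined bijection, and that $\psi(a \cdot_1 b) = \psi(a) \cdot_2 \psi(b)$ for all $a,b$. Each conjunct or disjunct ranges over only polynomially many objects, giving constant depth with polynomial fan-in, so under the $O(\log^2 n)$ guessed bits the whole computation lies in $\betacc{2}\ACz(\mathsf{TISP}(\mathrm{polylog}(n),\log n)) \subseteq \qACz$. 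For the group case one can do better still: cube generating sequences let every element be written as a length-$O(\log n)$ \emph{left-associated} product, which a log-space machine evaluates in a single sequential pass, avoiding tree certificates altogether.

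The main obstacle I anticipate is non-associativity. Unlike groups, quasigroups force words to be genuine binary trees rather than linear products, and the delicate point is to arrange the per-element evaluation so that (i) it produces a \emph{canonical} value, so that $\psi$ is well-defined as a function without paying for an extra universal quantifier over all words representing the same element, and (ii) it fits the log-space, polylog-time (or single-existential polylog-time) budget despite the tree structure, given that one cannot afford to guess a separate certificate for each of the $n$ elements within the global $O(\log^2 n)$-bit budget. Concretely, I expect the crux to be turning the $O(\log\log n)$ parallel rounds of the closure computation into a sequence of self-certifying local checks, since this is precisely what converts the $\FOLL$ depth of the previous approach into constant circuit depth.
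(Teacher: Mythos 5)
Your architecture is the right one --- guess $O(\log^2 n)$ bits of generating data, then verify with constant-depth circuitry whose leaves are polylog-time-checkable certificates, landing in $\betacc{2}\ACz(\DTISPpll) \subseteq \qACz$ --- and your doubling bound on generating sequences is correct. But the step you yourself flag as ``the crux'' is genuinely missing, and it is exactly the step the paper's proof is about. Your plan defines $\psi(a)$ by picking \emph{a} tree for $a$ over the $g_i$ and re-evaluating it over the $h_i$; since an element of a quasigroup has many representing trees, the homomorphism check $\psi(a\cdot_1 b)=\psi(a)\cdot_2\psi(b)$ is meaningless unless $\psi$ is well-defined, and an existential choice of certificate per triple $(a,b,ab)$ can pass spuriously. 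You explicitly decline to pay for the universal quantification over pairs of certificates that would repair this (it would in fact still fit inside $\qACz$, as a quasipolynomial-fan-in AND, though it worsens the quantifier structure), and you offer no alternative. So as written the verification step does not establish that $g_i\mapsto h_i$ extends to an isomorphism.

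The paper closes this gap with a tool you reserve for the group case but which works for quasigroups too: by \cite{ChattopadhyayToranWagner}, every quasigroup has a cube generating sequence of length $O(\log n)$ \emph{with respect to any fixed parenthesization}, in particular the left-to-right one $P(g_0g_1^{e_1}\cdots g_k^{e_k})$. This kills both of your anticipated obstacles at once: (i) evaluation is a single sequential left-to-right pass in $\DTISPpll$, so no annotated tree certificates are needed; and (ii) well-definedness is never checked on elements at all --- instead one uses the criterion of Chattopadhyay--Tor\'an--Wagner that $g_i\mapsto h_i$ induces an isomorphism iff for all triples of exponent vectors $(c,d,e)\in(\{0,1\}^k)^3$ the relation $P(g_0g^c)\cdot P(g_0g^d)=P(g_0g^e)$ holds in $G$ exactly when the corresponding relation holds in $H$. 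Each such check is universally quantified over only $O(\log n)$ bits, giving $\betacc{2}\alphacc{1}\betacc{1}\DTISPpll$ and depth-$4$ circuits of size $n^{O(\log n)}$. Without either the cube-word criterion or an explicit (quantified) well-definedness check, your proof does not go through.
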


(We discuss quasigroups more below.) 

\begin{remark}
We in fact get a more precise bound of $\betacc{2}\ACz(\DTISPpll)$ (Theorem~\ref{thm:QuasigroupIsoPolylogtime} gives an even more specific bound), where $\mathsf{DTISP}(t(n),s(n))$ is the class of languages decidable by a Turing machine that simultaneously uses time at most $t$ and space at most $s$. This more precise bound is notable because it is contained in both $\qACz$ and $\betacc{2}\FOLL \cap \betacc{2} \LogSpace$, the latter thus improving on \cite{ChattopadhyayToranWagner}. We get similarly precise bounds with complicated-looking complexity classes for the other problems mentioned in the introduction, but we omit the precise bounds here for readability.
\end{remark}

  Focusing on depth bounds---that is, without attempting to improve the worst-case runtime---our result is close to the end of the line for a series of works stretching back to 1970; see Table~\ref{table:history}. The only possible further improvements we see, without improving the worst-case runtime, are to get an $\betacc{2}\ACz$ upper bound---for which there are several obstacles, see Section~\ref{sec:conclusion}---or improving the exact size and depth of our result, but there is not much room for improvement here, as we already get quasi-polynomial-size circuits of depth only 4 and size $n^{\Oh(\log n)}$, matching the current-best serial runtime up to the constant in the exponent (we have not attempted to optimize the constant hidden in the big-Oh; we certainly do not get a constant less than 1, and we conservatively estimate our proof yields a constant not more than $20$).

\begin{table}[!htbp]
\small
\begin{center}
\begin{tabular}{|r|c|c|l|}
\hline
\textbf{Year} & \textbf{Result} & \textbf{Depth} & \textbf{Citation} \\ \hline
1970 & Generator-enumerator introduced & $\poly(n)$ & Felsch \& Neub\"{u}ser \cite{FN}\tablefootnote{Complexity not analyzed there, but the same as Tarjan's algorithm \cite{MillerTarjan}.} \\ \hline
1978 & $\betacc{2} \P \subseteq \mathsf{DTIME}(n^{\log n + O(1)})$ & $\poly(n)$ & Tarjan (see Miller \cite{MillerTarjan})  \\ \hline
1977 & $\mathsf{DSPACE}(\log ^2 n)$ & $O(\log^2 n)$ & Lipton--Snyder--Zalcstein \cite{LiptonSnyderZalcstein}\tablefootnote{Despite the publication dates, this seems to have been independent of Tarjan's result. They note that Miller and Rabin had also observed this result independently.}  \\ \hline
 1994 & $\betacc{2}\mathsf{AC}^1$ & $O(\log n)$ & Wolf \cite{Wolf}\tablefootnote{\label{fn:wolf}Wolf only claims a bound of $\betacc{2}\mathsf{NC}^{2}$. However, if we replace his use of $\mathsf{NC}^{1}$ circuits to multiply two elements of a quasigroup with $\ACz$ circuits, we immediately get the $\betacc{2}\mathsf{AC}^1$ bound.} \\ \hline
 2010 & $\betacc{2}\mathsf{SAC}^1$ & $O(\log n)$ & Wagner \cite{WagnerThesis} \\ \hline
 2010 & $\betacc{2}\FOLL \cap \betacc{2} \LogSpace$ & $O(\log \log n)$ & Chattopadhyay--Torán--Wagner \cite{ChattopadhyayToranWagner}\tablefootnote{They do not claim the $\betacc{2}\LogSpace$ bound, but it follows immediately from their algorithm and results.} \\ \hline
 2013 & $\betacc{2}\mathsf{SC}^2 \cap \betacc{2}\LogSpace$ & $O(\log n)$ & Papakonstantinou--Tang--Qiao \cite{TangThesis}\tablefootnote{We have written the result this way, despite $\betacc{2} \LogSpace \subseteq \betacc{2}\mathsf{SC}^2$, because in Tang's thesis \cite{TangThesis}, the only place this is currently published, they only claim $\mathsf{NSC}^2$ using only $O(\log^2 n)$ bits of nondeterminism, which in our notation would be $\betacc{2}\mathsf{SC}^2$. However, their algorithm and results also immediately yields a $\betacc{2}\LogSpace$ bound.}  
 \\ \hline
 2024 &\parbox{3.1in}{\centering $\betacc{2}\ACz(\DTISPpll)$ \\ $ \subseteq \qACz$} & 4 & This work \\ \hline
\end{tabular}
\end{center}

\caption{\label{table:history} History of the low-level circuit complexity of algorithms for \algprobm{(Quasi)Group Isomorphism} based on the generator-enumerator technique. For non-circuit classes, we list their depth as the best-known depth of their simulation by circuits. The class in our bound is contained in all the other classes listed in the table. Although depth 4 in our result does not follow from the complexity class as listed here, it follows from the more exact class we use in Theorem~\ref{thm:QuasigroupIsoPolylogtime}. }
\end{table}

We note that although there are depth reduction techniques for bounded-depth circuits with $\mathsf{Mod}_p$ gates  \cite{AllenderH94}, no such result is known for $\qACz$ circuits (without $\mathsf{Mod}_p$ gates).

\paragraph{Minimum generating set.} Another very natural problem in computational algebra is the \algprobm{Min Generating Set} (\algprobm{MGS}) problem. Given a group, this problem asks to find a generating set of the smallest possible size. Given that many algorithms on groups depend on the size of a generating set, finding a minimum generating set has the potential to be a widely applicable subroutine. The \algprobm{MGS} problem for groups was shown to be in $\P$ by Lucchini \& Thakkar only very recently \cite{LucchiniThakkar}. We improve their complexity bound to:

\begin{theorem}
\algprobm{Min Generating Set} for groups can be solved in $\qACz$ and in $\mathsf{AC}^1(\LogSpace)$ ($O(\log n)$-depth, unbounded fan-in circuits with a logspace oracle).
\end{theorem}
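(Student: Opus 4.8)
The plan is to establish the two bounds by separate routes: the $\qACz$ bound via generator enumeration built on the Membership algorithm, and the $\mathsf{AC}^1(\LogSpace)$ bound via a parallelization of a structural computation of the minimum number of generators $d(G)$. For the $\qACz$ bound I would exploit that a group is in particular a quasigroup, so the enumerate-and-verify strategy applies directly. Since $d(G) \le \lceil \log_2 n \rceil$, I run the following in parallel over every candidate size $k \in \{1, \dots, \lceil \log_2 n\rceil\}$: existentially guess a tuple $S = (g_1,\dots,g_k) \in G^k$, costing $O(\log^2 n)$ nondeterministic bits (the $\betacc{2}$ layer); then verify $\langle S\rangle = G$ using that this holds iff every $g \in G$ lies in $\langle S\rangle$, so that universally quantifying over the $n$ elements $g$ contributes an $\alphacc{1}$ layer and each test $g \in \langle S\rangle$ is settled by the Membership algorithm, which is in $\NTIME(\mathrm{polylog}(n))$. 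Taking the least successful $k$ is a constant-depth minimization over $O(\log n)$ bits. This yields $\betacc{2}\alphacc{1}\NTIME(\mathrm{polylog}(n)) \subseteq \qACz$; for groups one may even replace the membership subroutine by an $\LogSpace$ procedure, as $g \in \langle S\rangle$ is undirected reachability in the symmetrized Cayley graph and hence in $\LogSpace$ by Reingold's theorem.

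For the $\mathsf{AC}^1(\LogSpace)$ bound the enumeration above is unavailable, since there are too many $k$-subsets to guess within polynomial size; instead I would compute $d(G)$ from the chief structure of $G$. The key input is Gasch\"utz--Lucchini crown theory, which expresses $d(G)$ as a maximum of locally determined contributions of the chief factors---their $G$-isomorphism types, the multiplicities $\delta_A$ of complemented (non-Frattini) factors of each type, the endomorphism field $\mathrm{End}_G(A)$, and the relevant first-cohomology data---rather than as an inherently sequential accumulation along a growing generating set. Concretely I would (i) compute a single chief series $1 = N_0 \triangleleft \cdots \triangleleft N_r = G$ with $r \le \log_2 n$ by repeatedly extracting minimal normal subgroups, (ii) compute for each chief factor its type and crown data using $\LogSpace$ oracle calls (subgroup generation, centralizers, commutators, and membership are all $\LogSpace$-computable in the Cayley-table model), and (iii) aggregate these by a max into $d(G)$. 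Steps (ii) and (iii) parallelize across chief factors, and step (i) has depth $O(\log n)$, so the whole computation fits in $\mathsf{AC}^1(\LogSpace)$; this refines the Lucchini--Thakkar $\P$ algorithm by exhibiting its parallelism.

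The main obstacle is precisely the decoupling underlying step (iii): a priori the number of generators needed to pass from $G/N_i$ to $G/N_{i-1}$ depends on all earlier stages, so I must lean on crown theory to certify that $d(G)$ is genuinely a maximum of independently computable per-chief-factor quantities, and then verify that each required subroutine---extracting minimal normal subgroups, deciding whether a chief factor is Frattini or complemented, counting complements in terms of the module structure, and identifying $G$-isomorphism types of chief factors---indeed lies in $\LogSpace$ (or at worst in $\mathsf{AC}^1(\LogSpace)$) in the multiplication-table model. Should some of these structural steps only be $\mathsf{AC}^1(\LogSpace)$ rather than $\LogSpace$, additional care is needed to keep the total depth at $O(\log n)$; I expect this bookkeeping, rather than any single hard combinatorial fact, to be where the real work lies.
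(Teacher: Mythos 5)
Your $\qACz$ argument is correct and is essentially the paper's own route: guess a candidate generating set with $O(\log^2 n)$ nondeterministic bits, universally quantify over $g \in G$, and settle each test $g \in \langle S\rangle$ with a $\NTIME(\mathrm{polylog}(n))$ membership procedure. For groups such a procedure is available from the Babai--Szemer\'edi reachability lemma; the paper routes through the quasigroup generalization (Lemma~\ref{extendedReachability}) and tightens the inner block to $\betacc{1}\DTISPpll$ by guessing the cube-like sequence once in the outer existential block rather than once per $g$, but your coarser $\betacc{2}\alphacc{1}\NTIME(\mathrm{polylog}(n))$ still sits inside $\qACz$. One caveat on your aside: substituting Reingold's $\LogSpace$ reachability procedure for the membership subroutine would \emph{not} preserve the $\qACz$ bound, since $\LogSpace$ contains \algprobm{Parity} and hence is not contained in $\qACz$.

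The $\AC^1(\LogSpace)$ half is where you diverge from the paper, and where there are genuine gaps. The paper does \emph{not} decouple the chief factors: it computes a chief series (Lemma~\ref{lem:ChiefSeriesSAC2}) and then descends it \emph{sequentially}, using the Lucchini--Menegazzo and Lucchini--Thakkar structural results to guarantee that a minimum generating set of $G/N_{i-1}$ is obtained from one of $G/N_i$ by modifying a bounded number of generators by elements of $N_i/N_{i-1}$; each of the $O(\log n)$ stages is then a polynomial-size search whose candidates are checked with $\LogSpace$ membership tests. Your crown-theoretic alternative has two concrete problems. First, \algprobm{MGS} as defined in the paper is a search problem: the crown formula at best yields the number $d(G)$, and you give no way to extract an actual minimum generating set from the crown data---this is exactly what the sequential descent provides, and a brute-force search over all $d(G)$-element subsets has size $n^{\Theta(\log n)}$, which does not fit in a polynomial-size $\AC^1(\LogSpace)$ circuit. (The paper itself draws this distinction for nilpotent groups: computing $d(G)$ in Proposition~\ref{prop:Nilpotent} versus the constructive remark that follows it.) Second, the per-factor crown data you need---$G$-isomorphism types of chief factors, $\lvert\mathrm{End}_G(A)\rvert$, first cohomology, and the count of complemented factors of each type---is asserted to be $\LogSpace$-computable, but this is far from clear: already computing $H^1$ or the rank of the relevant $\F_p$-modules is linear algebra over $\F_p$, which is complete for $\mathsf{Mod}_p\mathsf{L}$ and not known to lie in $\LogSpace$. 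Finally, note that your step (i) (extracting the chief series) already costs depth $\Theta(\log n)$ of $\LogSpace$ oracle calls, so parallelizing across chief factors would not improve the final complexity class even if the crown data were computable; the paper's sequential descent reaches the same $\AC^1(\LogSpace)$ bound while requiring nothing beyond membership testing.
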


\noindent We note that, although $\qACz$ is incomparable to $\P$ because of the quasi-polynomial size (whereas $\AC^1(\LogSpace) \subseteq \P$), the key we are focusing on here is reducing the depth. 

For nilpotent groups (widely believed to be the hardest cases of \algprobm{GpI}), if we only wish to compute the minimum \emph{number} of generators, we can further improve this complexity to $\LogSpace \cap \ACz(\NTISPpll)$ (Proposition~\ref{prop:Nilpotent}).

While our $\AC^1(\LogSpace)$ bound above is essentially a careful complexity analysis of the polynomial-time algorithm of Lucchini \& Thakkar \cite{LucchiniThakkar}, the $\qACz$ upper bound is in fact a consequence of our next, more general, result for \emph{quasi}groups, which involves some new ingredients.

\paragraph{Enter quasigroups.} Quasigroups can be defined in (at least) two equivalent ways: (1) an algebra whose multiplication table is a Latin square,\footnote{A Latin square is an $n \times n$ matrix where for each row and each column, the elements of $[n]$ appear exactly once.} or (2) a group-like algebra that need not have an identity nor be associative, but in which left and right division are uniquely defined, that is, for all $a,b$, there are unique $x$ and $y$ such that $ax=b$ and $ya=b$. 

In the paper in which they introduced $\log^2(n)$-bounded nondeterminism, Papadimitriou and Yannakakis showed that for arbitrary magmas,\footnote{A magma is a set $M$ together with a function $M \times M \to M$ that need not satisfy any additional axioms.} testing whether the magma has $\log n$ generators was in fact \emph{complete} for $\betacc{2} \P$, and conjectured:

\begin{conjecture}[{Papadimitriou \& Yannakakis \cite[p.~169]{PY}}] \label{conj:PY}
\algprobm{Min Generating Set for Quasigroups} is $\betacc{2}\P$-complete.
\end{conjecture}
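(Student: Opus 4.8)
The plan is to establish both directions of the completeness claim separately: membership in $\betacc{2}\P$ and $\betacc{2}\P$-hardness. For membership I would use the generator-enumeration paradigm. Every quasigroup of order $n$ is generated---under multiplication together with left and right division---by at most $\ceil{\log_2 n}$ elements, so any minimum generating set has size $O(\log n)$. To decide whether a quasigroup $Q$ admits a generating set of size at most $k$, nondeterministically guess $k$ elements, each specified by $\log n$ bits; since $k = O(\log n)$, this consumes $O(\log^2 n)$ nondeterministic bits, exactly the budget of $\betacc{2}\P$. A deterministic polynomial-time verifier then computes the subquasigroup generated by the guessed set---the closure under the three quasigroup operations, obtained by iterated saturation---and accepts iff this closure equals $Q$. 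This places \algprobm{Min Generating Set for Quasigroups} in $\betacc{2}\P$, matching the class exactly.

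For hardness I would reduce from the magma version of the problem, which Papadimitriou and Yannakakis proved $\betacc{2}\P$-complete. Given a magma $M$ on $n$ elements and a target $k = O(\log n)$, the goal is a polynomial-time (or weaker) reduction producing a quasigroup $Q$ and a target $k'$ such that $M$ is generated by $k$ elements iff $Q$ is generated by $k'$ elements. The natural route is to read the multiplication table of $M$ as a partial Latin square and complete it to a genuine Latin square via an Evans-style embedding, so that $M$ sits inside $Q$ as a sub-structure supported on a controlled number of added rows and columns, with the original products of $M$ preserved in place.

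The main obstacle---and the step I expect to dominate the proof---is controlling the minimum generating set across this embedding. A quasigroup's left and right division are powerful generation tools that a bare magma lacks, so completing the table can introduce ``shortcuts'' allowing far fewer elements to generate $Q$ than generate $M$, which would collapse the reduction. To prevent this I would engineer the freshly added elements into a rigid padding block designed so that (i) no element of $M$ is reachable from the padding under the quasigroup operations, forcing any generating set of $Q$ to project onto a generating set of $M$, and (ii) the padding contributes a fixed, predictable number of required generators independent of $M$, so that $k'$ is a computable function of $k$. Making such padding genuinely rigid against \emph{division}, not merely against multiplication, is the crux: the division operations equip every added element with inverse-like partners and a profusion of derived products, and it is exactly this richness that threatens any attempt to force the quasigroup's generators to mirror the magma's. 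If this rigidity cannot be secured, the hardness half---and with it the conjecture as stated---is what I would expect to be in jeopardy.
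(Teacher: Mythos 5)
You set out to prove the statement, but the statement is a \emph{conjecture} of Papadimitriou and Yannakakis, and the paper's contribution is to \emph{refute} it: Theorem~\ref{thm:MGSqAC0} places the decision version of \algprobm{MGS} for quasigroups in $\betacc{2}\alphacc{1}\betacc{1}\DTISPpll \subseteq \qACz \cap \DSPACE(\log^2 n)$, and the corollary following it shows the problem is not $\betacc{2}\P$-complete under $\qACz$ reductions unconditionally, nor under polylog-space reductions unless $\EXP = \PSPACE$. Your membership half is correct as far as it goes (guess $O(\log^2 n)$ bits for a $\lceil \log n\rceil$-element set, verify by closure in polynomial time), but it vastly overshoots the true complexity, and membership alone proves nothing about completeness.

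The genuine gap is the hardness half, and your own closing caveat is not a mere engineering obstacle but the precise reason the approach must fail. The ``shortcuts'' you worry about are unavoidable: the paper's Reachability Lemma for quasigroups (Lemma~\ref{extendedReachability}, adapting Babai--Szemer\'edi) shows that over \emph{any} generating set of \emph{any} quasigroup, every element has a straight-line program of length $O(\log^2 |G|)$ using multiplication and the two divisions. No padding gadget can be made ``rigid against division,'' because short division-based derivations exist universally; this is exactly what yields the $\betacc{2}\DTISPpll$ bound for \algprobm{Membership} (Theorem~\ref{thm:CQMqAC0}) and thence the $\qACz$ bound for \algprobm{MGS}. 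Since $\algprobm{Parity} \in \P \subseteq \betacc{2}\P$ but $\algprobm{Parity} \notin \qACz$, any $\ACz$- or $\qACz$-computable reduction of the kind underlying the Papadimitriou--Yannakakis magma-completeness result cannot exist for quasigroups; even a polynomial-time reduction would, by the paper's padding argument, force $\EXP = \PSPACE$ via $\betacc{2}\P \subseteq \mathsf{polyL}$ under polylog-space reductions. It is instructive that the magma setting behaves oppositely: there the absence of division permits genuinely rigid constructions, and the paper shows \algprobm{MGS} for magmas is in fact $\NP$-complete (Theorem~\ref{thm:magma}). In short, the Evans-style embedding plan founders not on construction details but on a structural theorem, and the correct resolution of the conjecture is negative, not positive.
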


\noindent They explicitly did \emph{not} conjecture the same for \algprobm{MGS} for \emph{groups}, writing:

\begin{quotation}
\noindent ``We conjecture that this result [$\betacc{2}\P$-completeness] also holds for the more structured MINIMUM GENERATOR SET OF A QUASIGROUP problem. In contrast, QUASIGROUP ISOMORPHISM was recently shown to be in DSPACE$(\log^2 n)$ \cite{Wolf}. Notice that the corresponding problems for groups were known to be in DSPACE$(\log^2 n)$ \cite{LiptonSnyderZalcstein}.''---Papadimitriou \& Yannakakis \cite[p.~169]{PY}
\end{quotation}

\noindent We thus turn our attention to the analogous problems for quasigroups: \algprobm{MGS} for quasigroups, \algprobm{Quasigroup Isomorphism}, and the key subroutine, \algprobm{Sub-quasigroup Membership}. We note that the $\betacc{2}\FOLL$ upper bound of Chattopadhyay, Torán, and Wagner \cite{ChattopadhyayToranWagner} actually applies to \algprobm{Quasigroup Isomorphism} and not just \algprobm{GpI}; we perform a careful analysis of their algorithm to put \algprobm{Quasigroup Isomorphism} into $\qACz$ as well. 

\begin{theorem} \label{thm:main-mgsq}
\algprobm{Min Generating Set for Quasigroups} is in $\qACz \cap \DSPACE(\log^2 n)$. 
\end{theorem}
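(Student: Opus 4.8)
The plan is to reduce \algprobm{MGS} to polylogarithmically many \algprobm{Membership} queries, leveraging two facts: a quasigroup of order $n$ needs only $O(\log n)$ generators, and \algprobm{Membership} is cheap — our $\NTIME(\mathrm{polylog}(n))$ bound for the circuit half, and a recursive length-doubling test for the space half. I would first pin down the generating-set bound. Build a chain $H_1 \subsetneq H_2 \subsetneq \cdots$ with $H_1 = \langle s_1\rangle$ and $H_{i+1} = \langle H_i, s_{i+1}\rangle$ for some $s_{i+1} \notin H_i$. Then $|H_{i+1}| \ge 2|H_i|$: the set $H_i s_{i+1} = \{h s_{i+1} : h \in H_i\}$ has $|H_i|$ elements by Latin-square injectivity and is disjoint from $H_i$, since $h_1 s_{i+1} = h_2$ would force $s_{i+1} = h_1 \backslash h_2 \in H_i$. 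Hence every order-$n$ quasigroup is generated by at most $\lceil \log_2 n\rceil + 1$ elements, so the minimum generating number $d$ satisfies $d \le \lceil \log_2 n\rceil + 1$, and it suffices to search over sets of size $O(\log n)$.

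For the $\qACz$ bound, to test whether $Q$ has a generating set of size $k \le \lceil \log_2 n\rceil + 1$ I would existentially guess $k$ elements, costing $O(\log^2 n)$ nondeterministic bits ($\betacc{2}$), and then universally check, over all $b \in Q$ (that is, over $\log n$ bits, $\alphacc{1}$), that $b$ lies in the subquasigroup generated by the guessed set, invoking the $\NTIME(\mathrm{polylog}(n))$ \algprobm{Membership} routine. This places the decision problem in $\betacc{2}\alphacc{1}\NTIME(\mathrm{polylog}(n))$; computing the minimum $d$ itself only adds an outer minimization over the $O(\log n)$ candidate values of $k$, run in parallel. To land in $\qACz$ I would unfold the quantifier prefix into circuitry: $\exists^{\log^2 n}$ becomes an $\mathsf{OR}$ of quasipolynomial fan-in $n^{O(\log n)}$, $\forall^{\log n}$ an $\mathsf{AND}$ of fan-in $n$, and the $\NTIME(\mathrm{polylog}(n))$ predicate a $\qACz$ sub-circuit (an $\mathsf{OR}$ over the guessed polylog-length computation, followed by an $\ACz$ verifier that checks the guessed tableau by parallel table look-ups). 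Composing constantly many constant-depth layers keeps the depth constant and the size quasipolynomial, giving exactly the inclusion $\betacc{2}\alphacc{1}\NTIME(\mathrm{polylog}(n)) \subseteq \qACz$.

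For the $\DSPACE(\log^2 n)$ bound I cannot afford to materialize $\langle S\rangle$ (that is $n$ bits), so I would instead test membership recursively by doubling word length. Writing $D_0 = S$ and $D_{i+1} = D_i \cup (D_i \cdot D_i)$, balanced parenthesization shows $D_i$ is exactly the set of products of at most $2^i$ generators. Since in a finite quasigroup a multiplicatively closed subset containing $S$ is automatically a subquasigroup (each left translation restricts to a bijection of the subset), and since the ascending chain of ``products of length $\le m$'' must stabilize by $m \le n$, we get $\langle S\rangle = D_{\lceil \log_2 n\rceil}$. Membership $b \in D_i$ then unfolds as ``there is an $x$ with $x \in D_{i-1}$ and $x \backslash b \in D_{i-1}$'', a recursion of depth $\lceil \log_2 n\rceil$ whose every frame stores only $b$, the candidate $x$, and a counter, i.e.\ $O(\log n)$ bits, for a total of $O(\log^2 n)$ space. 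Enumerating the $O(\log n)$-element candidate generating sets (another $O(\log^2 n)$ bits, reused) and, for each, checking membership of every $b$, puts \algprobm{MGS} in $\DSPACE(\log^2 n)$.

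The real difficulty lies in the \algprobm{Membership} subroutine rather than in the assembly above: producing a polylog-\emph{length} certificate of membership, despite the fact that an element of $\langle S\rangle$ may only be expressible as a product of $\Theta(n)$ generators, is what makes the $\NTIME(\mathrm{polylog}(n))$ bound possible (a balanced, cube-style derivation rather than a naive straight-line program), and for the space half the care is to organize the generating-set bound, the set enumeration, and the membership recursion so that all three share a single $O(\log^2 n)$ workspace. Both are resolved by the length-doubling structure of the $D_i$ together with the $\NTIME(\mathrm{polylog}(n))$ \algprobm{Membership} result established earlier, on which this theorem rests.
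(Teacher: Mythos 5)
Your $\qACz$ half is essentially the paper's argument and is fine: guess the $O(\log n)$-element candidate set ($\betacc{2}$), universally range over all $b\in Q$ ($\alphacc{1}$), and invoke the $\NTIME(\mathrm{polylog}(n))$ \algprobm{Membership} theorem, then unfold the quantifiers into constant-depth quasipolynomial-size circuitry. Your generator-count bound is also correctly proved --- and note that its key step, $h_1 s_{i+1}=h_2 \Rightarrow s_{i+1}=h_1\backslash h_2\in H_i$, works precisely because $H_i$ is a sub\emph{quasigroup}, hence closed under division.

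The $\DSPACE(\log^2 n)$ half has a genuine gap, and it sits exactly where the paper has to work hardest. Your claim $\langle S\rangle = D_{\lceil\log_2 n\rceil}$ for $D_{i+1}=D_i\cup(D_i\cdot D_i)$ is equivalent to saying every element of $\langle S\rangle$ has a multiplication-only parse tree over $S$ of depth $\lceil\log_2 n\rceil$, and neither of your two justifications establishes it. ``Balanced parenthesization'' is not available in a non-associative structure: a product of $m$ generators cannot be re-bracketed into a balanced tree without changing its value, so $D_i$ is \emph{not} ``the set of products of at most $2^i$ generators.'' And the stabilization argument only shows the chain $D_0\subseteq D_1\subseteq\cdots$ halts within $n$ steps (one new element per non-terminal step), giving $\langle S\rangle=D_n$, not $D_{\lceil\log_2 n\rceil}$; what you would need is a doubling bound $|D_{i+1}|\geq 2|D_i|$, and the sets $D_i$ --- unlike your $H_i$ --- are not division-closed, which is exactly what made your disjointness argument work. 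The paper's Reachability Lemma secures doubling only for the carefully built cube sets $K(i)$, by adjoining a single element chosen outside $K(i)\backslash K(i)$ (a division), and even then concludes with $L(t)=K(t)\backslash K(t)=G$ and parse trees of depth $O(\log^2 n)$ over the generators; the $O(\log n)$-depth statement you are implicitly assuming is a separate nontrivial theorem of Wolf, not something that falls out of the closure recursion. Nor can you fall back on deterministically enumerating the nondeterminism in your $\betacc{2}\alphacc{1}\NTIME(\mathrm{polylog}(n))$ algorithm: a fresh membership certificate (an SLP of length $O(\log^2 n)$, each entry $O(\log n)$ bits) costs $O(\log^3 n)$ bits to cycle through, yielding only $\DSPACE(\log^3 n)$. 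The paper gets $\DSPACE(\log^2 n)$ by proving the sharper bound $\betacc{2}\alphacc{1}\betacc{1}\DTISPpll$: the cube-like generating sequence and its witnesses are guessed \emph{once} in the outer $\betacc{2}$ block and shared across all $n$ membership checks, so each check needs only $O(\log n)$ fresh existential bits; brute-forcing that quantifier prefix then fits in $O(\log^2 n)$ space. That restructuring --- hoisting the reachability witness out of the per-element membership test --- is the missing idea.
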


To the best of our knowledge, \algprobm{MGS for Quasigroups} has not been studied from the complexity-theoretic viewpoint previously. While a $\DSPACE(\log^2 n)$ upper bound for $\algprobm{MGS}$ for \emph{groups} follows from \cite{TangThesis, ArvindToran}, as far as we know it remained open for quasigroups prior to our work. 

As with prior results on \algprobm{Quasigroup Isomorphism} and \algprobm{Group Isomorphism} \cite{ChattopadhyayToranWagner}, and other isomorphism problems such as  \algprobm{Latin Square Isotopy} and  \algprobm{Latin Square Graph Isomorphism} \cite{LevetLatinSquares}, Thm.~\ref{thm:main-mgsq} implies that \algprobm{Parity} does not reduce to \algprobm{MGS for Quasigroups}, thus ruling out most known lower bound methods that might be used to prove that \algprobm{MGS for Quasigroups} is not in $\ACz$. We also observe a similar bound for \algprobm{MGS for Groups} using Fleischer's technique \cite{Fleischer}. 

Papadimitriou and Yannakakis did not specify the type of reduction used in their conjecture, though their $\betacc{2}\P$-completeness result for \algprobm{Log Generating Set of a Magma} works in both logspace and $\ACz$ (under a suitable input encoding). Our two upper bounds rule out such reductions for \algprobm{MGS for Quasigroups} (unconditionally in one case, conditionally in the other):

\begin{corollary}
The conjecture of Papadimitriou \& Yannakakis \cite[p.~169]{PY} is false under $\qACz$ reductions. It is also false under polylog-space reductions assuming $\EXP \neq \PSPACE$.
\end{corollary}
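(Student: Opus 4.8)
The plan is to leverage the two containments established in Theorem~\ref{thm:main-mgsq}, namely that $\algprobm{MGS for Quasigroups}$ lies in both $\qACz$ and $\DSPACE(\log^2 n)$, via the following general template: a problem cannot be complete for $\betacc{2}\P$ under a class $\mathcal{R}$ of reductions if it already belongs to some subclass $\mathcal{C}$ that is closed under $\mathcal{R}$-reductions but fails to contain all of $\betacc{2}\P$. I would instantiate this template once with $\mathcal{C} = \qACz$ (giving an unconditional refutation) and once with $\mathcal{C} = \DSPACE(\log^{O(1)} n)$ (giving the refutation conditional on $\EXP \neq \PSPACE$). Throughout I treat the natural many-one reductions, matching the fact noted in the excerpt that the Papadimitriou--Yannakakis hardness holds under $\ACz$ and logspace many-one reductions.

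\textbf{The $\qACz$ case.} First I would record that $\qACz$ is closed under $\qACz$ many-one reductions: the composition of two constant-depth circuits is constant-depth, and if the reduction has quasipolynomial output length $2^{(\log n)^{O(1)}}$, then the subsequent $\qACz$ decider has size $2^{(\log 2^{(\log n)^{O(1)}})^{O(1)}} = 2^{(\log n)^{O(1)}}$, still quasipolynomial in $n$. Now suppose toward a contradiction that $\algprobm{MGS for Quasigroups}$ were $\betacc{2}\P$-complete under $\qACz$ reductions. Since it lies in $\qACz$ by Theorem~\ref{thm:main-mgsq}, every language in $\betacc{2}\P$ would $\qACz$-reduce to a $\qACz$ problem and hence, by closure, lie in $\qACz$; that is, $\betacc{2}\P \subseteq \qACz$. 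But $\algprobm{Parity} \in \P \subseteq \betacc{2}\P$, whereas $\algprobm{Parity} \notin \qACz$, because quasipolynomial size is subexponential while Håstad's switching lemma forces exponential size for \algprobm{Parity} at constant depth \cite{Hastad}. This contradiction refutes completeness under $\qACz$ reductions, unconditionally.

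\textbf{The polylog-space case.} Writing $\mathsf{polyL} = \bigcup_{k}\DSPACE(\log^{k} n)$, I would verify the analogous closure: a polylog-space reduction runs in quasipolynomial time and so has quasipolynomial output length, whence composing it with a $\DSPACE(\log^{k} m)$ decider that accesses the reduction's output bits on demand costs only $\log^{O(1)} n$ space, so $\mathsf{polyL}$ is closed under polylog-space reductions. As $\DSPACE(\log^2 n) \subseteq \mathsf{polyL}$, completeness of $\algprobm{MGS for Quasigroups}$ under polylog-space reductions would force $\betacc{2}\P \subseteq \mathsf{polyL}$. I would then close the argument with a standard padding (translational) step showing $\betacc{2}\P \subseteq \mathsf{polyL}$ implies $\EXP \subseteq \PSPACE$, and hence $\EXP = \PSPACE$: given $A \in \DTIME(2^{n^{k}})$, pad each $x$ of length $n$ to length $N = 2^{n^{k}}$ by appending a delimiter and $1$s, so the padded language $A'$ is decidable in time $\poly(N)$ and thus $A' \in \P \subseteq \betacc{2}\P$; under the hypothesis $A' \in \DSPACE(\log^{j} N)$, and simulating this decider on the implicitly represented padded input uses $\log^{j} N = n^{O(1)}$ work space plus $\log N = n^{k}$ bits to index the virtual input, placing $A$ in $\PSPACE$. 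Taking the contrapositive, $\EXP \neq \PSPACE$ rules out completeness under polylog-space reductions.

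\textbf{Main obstacle.} The two closure lemmas are routine bookkeeping. I expect the only delicate point to be the padding argument: one must choose the padding so that the exponential blow-up is absorbed exactly, ensuring that $\log^{O(1)}$ of the padded length $N$ becomes genuine polynomial space in $|x|$, and one must access the padded input on demand rather than writing it out (which would already cost exponential space). With those points handled, the corollary follows immediately from the containments of Theorem~\ref{thm:main-mgsq}.
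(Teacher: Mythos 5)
Your proposal is correct and follows essentially the same route as the paper: containment in $\qACz$ plus closure under the reductions plus $\algprobm{Parity} \in \betacc{2}\P \setminus \qACz$ for the first claim, and the identical $\mathsf{polyL}$-closure-plus-padding argument (including the on-demand simulation of the padded input rather than writing it out) for the second. The only differences are cosmetic: the paper's published argument for the first part invokes Razborov--Smolensky to rule out the stronger class of quasipolynomial-size $\mathsf{AC}[p]$ \emph{Turing} reductions of depth $o(\log n/\log\log n)$, whereas your H{\aa}stad-based argument for constant-depth many-one reductions already suffices for the statement as written.
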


In strong contrast, we show that \algprobm{MGS for Magmas} is \NP-complete (Thm.~\ref{thm:magma}).

A key ingredient in our proof of Thm.~\ref{thm:main-mgsq} is an improvement in the complexity of another central problem in computational algebra: the \algprobm{Sub-quasigroup Membership} problem (\algprobm{Membership},\footnote{In the literature, the analogous problem for groups is sometimes called \algprobm{Cayley Group Membership} or \algprobm{CGM}, to highlight that it is in the Cayley table model.} for short):

\begin{theorem} \label{thm:main-cqm}
\algprobm{Membership} for quasigroups is in $\betacc{2}\DTISPpll \subseteq \qACz$.
\end{theorem}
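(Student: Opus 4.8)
I would prove the containment by giving a \emph{short, locally verifiable certificate} of membership, namely a straight-line program (SLP) over the quasigroup multiplication whose leaves are generators from $S$ and whose output is the target $t$. Soundness is immediate: any value produced by multiplying together elements of $\langle S\rangle$ stays inside $\langle S\rangle$, so if some SLP over $S$ evaluates to $t$, then $t\in\langle S\rangle$. All the content is in \emph{completeness} (if $t\in\langle S\rangle$ then a \emph{short} such SLP exists) and in checking that the certificate fits the nondeterminism and resource budget of $\betacc{2}\DTISPpll$.

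The first structural ingredient is a doubling lemma. A finite subset $H$ that is closed under multiplication is automatically a subquasigroup, since for $a\in H$ the left-translation $x\mapsto ax$ is an injection of the finite set $H$ into itself, hence a bijection, so left and right division stay in $H$; thus $\langle S\rangle$ is exactly the multiplicative closure of $S$. Now if $H$ is a subquasigroup and $x\notin H$, then $xH\cap H=\emptyset$: any relation $xh=h'$ with $h,h'\in H$ would give $x=h'/h\in H$ by closure under right division, a contradiction. Since $\lvert xH\rvert=\lvert H\rvert$, we get $\lvert\langle H\cup\{x\}\rangle\rvert\ge 2\lvert H\rvert$. Consequently there is a generating chain $\langle x_1\rangle\subsetneq\langle x_1,x_2\rangle\subsetneq\cdots\subsetneq\langle x_1,\dots,x_k\rangle=\langle S\rangle$ with $x_i\in S$ and $k\le\log_2 n$, each inclusion at least doubling the order. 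This is the quasigroup analogue of the fact that groups are generated by $\log_2 n$ elements, and it caps the number of ``phases'' the certificate must traverse.

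The main obstacle is turning this into an SLP of only $O(\log n)$ instructions. Working purely from the doubling closure $S_{i+1}=S_i\cup S_iS_i$, an element of $\langle S\rangle$ is expressible by a \emph{balanced} term of depth $O(\log n)$, but that term has up to $\poly(n)$ leaves, and merging its subterms by value need not help: already in $\Z_n$ with the single generator $1$, the natural term for $n-1$ passes through the $n-1$ distinct partial sums $1,2,\dots,n-1$. The fix is to \emph{actively choose} which products to form rather than compress a fixed term, exactly as repeated squaring plus binary expansion reconstructs any element of $\Z_n$ in $O(\log n)$ steps. I would prove that such a repeated-squaring/doubling strategy can be carried along the generating chain above — building, at each of the $O(\log n)$ phases, a logarithmically short batch of derived elements via products of already-constructed values — so that the whole SLP has length $O(\log n)$ with each instruction either loading an element of $S$ or multiplying two earlier registers. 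This is the quasigroup counterpart of the certificate underlying Fleischer's group bound, and establishing it without associativity or inverses is where the real work lies.

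Given the Key Lemma, the encoding and verification are routine. The certificate consists of the $O(\log n)$ SLP instructions together with the value of every register, which is $O(\log n)$ elements of $O(\log n)$ bits each, i.e.\ $O(\log^2 n)$ nondeterministic bits — placing the problem in $\betacc{2}$. A deterministic verifier then checks each instruction locally: for an instruction asserting $y_j=y_a\cdot y_b$ it reads the three claimed values from the guess and performs a single Cayley-table lookup; leaf instructions are checked against $S$, and a final comparison confirms $y_{\text{last}}=t$. This is $O(\log n)$ lookups, each using $O(\log n)$ time to form the address, so $\poly\log(n)$ time in total, and only $O(\log n)$ work space, since registers are stored in the (read-only, random-access) guess and only a constant number are inspected at once. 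Hence \algprobm{Membership} is in $\betacc{2}\DTISPpll$, and the stated inclusion $\betacc{2}\DTISPpll\subseteq\qACz$ follows from the standard simulation that evaluates all $2^{O(\log^2 n)}$ guesses in parallel (quasipolynomial width) while replacing the polylog-time, logspace computation by a constant-depth, quasipolynomial-size circuit.
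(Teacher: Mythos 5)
Your soundness direction, your observation that a multiplicatively closed finite subset is automatically a sub-quasigroup, and the doubling argument $xH\cap H=\emptyset$ (hence a generating chain of length $\le\log_2 n$) are all correct and match ingredients of the paper's Reachability Lemma. But the proposal has a genuine gap exactly where you acknowledge ``the real work lies'': the Key Lemma asserting that every element of $\langle S\rangle$ has an SLP of length $O(\log n)$. This is not proved, and the sketch you give is internally inconsistent: $O(\log n)$ phases, each contributing a ``logarithmically short batch'' of instructions, yields an SLP of length $O(\log^2 n)$, not $O(\log n)$. An $O(\log n)$ bound would in fact improve on the Babai--Szemer\'edi Reachability Lemma already in the \emph{group} case, where the known general bound is $O(\log^2 n)$ (the cube construction costs $O(i)$ per new generator, summing to $O(\log^2 n)$); the repeated-squaring trick that works for $\Z_n$ does not obviously transfer to arbitrary (quasi)groups. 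The quasigroup version proved in the paper (Lemma~\ref{extendedReachability}) likewise only gives length $O(\log^2|G|)$, with the per-step cost even slightly worse ($4i+3$ versus $2i+1$) because of non-associativity.

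The gap matters because your encoding and verification both depend on the $O(\log n)$ length. With the honest $O(\log^2 n)$-length SLP, guessing ``the value of every register'' costs $O(\log^2 n)\cdot O(\log n)=O(\log^3 n)$ nondeterministic bits, which overshoots $\betacc{2}$; dropping the register values and having the verifier evaluate the SLP instead requires storing $O(\log^2 n)$ registers of $O(\log n)$ bits each on the work tape, which overshoots the $O(\log n)$ space bound of $\DTISPpll$. The paper's proof avoids this dilemma by \emph{not} guessing a raw SLP: it guesses the cube-like generating sequence $z_0,\dots,z_k$ with $k\le\log n$ (that is $O(\log^2 n)$ bits) together with bit-vectors of exponents witnessing that each $z_i$ arises from $L(i-1)=K(i-1)\backslash K(i-1)$ and that the target lies in $L(k)$ (another $O(\log^2 n)$ single bits). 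Every intermediate value the verifier needs is a left-to-right cube word $P(z_0z_1^{e_1}\cdots z_i^{e_i})$, which it recomputes on the fly keeping only one running product, so the space stays $O(\log n)$ and the time polylogarithmic. To repair your argument you would either have to prove your Key Lemma (a substantial new result) or restructure the certificate around cube words as the paper does.
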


 \algprobm{Membership} for \emph{groups} is well-known to belong to $\LogSpace$, by reducing to the connectivity problem on the appropriate Cayley graph (cf. \cite{BarringtonMcKenzie, Reingold}), but as $\LogSpace$ sits in between $\ACz$ and $\AC^1$, this is not low enough depth for us.

\paragraph{Additional results.} We also obtain a number of additional new results on related problems, some of which we highlight here:
\begin{itemize}
\item 
By known $\ACz$ reductions (see, e.g., Levet \cite{LevetLatinSquares} for details), our $\qACz$ analysis of Chattopadhyay, Torán, and Wagner's algorithm for \algprobm{Quasigroup Isomorphism} yields $\qACz$ upper bound for the isomorphism problems for Steiner triple systems, pseudo-STS graphs, Latin square graphs, and Steiner $(t,t+1)$-designs, as well as \algprobm{Latin Square Isotopy}. Prior to our work, \algprobm{Quasigroup Isomorphism} was not known to be solvable using $\mathsf{quasiAC}$ circuits of depth $o(\log \log n)$. See Cor.~\ref{cor:SRGs}.

\item \algprobm{Group Isomorphism} for simple groups (Cor.~\ref{cor:simple}) or for groups from a dense set $\Upsilon$ of orders (Thm.~\ref{thm:ParallelDW}) can be solved in $\ACz(\DTISPpll) \subseteq \qACz \cap \LogSpace \cap \FOLL$. For groups in a dense set of orders, this improves the parallel complexity compared to the original result of Dietrich \& Wilson \cite{DietrichWilson}. As in their paper, note that $\Upsilon$ omits large prime powers. Thus, we essentially have that for groups that are \textit{not} $p$-groups, \algprobm{Group Isomorphism} belongs to a \textit{proper} subclass of $\mathsf{DET}$. This evidence fits with the widely-believed idea that $p$-groups are a bottleneck case for \algprobm{Group Isomorphism}.

\item \algprobm{Abelian Group Isomorphism} (Thm.~\ref{thm:abelian}) is in $\forall^{\log \log n} \mathsf{MAC}^0(\DTISPpll)$. The key novelties here are (1) a new observation that allows us to reduce the number of co-nondeterministic bits from $\log n$ (as in \cite{GrochowLevetWL}) down to $\log \log n$, and (2) using an \\ $\ACz(\DTISPpll)$ circuit for order finding, rather than $\FOLL$ as in \cite{ChattopadhyayToranWagner}.

\item \algprobm{Membership} for nilpotent groups is in $\ACz(\NTISPpll) \subseteq \FOLL \cap\qACz$ (Prop.~\ref{prop:Nilpotent}).

\end{itemize}

\subsection{Methods}
Several of our results involve careful analysis of the low-level circuit complexity of extant algorithms, showing that they in fact lie in smaller complexity classes than previously known. One important ingredient here is that we use simultaneous time- and space-restricted computations. This not only facilitates several proofs and gives better complexity bounds, but also gives rise to new algorithms such as for \algprobm{Membership} for nilpotent groups, which previously was not known to be in \FOLL.

One such instance is in our improved bound for order-finding and exponentiation in a semigroup (Lem.~\ref{lem:fastexp}). The previous proof \cite{BKLM} (still state of the art 23 years later) used a then-novel and clever ``double-barrelled'' recursive approach to compute these in \FOLL. In contrast, our proof uses standard repeated doubling, noting that it can be done in $\DTISPpll \subseteq \FOLL \cap \qACz$, recovering their result with standard tools and reducing the depth; in fact, from our proof we get $\qACz$ circuits of depth 2, which is clearly optimal from the perspective of depth.
 We use this improved bound on order-finding to improve the complexity of isomorphism testing of Abelian groups (Thm.~\ref{thm:abelian}), simple groups (Cor.~\ref{cor:simple}), and groups of almost all orders (Thm.~\ref{thm:ParallelDW}).

For a few of our results, however, we need to develop new tools to work with quasigroups. 
In particular,  for the $\qACz$ upper bound on \algprobm{MGS} for quasigroups, we cannot directly adapt the technique of Chattopadhyay, Torán, and Wagner. 
Indeed, their analysis of their algorithm already seems tight to us in terms of having depth $\Theta(\log \log n)$.

The first key is Thm.~\ref{thm:main-cqm}, putting \algprobm{Membership} for quasigroups into $\NTIME(\mathrm{polylog}(n))$ (more precisely, into $\betacc{2}\DTISPpll$). To do this, we replace the use of a cube generating sequences from \cite{ChattopadhyayToranWagner} with something that is nearly as good for the purposes of \algprobm{MGS}: we extend the Babai--Szemer\'edi  Reachability Lemma \cite[Thm.~3.1]{BabaiSzemeredi} from groups (its original setting) to quasigroups in order to obtain what we call \emph{cube-like} generating sequences, which also give us short straight-line programs. Division in quasigroups is somewhat nuanced, e.g., despite the fact that for any $a,b$, there exists a unique $x$ such that $ax = b$, this does not necessarily mean that there is an element ``$a^{-1}$'' such that $x=a^{-1}b$, because of the lack of associativity. Our proof is thus a careful adaptation of the technique of Babai \& Szemer\'edi, with a few quasigroup twists that result in a slightly worse, but still sufficient, bound.

\subsection{Prior work}
\paragraph{Isomorphism testing.} The best known runtime bound for \algprobm{GpI} is $n^{(1/4) \log_{p}(n) + O(1)}$-time (where $p$ is the smallest prime dividing $n$) due to Rosenbaum \cite{Rosenbaum2013BidirectionalCD} and Luks \cite{LuksCompositionSeriesIso} (see \cite[Sec. 2.2]{GR16}), though this tells us little about parallel complexity. In addition to Tarjan's result mentioned above \cite{MillerTarjan}, Lipton, Snyder, \& Zalcstein \cite{LiptonSnyderZalcstein} independently observed that if a group is $d$-generated, then we can decide isomorphism by considering all possible $d$-element subsets. This is the \textit{generator enumeration} procedure. Using the fact that  every group admits a generating set of size $\leq \log_{p}(n)$ (where $p$ is the smallest prime dividing $n$), Tarjan obtained a bound of $n^{\log_{p}(n) + O(1)}$-time for \algprobm{Group Isomorphism}, while Lipton, Snyder, \& Zalcstein \cite{LiptonSnyderZalcstein} obtained a stronger bound of $\mathsf{DSPACE}(\log^{2} n)$. Miller \cite{MillerTarjan} extended Tarjan's observation to the setting of quasigroups. There has been subsequent work on improving the parallel complexity of generator enumeration for quasigroups, resulting in bounds of $\betacc{2}\AC^{1}$ ($\AC^{1}$ circuits that additionally receive $O(\log^{2} n)$ non-deterministic bits, denoted by other authors as $\beta_{2}\AC^1$) due to Wolf
\cite{Wolf},\footnote{See footnote~\ref{fn:wolf}.} $\betacc{2}\SAC^{1}$ due to Wagner \cite{WagnerThesis}, and $\betacc{2}\LogSpace \cap \betacc{2}\FOLL$ due to Chattopadhyay, Tor\'an, \& Wagner \cite{ChattopadhyayToranWagner}. In the special case of groups, generator enumeration is also known to belong to $\betacc{2}\mathsf{SC}^{2}$ \cite{TangThesis}. There has been considerable work on polynomial-time, isomorphism tests for several families of groups, as well as more recent work on $\mathsf{NC}$ isomorphism tests---we refer to recent works \cite{GQcoho, DietrichWilson, GrochowLevetWL} for a survey. We are not aware of work on isomorphism testing for specific families of quasigroups that are not groups.

\paragraph{Min Generating Set.} As every (quasi)group has a generating set of size $\leq \lceil \log n \rceil$, \algprobm{MGS} admits an $n^{\log(n) + O(1)}$-time solution for (quasi)groups. In the case of groups, Arvind \& Tor\'an  \cite{ArvindToran} improved the complexity to $\mathsf{DSPACE}(\log^{2} n)$. They also gave a polynomial-time algorithm in the special case of nilpotent groups. Tang further improved the general bound for $\algprobm{MGS}$ for groups to $\betacc{2}\mathsf{SC}^{2}$ \cite{TangThesis}. We observe that Wolf's technique for placing \algprobm{Quasigroup Isomorphism} into $\mathsf{DSPACE}(\log^2 n)$ also suffices to get \algprobm{MGS for Quasigroups} into the same class. Very recently, Das \& Thakkar \cite{DasThakkar} improved the algorithmic upper bound for \algprobm{MGS} in the setting of groups to $n^{(1/4) \log(n) + O(1)}$. A month later, Lucchini \& Thakkar \cite{LucchiniThakkar} placed  \algprobm{MGS} for groups into $\mathsf{P}$. Prior to \cite{LucchiniThakkar}, \algprobm{MGS for Groups} was considered comparable to \algprobm{Group Isomorphism} in terms of difficulty. Our $\AC^{1}(\LogSpace)$ bound  (\Thm{thm:main-mgsq}) further closes the gap between \algprobm{Membership Testing} in groups and \algprobm{MGS for Groups}, and in particular suggests that \algprobm{MGS} is of comparable difficulty to \algprobm{Membership} for groups rather than \algprobm{GpI}. Note that \algprobm{Membership} for groups has long been known to belong to $\LogSpace$ \cite{BarringtonMcKenzie, Reingold}.

\section{Preliminaries}

\subsection{Algebra and Combinatorics}\label{sec:algebraprelims}

\paragraph{Graph Theory.} A \textit{strongly regular graph} with parameters $(n, k, \lambda, \mu)$ is a simple, undirected $k$-regular, $n$-vertex graph $G(V, E)$ where any two adjacent vertices share $\lambda$ neighbors, and any two non-adjacent vertices share $\mu$ neighbors. The complement of a strongly regular graph is also strongly regular, with parameters $(n, n-k-1, n-2-2k+\mu, n-2k+\lambda)$. \\

\noindent A \textit{magma} $M$ is an algebraic structure together with a binary operation $\cdot : M \times M \to M$. We will frequently consider subclasses of magmas, such as groups, quasigroups, and semigroups. 

\paragraph{Quasigroups and Latin squares.} A \textit{quasigroup} consists of a set $G$ and a binary operation $\star : G \times G \to G$ satisfying the following. For every $a, b \in G$, there exist unique $x, y$ such that $a \star x = b$ and $y \star a = b$.  We write $x = a\backslash b$ and $y = b/a$, i.\,e., $a\star (a\backslash b) = b$ and $ (b/a) \star a = b$. When the multiplication operation is understood, we simply write $ax$ for $a \star x$. A \emph{sub-quasigroup} of a quasigroup is a subset that itself is a quasigroup. This means it is closed under the multiplication as well as under left and right quotients. Given $X \subseteq G$, the sub-quasigroup generated by $X$ is denoted as $ \langle X \rangle$. It is the smallest sub-quasigroup containing $X$.

Unless otherwise stated, all quasigroups are assumed to be finite and represented using their Cayley (multiplication) tables.

As quasigroups need not be associative, the parenthesization of a given expression may impact the resulting value.  For a sequence $S := (s_{0}, s_{1}, \ldots, s_{k})$ and parenthesization $P$ from a quasigroup, define:
\[
\text{Cube}(S) = \{ P(s_{0}s_{1}^{e_{1}} \cdots s_{k}^{e_{k}}) : e_{1}, \ldots, e_{k} \in \{0,1\} \}.
\]

We say that $S$ is a \textit{cube generating sequence} if each element $g$ in the quasigroup can be written as $g = P(s_{0}s_{1}^{e_{1}} \cdots s_{k}^{e_{k}})$, for $e_{1}, \ldots, e_{k} \in \{0,1\}$. Here, $s_{i}^{0}$ indicates that $s_{i}$ is not being considered in the product.  For every parenthesization, every quasigroup is known to admit a cube generating sequence of size $O(\log n)$ \cite{ChattopadhyayToranWagner}.

A \textit{Latin square} of order $n$ is an $n \times n$ matrix $L$ where each cell $L_{ij} \in [n]$, and each element of $[n]$ appears exactly once in a given row or a given column. Latin squares are precisely the Cayley tables corresponding to quasigroups. We will abuse notation by referring to a quasigroup and its multiplication table interchangeably. An \textit{isotopy} of Latin squares $L_{1}$ and $L_{2}$ is an ordered triple $(\alpha, \beta, \gamma)$, where $\alpha, \beta, \gamma : L_{1} \to L_{2}$ are bijections satisfying the following: whenever $ab = c$ in $L_{1}$, we have that $\alpha(a)\beta(b) = \gamma(c)$ in $L_{2}$. Alternatively, we may view $\alpha$ as a permutation of the rows of $L_{1}$, $\beta$ as a permutation of the columns of $L_{1}$, and $\gamma$ as a permutation of the values in the table. Here, $L_{1}$ and $L_{2}$ are isotopic precisely if $x$ is the $(i,j)$ entry of $L_{1}$ if and only if $\gamma(x)$ is the $(\alpha(i), \beta(j))$ entry of $L_{2}$.

Albert showed that a quasigroup $Q$ is isotopic to a group $G$ if and only if $Q$ is isomorphic to $G$. In general, isotopic quasigroups need not be isomorphic \cite{Albert}.

A Latin square $L$ can equivalently be viewed as a set of triples $\{(i,j,L_{ij}) : i,j \in [n]\} \subseteq [n] \times [n] \times [n]$. Given a set of triples $S \subseteq [n] \times [n] \times [n]$, the Latin square property can equivalently be rephrased as: every $i \in [n]$ appears as the first---resp. second, resp. third---coordinate of some triple in $S$, and no two triples in $S$ agree in more than one coordinate. From this perspective, an additional potential symmetry of Latin squares emerges: two Latin squares $L_1, L_2$ are \emph{parastrophic}\footnote{This terminology is borrowed from the quasigroup literature. In the Latin square literature this is sometimes referred to as ``conjugate'', but we find the argument in Keedwell and Denes \cite[pp. 15--16]{KeedwellDenes} to use  the quasigroup nomenclature even in the setting of Latin squares compelling.} if there is a permutation $\pi \in S_3$ (where $S_3$ is the symmetric group of degree $3$) such that, when viewed as sets of triples, we have $L_2 = \{(x_{1^\pi}, x_{2^\pi}, x_{3^\pi}) : (x_1, x_2, x_3) \in L_1\}$; the induced map $L_1 \to L_2$ is called a \emph{parastrophy}. A \emph{main class isomorphism} of Latin squares is the composition of a parastrophy and an isotopy; if there exists a main class isomorphism $L_1 \to L_2$, we say they are \emph{main class isomorphic}.\footnote{Miller \cite{MillerTarjan}, and then Levet \cite{LevetLatinSquares} following Miller, referred to this as ``main class isotopy''; we have since found several modern textbooks on quasigroups and Latin squares that refer to this notion as one of ``main class isomorphism'', ``paratopy'', or ``isostrophy'' (sic). Keedwell and Denes \cite[pp. 15--16]{KeedwellDenes} have a nice discussion of the  terminology, as well as the history of its usage.}

For a given Latin square $L$ of order $n$, we associate a \textit{Latin square graph} $G(L)$ that has $n^{2}$ vertices; one for each triple $(a, b, c)$ that satisfies $ab = c$. Two vertices $(a, b, c)$ and $(x, y, z)$ are adjacent in $G(L)$ precisely if $a = x$ or $b = y$ or $c = z$. Miller showed that two Latin square graphs $G_{1}$ and $G_{2}$ are isomorphic if and only if the corresponding Latin squares, $L_{1}$ and $L_{2}$, are main class isomorphic \cite{MillerTarjan}. 

A Latin square graph on $n^{2}$ vertices is a strongly regular graph with parameters $(n^{2}, 3(n-1), n, 6)$. Conversely, a strongly regular graph with these same parameters $(n^{2}, 3(n-1), n, 6)$ is called a \textit{pseudo-Latin square graph}. Bruck showed that for $n > 23$, a pseudo-Latin square graph is a Latin square graph \cite{Bruck}.

\paragraph{Group Theory.} For a standard reference, see \cite{Robinson1982}. All groups are assumed to be finite. For a group $G$, $d(G)$ denotes the minimum size of a generating set for $G$. The \textit{Frattini subgroup} $\Phi(G)$ is the set of non-generators of $G$. If $G$ is a $p$-group, the Burnside Basis Theorem (see \cite[Theorem 5.3.2]{Robinson1982}) provides that (i) $G = G^{p}[G,G]$, (ii) $G/\Phi(G) \cong (\mathbb{Z}/p\mathbb{Z})^{d(G)}$, and (iii) if $S$ generates $(\mathbb{Z}/p\mathbb{Z})^{d(G)}$, then any lift of $S$ generates $G$. A \textit{chief series} of $G$ is an ascending chain $(N_{i})_{i=0}^{k}$ of normal subgroups of $G$, where $N_{0} = 1$, $N_{k} = G$, and each $N_{i+1}/N_{i}$ ($i = 0, \ldots, k-1$) is minimal normal in $G/N_{i}$. 
For $g, h \in G$, the \textit{commutator} $[g,h] := ghg^{-1}h^{-1}$. The \textit{commutator subgroup} $[G,G] = \langle \{ [g,h] : g,h \in G \} \rangle$.

\paragraph{Designs.} Let $t \leq k \leq v$ and $\lambda$ be positive integers. A $(t, k, \lambda, v)$ design is an incidence structure $\mathcal{D} = (X, \mathcal{B}, I)$, where $X$ is a set of $v$ points, $\mathcal{B}$ is a subset of $\binom{X}{k}$---whose elements are referred to as \textit{blocks}---and such that each $t$-element subset of $X$ belongs to exactly $\lambda$ blocks. Now $I$ is the point-block incidence matrix, where $I_{x,B} = 1$ precisely if the point $x$ belongs to the block $B$.

If $t < k < v$, we say that the design is \textit{non-trivial}. If $\lambda = 1$, the design is referred to as a \textit{Steiner design}. We denote Steiner designs as $(t, k, v)$-designs when we want to specify $v$ the number of points, or Steiner $(t, k)$-designs when referring to a family of designs. We note that Steiner $(2, 3)$-designs are known as \textit{Steiner triple systems}. Projective planes are Steiner $(2, q+1, q^{2} + q + 1)$-designs, and affine planes are Steiner $(2, q, q^{2})$-designs. We assume that designs are given by the point-block incidence matrix.

For a design $\mathcal{D} = (X, \mathcal{B}, I)$, we may define a \textit{block intersection graph} (also known as a \textit{line graph}) $G(V, E)$, where $V(G) = \mathcal{B}$ and two blocks $B_{1}, B_{2}$ are adjacent in $G$ if and only if $B_{1} \cap B_{2} \neq \emptyset$. In the case of a Steiner $2$-design, the block-intersection graph is strongly regular. For Steiner triple systems, the block-intersection graphs are strongly regular with parameters $(n(n-1)/6, 3(n-3)/2, (n+3)/2, 9)$. Conversely, strongly regular graphs with the parameters $(n(n-1)/6, 3(n-3)/2, (n+3)/2, 9)$ are referred to as \textit{pseudo-STS graphs}. Bose showed that pseudo-STS graphs with strictly more than $67$ vertices are in fact STS graphs \cite{Bose}.

\paragraph{Algorithmic Problems.} 
A multiplication table of a magma $G = \{g_1, \dots, g_n\}$ of order $n$ is an array $M$ of length $n^2$ where each entry $M[i+(j-1)n]$ for $i,j \in \{1, \dots, n\}$ contains the binary representation of $k$ such that $g_ig_j= g_k$. In the following all magmas are given as their multiplication tables.

We will consider the following algorithmic problems. The \algprobm{Quasigroup Isomorphism} problem takes as input two quasigroups $Q_{1}, Q_{2}$ given by their multiplication tables, and asks if there is an isomorphism $\varphi : Q_{1} \cong Q_{2}$. The \algprobm{Membership} problem for groups takes as input a group $G$ given by its multiplication table, a set $S \subseteq G$, and an element $x \in G$, and asks if $x \in \langle S \rangle$. We may define the \algprobm{Membership} problem analogously when the input is a semigroup or quasigroup, and $\langle S \rangle$ is considered as the sub-semigroup or sub-quasigroup, respectively.

The \algprobm{Minimum Generating Set} (\algprobm{MGS}) problem takes as input a magma $M$ given by its multiplication table and asks for a generating set $S \subseteq M$ where $|S|$ is minimum. At some point we will also consider the \emph{decision variant} of \algprobm{MGS}: here we additionally give an integer $k$ in the input and the question is whether there exists a generating set of cardinality at most $k$.

We will be primarily interested in \algprobm{Minimum Generating Set} and \algprobm{Membership} in the setting of (quasi)groups,  with a few excursions to semigroups and magmas. Note that \algprobm{Membership} for groups is known to be in $\LogSpace \cap \qACz$ \cite{BarringtonMcKenzie, Reingold, Fleischer}. Here the containment in \LogSpace follows from the deep result by Reingold \cite{Reingold} that symmetric logspace (nondeterministic logspace where each transition is also allowed to be applied backward) coincides with \LogSpace.

\subsection{Computational Complexity} \label{sec:Complexity}
We assume that the reader is familiar with standard complexity classes such as \LOGSPACE, \NL, \NP, and \EXP. For a standard reference on circuit complexity, see \cite{VollmerText}. We consider Boolean circuits using the $\mathsf{AND}, \mathsf{OR}, \mathsf{NOT},$ and $\mathsf{Majority},$ where $\mathsf{Majority}(x_{1}, \ldots, x_{n}) = 1$ if and only if $\geq n/2$ of the inputs are $1$. Otherwise, $\mathsf{Majority}(x_{1}, \ldots, x_{n}) = 0$. In this paper, we will consider $\mathsf{DLOGTIME}$-uniform circuit families $(C_{n})_{n \in \mathbb{N}}$. For this,
one encodes the gates of each circuit $C_n$ by bit strings of length $O(\log n)$. Then the circuit family $(C_n)_{n \geq 0}$
is called \emph{\DLOGTIME-uniform}  if (i) there exists a deterministic Turing machine that computes for a given gate $u \in \{0,1\}^*$
of $C_n$ ($|u| \in O(\log n)$) in time $O(\log n)$ the type of gate $u$, where the types are $x_1, \ldots, x_n$, $\mathsf{NOT}, \mathsf{AND}, \mathsf{OR}, \mathsf{Majority},$ or oracle gates,
and (ii) there exists a deterministic Turing machine that decides for two given gates $u,v \in \{0,1\}^*$
of $C_n$ ($|u|, |v| \in O(\log n)$) and a binary encoded integer $i$ with $O(\log n)$ many bits
in time $O(\log n)$ whether $u$ is the $i$-th input gate for $v$.

\begin{definition}
Fix $k \geq 0$. We say that a language $L$ belongs to (uniform) $\mathsf{NC}^{k}$ if there exist a (uniform) family of circuits $(C_{n})_{n \in \mathbb{N}}$ over the $\mathsf{AND}, \mathsf{OR}, \mathsf{NOT}$ gates such that the following hold:
\begin{itemize}
\item The $\mathsf{AND}$ and $\mathsf{OR}$ gates take exactly $2$ inputs. That is, they have fan-in $2$.
\item $C_{n}$ has depth $O(\log^{k} n)$ and uses (has size) $n^{O(1)}$ gates. Here, the implicit constants in the circuit depth and size depend only on $L$.

\item $x \in L$ if and only if $C_{|x|}(x) = 1$. 
\end{itemize}
\end{definition}

\noindent The complexity class $\AC^{k}$ is defined analogously as $\mathsf{NC}^{k}$, except that the $\mathsf{AND}, \mathsf{OR}$ gates are permitted to have unbounded fan-in.
That is, a single $\mathsf{AND}$ gate can compute an arbitrary conjunction, and a single $\mathsf{OR}$ gate can compute an arbitrary disjunction. The class $\SAC^k$ is defined analogously, in which the $\mathsf{OR}$ gates have unbounded fan-in but the $\mathsf{AND}$ gates must have fan-in $2$.
The complexity class $\mathsf{TC}^{k}$ is defined analogously as $\AC^{k}$, except that our circuits are now permitted $\mathsf{Majority}$ gates of unbounded fan-in.
We also allow circuits to compute functions by using multiple output gates. 

Furthermore, for a language $L$ the class $\AC^k(L)$, apart from Boolean gates, also allows oracle gates for $L$ (an oracle gate outputs $1$ if and only if its input is in $L$).
If $K\in \AC^k(L)$, then $K$ is said to be $\AC^k$-Turing reducible to $L$.
Finally, for some complexity class $\mathcal{C}$ denote $\AC^k(\mathcal{C})$ to be the set of decision problems that are $\AC^k$-Turing reducible to problems in $\mathcal{C}$. 
Be aware that here we follow the notation of \cite{VollmerText}, which is different from \cite{WagnerThesis,GrochowLevetWL} (where $\AC^k(\mathcal{C})$ is used to denote composition of functions).

For every $k$, the following containments are well-known:
\[
\mathsf{NC}^{k} \subseteq \SAC^k \subseteq  \AC^{k} \subseteq \mathsf{TC}^{k} \subseteq \mathsf{NC}^{k+1}.
\]

\noindent In the case of $k = 0$, we have that:
\[
\mathsf{NC}^{0} \subsetneq \AC^{0} \subsetneq \mathsf{TC}^{0} \subseteq \mathsf{NC}^{1} \subseteq \LogSpace \subseteq \mathsf{NL} \subseteq \mathsf{SAC}^{1} \subseteq \AC^{1}.
\]

\noindent We note that functions that are $\mathsf{NC}^{0}$-computable can only depend on a bounded number of input bits. Thus, $\mathsf{NC}^{0}$ is unable to compute the $\mathsf{AND}$ function. It is a classical result that $\AC^{0}$ is unable to compute \algprobm{Parity} \cite{Smolensky87algebraicmethods}. The containment $\mathsf{TC}^{0} \subseteq \mathsf{NC}^{1}$ (and hence, $\mathsf{TC}^{k} \subseteq \mathsf{NC}^{k+1}$) follows from the fact that $\mathsf{NC}^{1}$ can simulate the $\mathsf{Majority}$ gate. 

We will crucially use the following throughout the paper.

\begin{theorem}[{\cite[Theorem 5.1]{HESSE2002695}}] \label{lem:multiplication}
The product of $(\log n)^{O(1)}$-many integers each of $(\log n)^{O(1)}$ bits can be computed in $\ACz$.
\end{theorem}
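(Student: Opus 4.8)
The plan is to follow the Chinese Remainder Representation (CRR) route to iterated multiplication of Hesse--Allender--Barrington~\cite{HESSE2002695}, but to exploit the fact that we are multiplying only $m = (\log n)^{O(1)}$ factors of $b = (\log n)^{O(1)}$ bits each, so that \emph{every} intermediate quantity has only $(\log n)^{O(1)}$ bits. This is what lets the steps that are genuinely $\TCz$-hard in the unrestricted iterated-multiplication problem collapse into $\ACz$. Write $P^\ast = \prod_{j=1}^m a_j$, of value below $2^{mb}$, so $P^\ast$ has $N \le mb = (\log n)^{O(1)}$ bits. First I would fix the smallest $t$ primes $p_1 < \dots < p_t$ with $\prod_i p_i > 2^{N}$; since $N$ is polylogarithmic, $t = (\log n)^{O(1)}$ primes suffice and each $p_i$ has magnitude $(\log n)^{O(1)}$, i.e.\ $O(\log\log n)$ bits. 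All of these primes, a primitive root $g_i$ of $\Z/p_i\Z$ with its discrete-log table, the residues $2^\ell \bmod p_i$, and the reconstruction constants $M = \prod_i p_i$, $M_i = M/p_i$, and $M_i^{-1}\bmod p_i$ depend only on $n$ and are of polynomial size, so they can be hard-wired.

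The computation then splits into three stages, each of which I claim lies in $\ACz$. In stage~(i), \emph{residues}, compute $r_{ij} = a_j \bmod p_i$ via $r_{ij} \equiv \sum_{\ell < b} \mathrm{bit}_\ell(a_j)\,(2^\ell \bmod p_i) \pmod{p_i}$, an iterated sum of $b = (\log n)^{O(1)}$ numbers of $O(\log\log n)$ bits followed by one reduction modulo the small number $p_i$. In stage~(ii), \emph{modular products}, to avoid the $\Theta(\log m)$ depth of a multiplication tree I would pass to discrete logarithms: after treating a zero residue separately, look up $e_{ij}$ with $r_{ij} = g_i^{e_{ij}}$, form $E_i = \bigl(\sum_j e_{ij}\bigr) \bmod (p_i-1)$, and read off $g_i^{E_i}$, turning the product into a single iterated sum of $m$ numbers of $O(\log\log n)$ bits plus table look-ups (the latter in $\ACz$ since the tables have only $(\log n)^{O(1)}$ entries and are addressed by equality tests). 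In stage~(iii), \emph{reconstruction}, recover $P^\ast = \bigl(\sum_i r_i\, M_i\,(M_i^{-1}\bmod p_i)\bigr) \bmod M$ in binary, an iterated sum of $t = (\log n)^{O(1)}$ numbers of $N = (\log n)^{O(1)}$ bits followed by a reduction modulo the $N$-bit modulus $M$.

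The single enabling fact behind all three stages is this: each can be realized by a constant-depth threshold ($\TCz$) circuit in which every majority/threshold gate has only $(\log n)^{O(1)}$ inputs, because $m$, $b$, $t$, and $N$ are all polylogarithmic, so every column-count, carry resolution, and modular reduction touches only polylogarithmically many bits. Each such bounded-fan-in majority gate can then be replaced by a \emph{polynomial-size, constant-depth} $\ACz$ sub-circuit: by Håstad's size--depth tradeoff~\cite{Hastad}, \algprobm{Parity} on $k = (\log n)^{c}$ bits admits depth-$(c+1)$, polynomial-size $\ACz$ circuits (the lower bound $2^{\Omega(k^{1/(d-1)})}$ is merely $2^{O(\log n)} = n^{O(1)}$ once $d-1 \ge c$), and with a slightly larger constant depth the same holds for every threshold, hence every symmetric function, on $(\log n)^{O(1)}$ bits. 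Substituting these realizations into the threshold circuits for stages~(i)--(iii), and composing the three stages, keeps both the depth constant and the size polynomial, giving $\ACz$.

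The step I expect to be the genuine obstacle is stage~(iii), the CRR-to-binary reconstruction: this is precisely the part that forces $\TCz$ in the unrestricted problem, and the whole argument hinges on checking that in the polylogarithmic regime it truly reduces to counting and carry resolution on $(\log n)^{O(1)}$ bits rather than on $n$ bits. The remaining delicate point is purely bookkeeping, namely verifying that all the number-theoretic data (the primes, primitive roots, discrete-log tables, and reconstruction constants) admits a \DLOGTIME-uniform description; this is routine exactly because every such object is of polylogarithmic magnitude and can be regenerated within the $O(\log n)$ time budget.
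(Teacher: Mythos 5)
The paper gives no proof of this statement---it is imported verbatim as Theorem 5.1 of Hesse--Allender--Barrington \cite{HESSE2002695}---and your proposal is essentially a reconstruction of their argument: Chinese remaindering into polylog-magnitude primes, discrete logarithms to linearize the modular products, CRT reconstruction, and the collapse of constant-depth threshold circuits whose gates all have $(\log n)^{O(1)}$ fan-in down to polynomial-size $\ACz$. The one place you genuinely undersell the difficulty is the \DLOGTIME-uniformity of the hard-wired data: the reconstruction constants $M_i=\prod_{j\ne i}p_j$ have $(\log n)^{O(1)}$ bits, so a \DLOGTIME machine cannot simply recompute a requested bit of them within its $O(\log n)$ time budget, and making this uniform is precisely where \cite{HESSE2002695} has to work (likewise, the carry resolution in your stage~(iii) hides a reduction from weighted thresholds with weights up to $2^{(\log n)^{O(1)}}$ to unweighted majorities of polylogarithmic fan-in, though both issues are known to be surmountable).
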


We also use the following easy lemma:

\begin{lemma} \label{lem:factor}
The prime factors of a $O(\log n)$-bit integer can be computed in \ACz.
\end{lemma}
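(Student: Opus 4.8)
The plan is to exploit the fact that an $O(\log n)$-bit integer $m$ satisfies $m \le \poly(n)$, so that a brute-force search over all candidate divisors---of which there are only polynomially many---fits into polynomial-size, constant-depth circuits. Every potential factor $d$ of $m$ is itself an $O(\log n)$-bit integer, so the list of candidates $d \in \{2, \ldots, m\}$ has size $\poly(n)$, and the circuit will devote one ``track'' to each candidate. The only arithmetic primitives needed are multiplication and comparison of $O(\log n)$-bit integers, both of which are available in $\ACz$: comparison (and equality) of two $O(\log n)$-bit integers is standard, and the product of two such integers is in $\ACz$ by Theorem~\ref{lem:multiplication} (which even covers iterated products of $\poly\log(n)$-many integers of $\poly\log(n)$ bits).

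First I would build, for each candidate $d$, a divisibility test: $d \mid m$ holds if and only if there is some $q \in \{1, \ldots, m\}$ with $d \cdot q = m$. Since $d$ and $q$ each have $O(\log n)$ bits, every product $d \cdot q$ is computed by an $\ACz$ subcircuit and compared to $m$ in $\ACz$; an unbounded-fan-in $\mathsf{OR}$ over the $\poly(n)$ choices of $q$ then yields the divisibility bit in constant depth and polynomial size. Next I would test primality of each $d$ by the same idea: $d$ is composite exactly when there exist $a, b \in \{2, \ldots, d-1\}$ with $a \cdot b = d$, so an $\mathsf{OR}$ over the $\poly(n)$ pairs $(a,b)$---each check a single $\ACz$ multiplication and equality test---decides compositeness, and negating it (conjoined with the test $d \ge 2$) decides primality. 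A candidate $d$ is reported as a prime factor of $m$ precisely when its primality bit and its divisibility bit both fire, which is a single $\mathsf{AND}$ per track, and the distinct prime factors are read off the corresponding output gates. If prime-power multiplicities are also wanted, I would additionally compute, for each prime $d$ and each $k \le \lceil \log_2 m \rceil = O(\log n)$, the power $d^k$ via Theorem~\ref{lem:multiplication} (an iterated product of $O(\log n) = \poly\log(n)$ factors) and test $d^k \mid m$ as above, recording the largest such $k$.

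I do not anticipate a genuine obstacle here---this is why the lemma is labeled ``easy.'' The two points requiring care are (i) confirming that every operand stays of bit-length $O(\log n)$ so that Theorem~\ref{lem:multiplication} applies: whenever a power $d^k$ would exceed $m$ it cannot divide $m$ and is simply discarded, so we never multiply numbers larger than $\poly(n)$; and (ii) verifying $\DLOGTIME$-uniformity, which is immediate from the highly regular, candidate-indexed layout of the circuit together with the known uniformity of the multiplication and comparison subcircuits.
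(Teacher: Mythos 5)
Your proposal is correct and follows essentially the same route as the paper's proof: enumerate all $\poly(n)$ candidate divisors in parallel and, for each, combine an $\ACz$ divisibility test with an $\ACz$ primality test on $O(\log n)$-bit numbers. You simply spell out the divisibility and primality subcircuits (via existential quantification over quotients and over factor pairs) that the paper's one-line proof leaves implicit.
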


(We remind the reader that we always refer to the uniform classes unless otherwise specified; without uniformity the result would be immediate as all functions of $\log n$ bits are in nonuniform $\ACz$.)

\begin{proof}
Each prime divisor of $n$ can be represented using $O(\log n)$ bits. As the numbers involved are only $O(\log n)$ bits, ordinary arithmetic function of these numbers can be computed in $\mathsf{AC}^0$, in particular, testing if an $O(\log n)$-bit number is prime, testing if one $O(\log n)$-bit number divides another. So, in parallel, for all numbers $x=2,3,\dotsc,n/2$, an $\mathsf{AC}^0$ circuit checks which ones are prime and divide $n$.
\end{proof}

\paragraph{Further circuit classes.} The complexity class $\MACz$ is the set of languages decidable by constant-depth uniform circuit families with a polynomial number of $\mathsf{AND}, \mathsf{OR},$ and $\mathsf{NOT}$ gates, and a single $\mathsf{Majority}$ gate at the output. The class $\MACz$ was introduced (but not so named) in \cite{VotingPolynomials}, where it was shown that $\MACz \subsetneq \mathsf{TC}^{0}$. This class was subsequently given the name $\MACz$ in \cite{LearnabilityAC0}.

The complexity class $\mathsf{FOLL}$ is the set of languages decidable by uniform $\AC$ circuit families of depth $O(\log \log n)$ and polynomial size. It is known that $\AC^{0} \subsetneq \mathsf{FOLL} \subsetneq \AC^{1}$, and it is open as to whether $\mathsf{FOLL}$ is contained in $\mathsf{NL}$ \cite{BKLM}.

We will be particularly interested in $\mathsf{NC}$ circuits of quasipolynomial size (i.\,e., $2^{O(\log^k n)}$ for some constant $k$). For a circuit class $\mathcal{C} \subseteq \mathsf{NC}$, the analogous class permitting a quasipolynomial number of gates is denoted $\mathsf{quasi}\mathcal{C}$. We will focus specifically on \qACz. Here, we stress that our results for \qACz will be stated for the nonuniform setting. Note that \DLOGTIME uniformity does not make sense for \qACz, as we cannot encode gate indices using $O(\log n)$ bits. Nonetheless, there exist suitable notions of uniformity for $\qACz$ \cite{BarringtonQuasipolynomial,FerrarottiGonzalezScheweTurull}.

\paragraph{Bounded nondeterminism.} For a complexity class $\mathcal{C}$, we define $\betacc{i}\mathcal{C}$ to be the set of languages $L$ such that there exists an $L' \in \mathcal{C}$ such that $x \in L$ if and only if there exists $y$ of length at most $O(\log^{i} |x|)$ such that $(x, y) \in L'$. Similarly, define $\alphacc{i}\mathcal{C}$ to be the set of languages $L$ such that there exists an $L' \in \mathcal{C}$ such that $x \in L$ if and only if for all $y$ of length at most $O(\log^{i} |x|)$, $(x,y) \in L'$. For any $i \geq 0$ and any $c \geq 0$, both $\betacc{i}\FOLL$ and $\alphacc{i}\FOLL$ are contained in  $\mathsf{quasi}\FOLL$, and so cannot compute \algprobm{Parity} \cite{ChattopadhyayToranWagner, Smolensky87algebraicmethods}. Note that $\alphacc{1}\mathcal{C} \cup  \betacc{1}\mathcal{C}\subseteq \ACz(\mathcal{C})$.

\paragraph{Time and space-restricted Turing machines.}
When considering complexity classes defined by Turing machines with a time bound $t(n) \in o(n)$, we use Turing machines with random access and a separate address (or \emph{index}) tape. After writing an address, the machine can go to a query state reading the symbol from the input at the location specified by the address tape. As usual, the machines are otherwise allowed to have multiple work tapes. 

For functions $t(n), s(n) \in \Omega(\log n)$, the classes $\DTISP(t(n),s(n)) $ and $\NTISP(t(n),s(n)) $ are defined to consist of decision problems computable by deterministic (resp.\ nondeterministic) $t(n)$ time and $s(n)$ space bounded Turing machines. Be aware that there must be one Turing machine that simultaneously satisfies the time and space bound. For details we refer to \cite[Section 2.6]{VollmerText}. For further reading on the connection to \qACz, we refer to \cite{BarringtonQuasipolynomial,FerrarottiGonzalezScheweTurull}.

We frequently use $\DTISPpll$ and $\NTISPpll$. However, because of how small the time and space bounds are for these classes, when we abuse notation to say some function (not necessarily decision problem) is computable in $\DTISPpll$, there is some ambiguity as to what this might mean. We use the following two equivalent definitions. First we need some setup. For a function $f$ from bit-strings to bit-strings, we define its \emph{bit function} $\text{bit-}f$ as follows. Write $f(x)_i$ to denote the $i$-th bit of $f(x)$. Then define
\[
\text{bit-}f(x,i) = \begin{cases}
f(x)_i & i \leq |f(x)| \\
\bot & i > |f(x)|.
\end{cases}
\]
Note that $\text{bit-}f$ is always a function which we may take to have at most two output bits (e.g., by encoding $f(x)_i$ by repeating the bit, and encoding $\bot$ by $01$), i.e. a pair of decision problems.

\begin{deflem} \label{def:DTISP}
For any function $f \colon \{0,1\}^* \to \{0,1\}^*$ with $|f(x)| \leq O(\log |x|)$, the following are equivalent:
\begin{enumerate}
\item $f$ is computable by a Turing machine, using $\poly(\log n)$ time and $O(\log n)$ space, that halts with the result written in a specified place on its work tape.

\item The two bits of $\text{bit-}f$ are each decision problems in $\DTISPpll$, i.\,e., can each be decided by a Turing machine using $\poly(\log n)$ time and $O(\log n)$ space. 
\end{enumerate}
In either case, we say $f$ is computable in $\DTISPpll$.
\end{deflem}

\begin{proof}
($\Rightarrow$) Suppose $f$ is computable by a Turing machine $M$ using $\poly(\log n)$ time and $O(\log n)$ workspace (\emph{including} the output). Then the following machine $N$ computes the function $(x,i) \mapsto f(x)_i$. On input $(x,i)$, $N$ simulates $M(x)$ while leaving $i$ untouched. After the simulation has completed, one of $N$'s worktapes contains $f(x)_i$, while $i$ is still on the input tape. Next, using another worktape to keep a counter, $N$ walks through $f(x)$ until it gets to the $i$-th bit, and outputs that bit of $f(x)$. 

The entirety of the computation after the simulation of $N$ uses an amount of space that is $|i| \leq O(|f(x)|) \leq O(\log |x|)$, and time that is at most $|i|^2 \leq O(\log^2 |x|)$ (in fact, by noting that incrementing a counter by $+1$ $k$ many times only incurs $O(k)$ many changes to the bits of the counter, this can be reduced to $O(\log |x|)$ as well).

($\Leftarrow$) Suppose the function $(x,i) \mapsto f(x)_i$ is computable by a Turing machine $N$ using $\poly(\log n) = \poly(\log |x|)$ time and $O(\log n) = O(\log |x|)$ space. The following machine $M$ computes $f$ in the manner in which, when started with $x$ on its input tape, at the end of the computation $f(x)$ is the only thing written on the work tape. On input $x$, $M$ simulates $N(x,0), N(x,1), \dotsc$, writing those bits in the specified place on its work tape, until it reaches an $i$ such that $N(x,i) = \bot$. As $N$ was called $|f(x)|+1$ times, this multiplies the time complexity of $N$ by $|f(x)| \leq O(\log |x|)$. However, the space need does not increase except to keep track of the integer $i$ for which the machine will next simulate $N(x,i)$, plus the space for the output, as the space to simulate $N(x,i)$ can be re-used when simulating $N(x,i+1)$. Thus the space used is that of $N$, plus the space for the output, plus the space to maintain $i$, which is $|i| \leq \log (|f(x)|+1) \leq O(\log \log |x|)$, so the total space is still $O(\log |x| + \log \log|x|) = O(\log |x|)$. The additional clearing and marking at the end at most doubles the time and adds a constant amount of space usage.
\end{proof}

\begin{fact}\label{fact:NTISP} 
$\NTISPpll \subseteq \NTIME(\mathrm{polylog} (n)) \subseteq \qACz$. In particular, any decision problem in $\NTIME(\mathrm{polylog} (n))$ is computable by 
a quasi-polynomial-size DNF (a particular kind of depth-2 $\qACz$ circuit).

Furthermore, decision problems in $\DTIME(\mathrm{polylog} (n))$ are computable by quasi-polynomial-size CNFs as well.
\end{fact}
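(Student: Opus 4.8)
The plan is to prove the three assertions in order, with essentially all of the content concentrated in the middle containment. The first containment, $\NTISPpll \subseteq \NTIME(\mathrm{polylog}(n))$, is immediate by definition: a machine witnessing membership in the former already runs within time $\mathrm{polylog}(n)$, so simply discarding the space bound witnesses membership in the latter. For the heart of the statement, $\NTIME(\mathrm{polylog}(n)) \subseteq \qACz$, I would fix a language $L$ decided by a nondeterministic random-access Turing machine $M$ running in time $t(n) = \log^{O(1)} n$, and record two quantitative facts. First, over $t(n)$ steps $M$ makes at most $O(t(n)) = \mathrm{polylog}(n)$ nondeterministic binary choices. Second, since each step reads at most one input symbol (only while in a query state), $M$ inspects at most $t(n) = \mathrm{polylog}(n)$ distinct input positions along any single computation path. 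These two bounds are what keep both the term width and the number of terms under control.

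Next I would build the DNF directly from the computation tree. View the whole nondeterministic computation on inputs of length $n$ as a branching tree whose internal nodes are either nondeterministic branch points (constant fan-out) or input-query points (fan-out $2$, one child for each possible value of the queried bit), and whose depth, measured in machine steps, is at most $t(n)$. Because every node has constant fan-out, the tree has at most $2^{O(t(n))} = 2^{\mathrm{polylog}(n)}$ leaves, i.e.\ quasi-polynomially many. For each accepting leaf I form the conjunction of literals recording, for each input bit queried along the root-to-leaf path, the value that bit must take; this term has width at most $t(n)$. The disjunction of these terms is the desired DNF: if an input $x$ satisfies the term of some accepting path $\pi$, then running $M$ with the nondeterministic choices recorded along $\pi$ follows $\pi$ exactly (the queried bits agree with $x$) and accepts, while conversely any accepting computation on $x$ is one of the enumerated paths and $x$ satisfies its term. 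This is a depth-$2$ circuit of quasi-polynomial size, which simultaneously establishes the stronger DNF claim, the containment $L \in \qACz$, and (via the first paragraph) that $\NTISP$-problems land here as well.

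For the final assertion, $L \in \DTIME(\mathrm{polylog}(n))$ yields a quasi-polynomial CNF by complementation: deterministic polylog time is closed under complement, so $\overline{L} \in \DTIME(\mathrm{polylog}(n)) \subseteq \NTIME(\mathrm{polylog}(n))$ has a quasi-polynomial DNF by the previous paragraph, and negating that DNF and pushing negations inward via De~Morgan produces a quasi-polynomial CNF for $L$. (Equivalently, in the deterministic branching tree every leaf is accepting or rejecting, and $L$ equals the conjunction, over rejecting leaves, of the clause that forbids following that path.)

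The only genuinely delicate point is the random-access input model: one must be sure that addressing and querying cannot let $M$ read more than $\mathrm{polylog}(n)$ input bits, since otherwise term width or leaf count could blow up. This is exactly where the ``each query costs at least one of the $t(n)$ available steps'' observation is used, so I would state it explicitly; beyond this bookkeeping I expect no real difficulty. I would also remark that uniformity is not a concern here, as the $\qACz$ bound is stated nonuniformly, so the DNF/CNF need only exist rather than be produced by a uniform procedure.
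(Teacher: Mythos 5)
Your proposal is correct and follows essentially the same route as the paper: both build the computation/decision tree of depth $\mathrm{polylog}(n)$ whose branch points are input queries and nondeterministic guesses, take the disjunction over accepting leaves of the conjunctions of queried-bit values to obtain a quasi-polynomial-size DNF, and obtain the CNF for the deterministic case by complementation and De~Morgan. The bookkeeping point you flag about the random-access model (each query consumes a step, so at most $t(n)$ input bits are read per path) is exactly the observation the paper relies on implicitly.
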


\begin{proof}
Given a $\mathrm{polylog} (n)$-time nondeterministic Turing machine $M$, we get a decision tree $T_M$ that decides that same language as follows. Each node of the tree corresponds to any time $M$ either queries an input bit or makes a nondeterministic guess. The two children of such a node correspond to whether the queried (resp., nondeterministically guessed) bit was 0 or 1. The leaves of the tree are labeled accepting or rejecting according to whether $M$ accepts or rejects at that point (note: once no more input bits are queried and no more nondeterministic guesses are made, the answer is fully determined, even though $M$ may deterministically spend an additional $\mathrm{polylog} (n)$ steps figuring out what the answer should be). The decision tree has height at most $\mathrm{polylog} (n)$, and thus at most quasi-polynomially many branches. We get a DNF by taking the disjunction over all accepting paths of the conjunction of the answers to the input queries on those paths.

If there are no non-deterministic guesses made, i.e.\ our machine $M$ was deterministic, then we can get a CNF from the above decision tree by using De Morgan's law: negate the leaves of the decision tree, get the corresponding DNF of the negated tree, then negate the DNF. (This does not work in the nondeterminstic case, as the disjunction includes all inputs such that there \emph{exists} nondeterministic guesses that lead to an accepting leaf, and when we negate the leaves we end up with co-nondeterminism instead of non-determinism.)
\end{proof}

\begin{lemma}\label{lem:DTISPFoll}
   $ \NTISPpll \subseteq \FOLL$.
\end{lemma}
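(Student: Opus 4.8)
The plan is to simulate an $\NTISPpll$ machine by an $\FOLL$ circuit, i.e.\ an unbounded fan-in $\AC$ circuit of depth $O(\log\log n)$ and polynomial size. The natural approach is to expand the nondeterminism into a disjunction and then compute each deterministic, space-bounded run by a shallow circuit. The key tension is that a $\DTISP(\mathrm{polylog}(n),\log n)$ computation runs for $\mathrm{polylog}(n)$ steps, which is far too many to unroll naively into constant depth; but the \emph{space} bound of $O(\log n)$ means each configuration is describable by $O(\log n)$ bits, so there are only polynomially many configurations, and this is what I would exploit.

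\textbf{Step 1 (handle nondeterminism).} First I would remove the nondeterministic bits by taking an $\mathsf{OR}$ over all $2^{O(\mathrm{polylog}(n))}$ guess strings. This is too many guesses for polynomial size, so instead I would fold the nondeterministic bits into the configuration: treat each nondeterministic choice as part of the transition, so that a configuration of the machine at a given time step records tape contents, head positions, and state using $O(\log n)$ bits, and the nondeterministic transition relation becomes a relation on pairs of configurations. Thus $x$ is accepted iff there is a path in the configuration graph from the start configuration to an accepting one, of length at most $T=\mathrm{polylog}(n)$.

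\textbf{Step 2 (reachability by repeated squaring).} The core step is to decide reachability of length at most $T$ in a graph on $N = 2^{O(\log n)} = n^{O(1)}$ vertices (the configurations). Let $R(u,v)$ be the one-step transition predicate, computable in $\ACz$ (indeed in $\DLOGTIME$) since it depends on $O(\log n)$ bits of the two configuration descriptions plus a single queried input bit. I would compute the transitive closure up to length $T$ by repeated squaring of the reachability relation: defining $R_1 = R$ and $R_{2k}(u,v) = \bigvee_w \bigl(R_k(u,w)\wedge R_k(w,v)\bigr)$, after $\lceil\log_2 T\rceil = O(\log\log n)$ squarings we obtain reachability within $T$ steps. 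Each squaring is one layer of an unbounded fan-in $\mathsf{OR}$ (over the $N$ intermediate vertices $w$) of $\mathsf{AND}$s, hence constant depth and polynomial size (the fan-in is $N = n^{O(1)}$, and there are $N^2$ output pairs $(u,v)$, still polynomial). Composing $O(\log\log n)$ such constant-depth blocks yields total depth $O(\log\log n)$ and polynomial size, which is exactly $\FOLL$. Acceptance is read off as $\bigvee_{v\text{ accepting}} R_T(\text{start},v)$.

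\textbf{The main obstacle} is keeping the size polynomial and verifying uniformity. Naively, storing the relation $R_k$ as a table on all $N^2$ pairs is polynomial, and each squaring layer has $N^3$ wires, which is still polynomial, so the size is fine; the subtlety is that I must be careful that the configuration encoding, the one-step predicate $R$, and the indexing of gates are all $\DLOGTIME$-computable so that the resulting family is uniform $\FOLL$ (as is implicit in the paper's use of uniform classes). I expect the bookkeeping for $\DLOGTIME$-uniformity of the squaring construction to be the most delicate part, whereas the depth and size bounds follow directly from the $O(\log\log n)$ squaring steps. I would also note that this argument is essentially the standard proof that $\NL \subseteq \AC^1$ via reachability, here sped up because the path length $T$ is only $\mathrm{polylog}(n)$ rather than polynomial, which is precisely what drops the depth from $O(\log n)$ to $O(\log\log n)$.
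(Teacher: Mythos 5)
Your proposal is correct and is essentially the same argument as the paper's: the paper also reduces to reachability in the polynomial-size configuration graph (configurations exclude the input tape, so the $O(\log n)$ space bound gives polynomially many of them) and evaluates a recursively defined $\Reach(\alpha,\beta,i)$ predicate — which is exactly your repeated squaring — to depth $O(\log\log n)$ because the time bound is $\mathrm{polylog}(n)$. The only cosmetic difference is that the paper phrases it as Savitch's theorem rather than matrix squaring; the content is identical.
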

\begin{proof}
    This is essentially the proof of Savitch's Theorem: A configuration of a Turing machine consist of the current state, the work and index tape content, but \emph{not} the content of the input tape. For configurations $\alpha, \beta $ define the $\Reach$ predicate as follows:
 	\begin{align*}
 \Reach(\alpha, \beta, 0)
 		&\iff \beta \text{ is reachable from $\alpha$ in at most one computation step} \\
 \Reach(\alpha, \beta, \mathrlap{i}\phantom{0})
 		&\iff \exists \gamma: \big( \Reach(\alpha, \gamma, i-1) \wedge
 		\Reach(\gamma, \beta, i-1) \big)		
 	\end{align*}
	for $i \geq 1$.
 Thus, $\Reach(\alpha, \beta, i)$ holds if and only if $\beta$ can be reached from $\alpha$ in at most $2^i$ computation steps.
 	Since the running time is bounded by $\log^k n$ for some $k$, by letting $\Start(w)$  denote the initial configuration for the input $w \in \Sigma^*$ and $\Accept$ the accepting configuration of $M$ (which can be assumed to be unique after a suitable manipulation of $M$), we have 
 	\[ \Reach(\Start(w), \Accept, \log \log^k n) \Longleftrightarrow w \in L.\]
  Now, it remains to observe that the inductive definition of the $\Reach$ predicate can be evaluated in \FOLL since the recursion depth is $\Oh(\log\log n)$, each recursion step is clearly in $\ACz$ and there are only polynomially many configurations (because of the space bound of $\log n$).
\end{proof}

\noindent By the very definition we have $\DTISPpll \subseteq \LOGSPACE$ and $\NTISPpll \subseteq \NL$. Thus, we obtain
\begin{fact}\
   \begin{itemize}
    \item $\ACz(\DTISPpll) \subseteq \LOGSPACE \cap \FOLL \cap \qACz$ and
    \item $\ACz(\NTISPpll) \subseteq \NL \cap \FOLL \cap \qACz$.
\end{itemize} 
\end{fact}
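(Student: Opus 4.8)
The plan is to factor each of the six claimed inclusions as a base-class containment followed by a closure property of the target class under $\ACz$-Turing reductions. Concretely, writing $\mathcal{B}$ for the base class ($\DTISPpll$ or $\NTISPpll$) and $\mathcal{D}$ for a target class, I would establish (i) $\mathcal{B} \subseteq \mathcal{D}$ and (ii) $\ACz(\mathcal{D}) \subseteq \mathcal{D}$; together these give $\ACz(\mathcal{B}) \subseteq \ACz(\mathcal{D}) \subseteq \mathcal{D}$. All of the base containments (i) are already in hand: $\DTISPpll \subseteq \NTISPpll$ trivially (a deterministic machine is a nondeterministic one making no guesses), $\DTISPpll \subseteq \LOGSPACE$ and $\NTISPpll \subseteq \NL$ were noted just above, $\NTISPpll \subseteq \FOLL$ is \Lem{lem:DTISPFoll}, and $\NTISPpll \subseteq \qACz$ is Fact~\ref{fact:NTISP}. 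So the work lies entirely in the closure properties (ii), and these split into an easy ``circuit-substitution'' group ($\FOLL$ and $\qACz$) and a ``Turing-machine-composition'' group ($\LOGSPACE$ and $\NL$).

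For the circuit-substitution cases I would take an $\ACz$ circuit with oracle gates for a problem in $\mathcal{D}$ and replace each oracle gate (of fan-in $m \le \poly(n)$) by the circuit realizing that oracle on $m$-bit inputs, then bound the depth and size of the result. For $\mathcal{D} = \FOLL$, each oracle gate becomes a depth-$O(\log\log m) = O(\log\log n)$, polynomial-size circuit (\Lem{lem:DTISPFoll}); stacking these under a constant-depth top circuit keeps the depth $O(\log\log n)$ and the size polynomial, so $\ACz(\FOLL) \subseteq \FOLL$. For $\mathcal{D} = \qACz$, Fact~\ref{fact:NTISP} lets me replace each oracle gate by a depth-$2$ circuit of quasipolynomial size $2^{\mathrm{polylog}(m)} = 2^{\mathrm{polylog}(n)}$; multiplying by the polynomially many gates of the top circuit keeps the size quasipolynomial and the depth constant, so $\ACz(\qACz) \subseteq \qACz$. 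Both substitutions are routine once the depth/size arithmetic is checked.

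The composition cases are where the real content lies. Since $\ACz \subseteq \LOGSPACE$, any $\ACz$-Turing reduction can be simulated by a logspace oracle machine that evaluates the constant-depth circuit by depth-first traversal (recomputing subcircuit values on demand) and queries the oracle upon reaching an oracle gate; hence $\ACz(\LOGSPACE) \subseteq \LOGSPACE^{\LOGSPACE} = \LOGSPACE$ and $\ACz(\NL) \subseteq \LOGSPACE^{\NL}$. The clean case $\LOGSPACE^{\LOGSPACE} = \LOGSPACE$ is the standard composition of logspace computations. The main obstacle is $\LOGSPACE^{\NL} = \NL$: here adaptive queries to $\NL$ may be used \emph{negatively} by the circuit, so I would invoke the Immerman--Szelepcs\'enyi theorem ($\NL = \co\NL$), which makes both a query and its complement $\NL$-queries and yields closure of $\NL$ under logspace Turing reductions. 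With that in place, $\ACz(\NL) \subseteq \NL$, completing the two remaining inclusions. I expect the only genuinely nontrivial point in the whole argument to be this appeal to $\NL = \co\NL$ (together with the standard but slightly delicate space-bounded oracle model); everything else is bookkeeping on depth and size.
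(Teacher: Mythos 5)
Your proposal is correct and matches the paper's (largely implicit) argument: the paper simply records the base containments $\DTISPpll \subseteq \LOGSPACE$, $\NTISPpll \subseteq \NL$, $\NTISPpll \subseteq \FOLL$ (Lemma~\ref{lem:DTISPFoll}), and $\NTISPpll \subseteq \qACz$ (Fact~\ref{fact:NTISP}), and leaves the closure of each target class under $\ACz$-Turing reductions as standard, which is exactly what you verify by circuit substitution for $\FOLL$ and $\qACz$ and by logspace oracle composition (with $\NL=\co\NL$ for the nondeterministic case). Your identification of Immerman--Szelepcs\'enyi as the one nontrivial ingredient is accurate; the paper does not spell this out, but your filling-in is the intended route.
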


\paragraph{Disjunctive truth-table reductions.} We finally recall the notion of a disjunctive truth-table reduction. Again let $L_{1}, L_{2}$ be languages. We say that $L_{1}$ is \textit{disjunctive truth-table (dtt)} reducible to $L_{2}$, denoted $L_{1} \leq_{dtt} L_{2}$, if there exists a function $g$ mapping a string $x$ to a tuple of strings $(y_{1}, \ldots, y_{k})$ such that $x \in L_1$ if and only if there is some $i \in \{1,\dots, k\}$ such that $y_i \in L_2$. When $g$ is computable in a complexity class $\mathcal{C}$, we call this a $\mathcal{C}$-dtt reduction, and write $L_1 \leq_{dtt}^\mathcal{C} L_2$. For any class $\mathcal{C}$, $\mathcal{C}$-dtt reductions are intermediate in strength between $\mathcal{C}$-many-one reductions and $\mathcal{C}$-Turing reductions. 

\begin{fact} \label{fact:dtt}
$\betacc{2} \ACz(\mathcal{C})$ is closed under $\leq_{dtt}^\ACz$ reductions for any class $\mathcal{C}$.
\end{fact}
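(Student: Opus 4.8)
The plan is to unfold both the dtt reduction and the bounded-nondeterminism definition into a single existential quantifier, and then check that the resulting verification predicate lives in $\ACz(\mathcal{C})$ while the combined witness stays within the $O(\log^2 n)$ budget. Concretely, suppose $L_2 \in \betacc{2}\ACz(\mathcal{C})$, witnessed by some $L_2' \in \ACz(\mathcal{C})$ with $y \in L_2 \iff \exists z\ (|z| \leq c\log^2|y|)\ (y,z) \in L_2'$, and suppose $L_1 \leq_{dtt}^{\ACz} L_2$ via an $\ACz$-computable $g$ with $g(x) = (y_1,\dots,y_k)$ and $x \in L_1 \iff \exists i\ y_i \in L_2$. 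Chaining these equivalences gives
\[
x \in L_1 \iff \exists i\ \exists z\ \big(\, (y_i, z) \in L_2' \,\big),
\]
where $y_i$ denotes the $i$-th coordinate of $g(x)$.

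First I would verify that the guess $(i,z)$ fits the $\betacc{2}$ length bound. Because $g$ is $\ACz$-computable, its circuits have polynomial size, so the tuple $g(x)$ has total length $\poly(|x|)$; hence $k \leq \poly(|x|)$ and each $|y_i| \leq \poly(|x|)$. Thus the index $i$ requires only $O(\log|x|)$ bits, and since $|y_i| \leq \poly(|x|)$ we have $|z| \leq c\log^2|y_i| \leq O(\log^2|x|)$. The combined witness $(i,z)$ therefore has length $O(\log|x|) + O(\log^2|x|) = O(\log^2|x|)$, exactly what $\betacc{2}$ permits.

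Next I would show that the verification predicate $V(x,(i,z)) := \big[(g(x)_i, z) \in L_2'\big]$ is in $\ACz(\mathcal{C})$. Given $x$ and the index $i$, an $\ACz$ circuit computes the full output $g(x)$ and selects its $i$-th coordinate $y_i$ (indexing into a polynomial-size, suitably delimited output string by an $O(\log n)$-bit index is a standard $\ACz$ operation); it then feeds $(y_i, z)$ into the $\ACz(\mathcal{C})$ circuit deciding $L_2'$. Since $\ACz$-reductions compose and $\ACz(\ACz(\mathcal{C})) = \ACz(\mathcal{C})$, the predicate $V$ lies in $\ACz(\mathcal{C})$. Combining this with the length bound yields $x \in L_1 \iff \exists w\ (|w| \leq O(\log^2|x|))\ V(x,w)$, which is precisely the statement that $L_1 \in \betacc{2}\ACz(\mathcal{C})$.

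I expect no single step to be a serious obstacle; the two points that demand care are (i) confirming that bundling the dtt-index $i$ together with the nondeterministic witness $z$ keeps the total guess within the $O(\log^2 n)$ bound---this hinges on the polynomial size of $\ACz$ circuits simultaneously bounding both $k$ and $|y_i|$, so that $\log^2|y_i| = O(\log^2|x|)$---and (ii) the bookkeeping that extracting the $i$-th coordinate of $g(x)$ and composing with the $\ACz(\mathcal{C})$-oracle circuit for $L_2'$ stays inside $\ACz(\mathcal{C})$ without incurring extra depth or additional nondeterminism.
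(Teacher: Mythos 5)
Your proposal is correct and follows essentially the same route as the paper's proof: guess the index $i$ (only $O(\log|x|)$ bits since $g$ has polynomial-size output) together with the original witness $z$, and observe that $(x,i,z)\mapsto V(g(x)_i,z)$ stays in $\ACz(\mathcal{C})$ because the oracle class is closed under $\ACz$ reductions. Your added remark that $|y_i|\leq\poly(|x|)$ forces $\log^2|y_i|=O(\log^2|x|)$ is a small point of care the paper leaves implicit, but it does not change the argument.
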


\begin{proof}
Let $L \in \betacc{2}\ACz(\mathcal{C})$, and suppose $L' \leq_{dtt}^\ACz L$ via $g$. We show that $L'$ is also in $\betacc{2}\ACz(\mathcal{C})$. Let $V \in \ACz(\mathcal{C})$ be a predicate such that $x \in L \iff [\exists^{\log^2 |x|} w] \,V(x,w)$ for all strings $x$.
Then we have $x \in L'$ iff $\exists\, i \in \{1, \dots, k\}, \exists\, w \in \{0,1\}^{\log^2 n}: V(g(x)_i, w)$, where $g(x)_i$ denotes the $i$-th string output by $g(x)=(y_1,\dotsc,y_k)$. Since $g$ is, in particular, polynomial size, we have $k \leq \poly(|x|)$, so we have the bit-length of the index $i$ is at most $O(\log|x|)$. Finally, since $\ACz(\mathcal{C})$ denotes the oracle class, it is closed under $\ACz$ reductions, and thus the function $(x,w,i) \mapsto V(g(x)_i, w)$ is in $\ACz(\mathcal{C})$, and therefore $L' \in \betacc{2}\ACz(\mathcal{C})$.
\end{proof}

\section{Order Finding and Applications}

In this section, we consider the parallel complexity of order finding. We begin with the following lemma.

\begin{lemma}\label{lem:fastexp}
    The following function is in $\DTISPpll$ (NB: Definition/Lemma~\ref{def:DTISP}):
    On input of a multiplication table of a semigroup $S$, an element $s \in S$, and a unary or binary number $k \in \N$ with $k \leq \abs{S}$, compute $s^k$. 
\end{lemma}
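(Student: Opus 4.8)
The plan is to compute $s^k$ by \emph{repeated squaring}, observing that this textbook procedure can be carried out by a random-access Turing machine within the stated budget. By Definition/Lemma~\ref{def:DTISP} it suffices to exhibit a deterministic machine that uses $\poly(\log n)$ time and $O(\log n)$ space and halts with (the index of) $s^k$ written on its work tape. Writing $n=\abs{S}$, note that the input (the table together with $s$ and $k$) has length $\Theta(n^2 \log n)$, so $\log$ and $\poly(\log)$ of the input length agree with $\log n$ and $\poly(\log n)$ up to constants; in particular an element of $S$ is named by an $O(\log n)$-bit index, and a single semigroup multiplication is one lookup into the multiplication table.

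The core of the machine is a loop over the $O(\log k)\le O(\log n)$ bits of $k$ in binary. To avoid relying on an identity element---which a semigroup need not possess, so that $s^0$ may be undefined---I would process the bits from most significant to least significant in Horner fashion: read $s$ once into an $O(\log n)$-bit register, initialize an accumulator $r \gets s$ at the leading (necessarily $1$) bit, and for each subsequent bit $b$ set $r \gets r\cdot r$ and then, if $b=1$, set $r \gets r\cdot s$. Each update is a single table lookup: given indices $i,j$, the machine writes the address of the entry $M[i+(j-1)n]$ onto its index tape and reads the stored product. Computing that address needs only addition and one multiplication of $O(\log n)$-bit numbers, which a deterministic machine performs in $\poly(\log n)$ time and $O(\log n)$ space. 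Since there are $O(\log n)$ iterations, each costing $\poly(\log n)$ time, and the only persistent storage is $r$, the register holding $s$, the current bit position, and scratch for the address arithmetic---all $O(\log n)$ bits---the whole loop runs in $\poly(\log n)$ time and $O(\log n)$ space, leaving $s^k$ on the work tape.

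The one point requiring genuine care is the encoding of $k$. When $k$ is given in binary it is an $O(\log n)$-bit string the machine reads directly. When $k$ is given in \emph{unary}, however, its encoding can have length up to $n$, so we cannot afford to scan it symbol-by-symbol, as that alone would cost linear time. Instead I would recover the \emph{value} of $k$ by binary searching the input tape (using the random-access index tape) for the boundary of the unary block, or equivalently for the end-of-input marker, and subtracting the fixed offset at which $k$ begins; this determines the binary representation of $k$ with $O(\log n)$ queries, each in $O(\log n)$ time and space. This conversion is the only delicate step, and I expect it to be the main obstacle, since everything else is the standard repeated-squaring analysis; the lack of associativity-independent shortcuts is irrelevant here because we only ever multiply along a fixed left-to-right bracketing. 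With $k$ in binary in hand, the loop above yields $s^k$, establishing that the function lies in $\DTISPpll$.
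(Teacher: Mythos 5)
Your proposal is correct and follows essentially the same route as the paper: standard fast exponentiation with $O(\log k)$ table lookups (each lookup requiring only $O(\log n)$-bit address arithmetic), plus a binary search to convert a unary $k$ to binary. The paper states these two steps more tersely; your extra care about the missing identity element (processing bits MSB-first) is a sensible elaboration, though note that correctness of repeated squaring here rests on associativity of the semigroup rather than on any fixed bracketing.
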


Note that, because the multiplication table is part of the input, the input size $n$ is always at least $|S|$.

\begin{proof}
    If $k$ is given in unary, we first compute its binary representation using a binary search (note that we can write it on the work tape as is uses at most $\ceil{\log\abs{S}}$ bits). We identify the semigroup elements with the natural numbers $0, \dots, \abs{S}-1$.
    Now, we compute $s^k$ using the standard fast exponentiation algorithm. Note that the multiplication of two semigroup elements can be done in $\DTIME(\log n)$ as we only need to write down two $\log n$ bit numbers on the address tape (if the multiplication table is not padded up to a power of two, this is still in $\DTISPpll$ because we need to multiply two $\log n$ bit numbers to compute the index in the multiplication table). 

    It is well-known that the fast exponentiation algorithm needs only $\Oh(\log k)$ elementary multiplications and $O(\log k + \log n)$ space; hence, the lemma follows.
\end{proof}

Note that \Lem{lem:fastexp} together with \Lem{lem:DTISPFoll} gives a new proof that the problem of computing a power $s^k$ in a semigroup can be done in \FOLL. This approach seems easier and more general than the double-barrelled recursive approach in \cite{BKLM}.

\Lem{lem:fastexp} also yields the following immediate corollary:

\begin{corollary} \label{cor:fastexp}\label{lem:order_finding}
On input of a multiplication table of a group $G$, an element $g \in G$, and $k \in \N$, we may decide whether  $\ord{g} = k$ in $\alphacc{1}\DTISPpll$. The same applies if, instead of $k$, another group element $h\in G$ is given with the question whether $\ord{g} = \ord{h}$. In particular, both questions can be decided by quasi-polynomial-size CNFs (a particular case of depth-2 $\qACz$).
\end{corollary}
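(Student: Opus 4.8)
The plan is to reduce both decision questions to order finding via the exponentiation routine of \Lem{lem:fastexp}, and then observe that the resulting predicate has a $\forall^{\log n}$ structure computable in $\DTISPpll$. Recall that for a group element $g$, we have $\ord{g} = k$ if and only if $g^k = e$ (where $e$ is the identity) \emph{and} $g^j \neq e$ for every $j$ with $1 \leq j < k$. The naive way to check the second condition requires testing all $j < k$, but the standard trick is to use the fact that $\ord{g} = k$ iff $g^k = e$ and $g^{k/p} \neq e$ for every prime $p$ dividing $k$; since $k \leq |G| \leq n$, there are only $O(\log n)$ such prime divisors, each representable in $O(\log n)$ bits.

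First I would, on input $(G, g, k)$, identify the identity element of $G$ (which in $\DTISPpll$ can be done, e.g., by finding the unique $e$ with $g \cdot e = g$, reading $O(\log n)$-bit table entries) and compute $g^k$ using \Lem{lem:fastexp}, checking whether it equals $e$. Next, using \Lem{lem:factor}, I would enumerate the prime divisors $p$ of $k$; for each such $p$ I compute $g^{k/p}$ via \Lem{lem:fastexp} and verify it is \emph{not} the identity. The predicate ``$\ord{g} = k$'' then becomes: $g^k = e$, and for all primes $p \mid k$, $g^{k/p} \neq e$. Since the number of candidate primes $p \in \{2, \dots, k\}$ is at most $n$ and each is an $O(\log n)$-bit number, the universal quantifier over $p$ ranges over $O(\log n)$-bit objects, giving the $\alphacc{1}$ prefix. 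The inner verification—computing $g^{k/p}$, comparing to $e$, and checking $p \mid k$—is a composition of $\DTISPpll$ operations together with the $\ACz$-computable arithmetic on $O(\log n)$-bit numbers (\Thm{lem:multiplication}, \Lem{lem:factor}), hence itself in $\DTISPpll$. This yields the claimed $\alphacc{1}\DTISPpll$ bound.

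For the second question, deciding whether $\ord{g} = \ord{h}$, I would note that two orders are equal iff $g$ and $h$ have the same set of ``kill exponents'': concretely, $\ord g = \ord h$ iff $g^j = e \Leftrightarrow h^j = e$ for all $j$, but this is wasteful. A cleaner route is to observe $\ord g = \ord h$ iff both $g^{\ord h} = e$ and $h^{\ord g} = e$, i.e.\ each order divides the other; equivalently, for the universally-quantified formulation, $\ord g = \ord h$ iff for all $j \in \{1, \dots, |G|\}$ we have $g^j = e \iff h^j = e$ (an ``iff'' inside a $\forall^{\log n}$ quantifier, still a co-nondeterministic predicate). Since $j \leq |G| \leq n$ is an $O(\log n)$-bit number and the inner test computes $g^j, h^j$ by \Lem{lem:fastexp} and compares each to the identity, the whole predicate is again in $\alphacc{1}\DTISPpll$.

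The main obstacle—really the only subtlety—is confirming that the universal quantifier over the primes (or over $j$) genuinely ranges over objects of length $O(\log n)$ rather than $O(\log^2 n)$, so that we land in $\alphacc{1}$ and not $\alphacc{2}$; this is where the bound $k \leq |S| \leq n$ from the remark after \Lem{lem:fastexp} is essential, since it guarantees each quantified exponent fits in $O(\log n)$ bits. Given this, the final ``in particular'' claim follows immediately: by \Fact{fact:NTISP}, any $\DTISPpll$ predicate is computable by a quasi-polynomial-size CNF, and prepending a $\forall^{\log n}$ quantifier (a conjunction over quasi-polynomially many $O(\log n)$-bit settings) preserves the CNF structure, placing both questions in depth-$2$ $\qACz$.
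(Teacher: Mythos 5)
Your overall architecture matches the paper's: a deterministic $\DTISPpll$ check that $g^k$ is the identity via \Lem{lem:fastexp}, a single block of $O(\log n)$ universally quantified bits for the remaining condition, and a merge of the resulting $\poly(n)$-fan-in AND into the ANDs at the top of the CNFs supplied by Fact~\ref{fact:NTISP}. Your treatment of $\ord{g}=\ord{h}$ (for all $j\le\abs{G}$, $g^j=e\iff h^j=e$) is exactly the paper's. For $\ord{g}=k$, however, the paper simply quantifies universally over all $1\le i<k$ and checks $g^i\neq 1$ with \Lem{lem:fastexp}; since $i<k\le\abs{G}\le n$, this index already fits in $O(\log n)$ bits, so the ``naive'' test you set aside suffices and is precisely what the paper does. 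Your detour through the prime divisors of $k$ is mathematically valid but introduces two steps that fail as written. First, finding the identity ``as the unique $e$ with $g\cdot e=g$'' requires scanning all $n$ entries of the row of $g$, which costs $\Omega(n)$ time and so is not in $\DTISPpll$; repair this by computing $e=g^{\abs{G}}$ with \Lem{lem:fastexp}, or avoid the identity altogether by testing $g^k=e$ as $g^k\cdot g=g$ (or ``$x=e$'' as the idempotency $x\cdot x=x$), a single table lookup. Second, your inner verifier must decide whether the quantified witness $p$ is a prime dividing $k$, and you justify this via \Lem{lem:factor}; but that is an $\ACz$ bound whose proof runs $\Theta(n)$ trial divisions in parallel, and $\ACz$ is not contained in $\DTISPpll$, so the inference ``a composition of $\DTISPpll$ operations with $\ACz$-computable arithmetic is in $\DTISPpll$'' does not go through. (Deterministic primality testing of $O(\log n)$-bit numbers is in $\mathrm{polylog}(n)$ time, but only via nontrivial machinery such as AKS.) None of this is needed: quantifying over all exponents $i<k$, or over all divisors $d$ of $k$ (divisibility being one polylog-time division), eliminates the primality check entirely.

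With either repair the argument is correct and lands in $\alphacc{1}\DTISPpll$, and your final step---absorbing the AND gate arising from $\alphacc{1}$ into the top-level ANDs of the quasi-polynomial-size CNFs for the $\DTISPpll$ predicates---is the same merge the paper performs to obtain depth-$2$ $\qACz$.
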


\begin{proof}
First check whether $g^k = 1$ using \Lem{lem:fastexp}. If yes (and $k > 1$), we use $O(\log n)$ universally quantified co-nondeterministic bits to verify that for all $1 \leq i < k$ that $g^{i} \neq 1$ using again  \Lem{lem:fastexp}. For the second form of input observe that $\ord{g} = \ord{h}$ if and only if for all $i \leq \abs{G}$ we have $g^i = 1$ if and only if $h^i=1$. 

For the depth-2 upper bound: we can compute the decision problems $(g,i) \mapsto [g^i = 1]$, $(g,i) \mapsto [g^i \neq 1]$, and $(g,h,i) \mapsto [g^i=1 \leftrightarrow h^i=1]$ in $\DTISPpll$, which in turn can be decided by quasi-polynomial-size CNFs. Implementing the above strategy, the $\alphacc{1}$ becomes an AND gate of fan-in $\poly(n)$. That can be merged with the AND gates at the top of the aforementioned CNFs to get a single CNF of quasi-polynomial size. \qedhere
\end{proof}

\subsection{Application to isomorphism testing}
Using \Cor{cor:fastexp}, we can improve the upper bound for isomorphism testing of finite simple groups. Previously, this problem was known to be in $\LogSpace$ \cite{TangThesis} and $\mathsf{FOLL}$ \cite{GrochowLevetWL}. We obtain the following improved bound.

\begin{corollary} \label{cor:simple}
Let $G$ be a finite simple group and $H$ be arbitrary. We can decide isomorphism between $G$ and $H$ in $\ACz(\DTISPpll)$ 
\end{corollary}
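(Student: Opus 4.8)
The plan is to decide isomorphism between a finite simple group $G$ and an arbitrary group $H$ by exploiting the rigidity of simple groups together with the order-finding tool from Corollary~\ref{cor:fastexp}. The key structural fact is that a nonabelian finite simple group is generated by two elements, and in fact by very small generating sets of a prescribed shape; more usefully for low-depth purposes, finite simple groups are determined up to isomorphism by their order, with the sole exception of the two pairs $\PSL(3,4)$ vs.\ $\PSL(4,2)$ and the orthogonal groups $B_n(q)$ vs.\ $C_n(q)$ for odd $q$. Since the order $|G|$ is simply the side length $n$ of the multiplication table, in nearly all cases isomorphism reduces to checking $|G| = |H|$ and confirming that $H$ is simple of the same isomorphism type, which one can pin down by computing a small amount of additional order data.

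First I would handle the abelian case separately: a simple abelian group is cyclic of prime order $p$, so $G \cong H$ iff $|H| = p$ and $H$ is cyclic, i.e.\ $H$ has an element of order $p = |H|$; this is a single application of Corollary~\ref{cor:fastexp} (order finding), which lives in $\alphacc{1}\DTISPpll \subseteq \ACz(\DTISPpll)$. For the nonabelian case, the main idea is to use the \emph{order sequence} or \emph{order statistics} of the group---for each candidate order $k$ dividing $|G|$, the number of elements of order $k$---as an isomorphism-distinguishing invariant. Using Corollary~\ref{cor:fastexp}, the predicate $(g,k)\mapsto[\ord{g}=k]$ is decidable in $\alphacc{1}\DTISPpll$, and summing (counting) over all $g$ and comparing to the corresponding count in $G$ is an $\ACz$ operation on top of these oracle calls, since counting among $n$ Boolean values and comparing $O(\log n)$-bit integers is in $\ACz$ (indeed we may count via threshold/majority or, since only $\log n$-bit quantities arise, via Theorem~\ref{lem:multiplication}-style arithmetic). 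This places the whole invariant computation in $\ACz(\DTISPpll)$.

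To make this a correct isomorphism test I would combine it with the classification-based fact that among finite simple groups the pair $(|G|, \text{order statistics})$, or at worst $(|G|, \text{a bounded number of further invariants})$, determines the isomorphism type---this is where I would invoke the known result (as used in the FOLL bound of \cite{GrochowLevetWL} and the $\LogSpace$ bound of \cite{TangThesis}) that simple groups are identifiable from easily-computable order data. Concretely: first verify $|H| = |G|$; then verify $H$ is simple and nonabelian, and finally compare the order statistics of $H$ against those of $G$, accepting iff they match (with a hard-coded tie-break for the finitely many same-order exceptional pairs, using one more order-type invariant that distinguishes them). Each step is an $\ACz$ combination of $\DTISPpll$-oracle queries, so the composite is in $\ACz(\DTISPpll)$.

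The main obstacle I anticipate is twofold. First, one must be careful that verifying $H$ is itself \emph{simple} does not secretly require higher complexity: rather than testing simplicity directly (which is not obviously in $\ACz(\DTISPpll)$), the cleaner route is to observe that we need not certify simplicity of $H$ at all---it suffices that the order-statistics invariant, together with $|H|=|G|$, \emph{uniquely} characterizes $G$ among \emph{all} groups, or at least that no non-simple $H$ can share both $|G|$ and the full order-profile of a simple $G$; establishing or citing the exact invariant that guarantees this is the delicate point. Second, resolving the genuinely isospectral exceptional pairs (such as $B_n(q)$ vs.\ $C_n(q)$ for odd $q$, which share the same order \emph{and} the same element-order statistics) requires a finer invariant---e.g.\ counting solutions to a fixed word equation or comparing the multiset of orders of products of pairs of involutions---and verifying that this finer invariant is still computable within $\ACz(\DTISPpll)$. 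I expect the bulk of the argument to be an appeal to the classification literature for the existence of such a bounded, low-complexity complete invariant, with the circuit-complexity bookkeeping being routine once the invariant is fixed.
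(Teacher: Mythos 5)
Your high-level plan---reduce isomorphism to comparing $|G|$ with $|H|$ and comparing element-order data computed via \Cor{cor:fastexp}---is the same as the paper's, but the specific invariant you choose breaks the complexity bound. You propose the \emph{order statistics} (for each $k$, the \emph{number} of elements of order $k$) and assert that ``counting among $n$ Boolean values and comparing $O(\log n)$-bit integers is in $\ACz$.'' That assertion is false: summing $n$ bits (exact counting, or even its threshold version) is precisely the \algprobm{Majority}/counting problem, which is not in $\ACz$; nor can the count be pushed into the $\DTISPpll$ oracle, since iterating over all $n$ group elements takes time $n\cdot\mathrm{polylog}(n)$, far exceeding the polylogarithmic time bound. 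Theorem~\ref{lem:multiplication} concerns products of polylogarithmically many polylog-bit integers and does not yield a sum of $n$ bits. The same objection applies to your proposed tie-breaking invariants (counting solutions to word equations, multisets of orders of products of pairs of involutions), which again require counting over $n$ or $n^2$ items. The paper sidesteps this entirely by using the \emph{set} of element orders, $\mathrm{spec}(G)=\{\ord{g} : g\in G\}$: equality of $\mathrm{spec}(G)$ and $\mathrm{spec}(H)$ is a conjunction over $k\leq n$ of biconditionals between two existentially quantified (OR-gate) queries to the $\DTISPpll$-decidable predicate $[\ord{g}=k]$, which is a bona fide $\ACz$ circuit over oracle gates.

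The other delicate points you flag---whether one must certify simplicity of $H$, and how to handle same-order pairs such as $B_n(q)$ versus $C_n(q)$---are resolved by the theorem the paper cites \cite{SimpleOrder}: a finite simple group is determined up to isomorphism, \emph{among all finite groups}, by its order together with its spectrum. Hence no simplicity test for $H$ is needed, and the same-order exceptional pairs necessarily differ in spectrum, so no hard-coded tie-breaks or finer invariants are required. With that citation in hand and the set-valued invariant in place of the multiset, your argument collapses to the paper's short proof.
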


\begin{proof}
 As $G$ is a finite simple group, $G$ is determined up to isomorphism by (i) $|G|$, and (ii) the set of orders of elements of $G$: $\text{spec}(G) := \{ \ord{g} : g \in G \}$ \cite{SimpleOrder}. We may check whether $|G| = |H|$ in $\AC^{0}$. By \Cor{cor:fastexp}, we may compute and compare $\text{spec}(G)$ and $\text{spec}(H)$ in $\ACz(\DTISPpll)$. \qedhere 
\end{proof}

In light of \Cor{cor:fastexp}, we obtain an improved bound for testing whether a group $G$ is nilpotent. Testing for nilpotency was previously known to be in  $\LogSpace \cap \mathsf{FOLL}$ \cite{BKLM}.

\begin{corollary}
Let $G$ be a finite group given by its multiplication table. We may decide whether $G$ is nilpotent in $\ACz(\DTISPpll)$.        
\end{corollary}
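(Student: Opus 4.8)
The plan is to reduce nilpotency testing to a conjunction of single exponentiations in $G$, exploiting a sufficiently \emph{local} characterization of nilpotency. Recall the classical fact (see, e.g., \cite{Robinson1982}) that a finite group is nilpotent if and only if it is the internal direct product of its Sylow subgroups. For a prime $p$, let $S_p = \{ g \in G : \ord{g} \text{ is a power of } p\}$ denote the set of $p$-elements. I would use the reformulation that $G$ is nilpotent if and only if $S_p$ is a subgroup for every prime $p \mid \abs{G}$. The forward direction is immediate: in $G \cong \prod_i P_{p_i}$, the set $S_{p_i}$ is exactly the Sylow subgroup $P_{p_i}$. For the converse, if $S_p$ is a subgroup then by Cauchy's theorem it is a $p$-group, and since it contains every element of $p$-power order it contains every Sylow $p$-subgroup; comparing orders, $S_p$ \emph{is} the unique Sylow $p$-subgroup and hence normal. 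As this holds for all $p$, every Sylow subgroup is normal and $G$ is nilpotent. Finally, since $S_p$ is finite and nonempty, it is a subgroup if and only if it is closed under multiplication, so it suffices to test closure.

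Next I would make each ingredient effective. Writing $n = \abs{G}$, I first factor $n$: by \Lem{lem:factor} the (at most $\Oh(\log n)$) primes $p \mid n$ are computable in $\ACz$, and the prime-power parts $p^{v_p(n)} \le n$ are obtained by $\ACz$ arithmetic on $\Oh(\log n)$-bit numbers (\Thm{lem:multiplication}). The crucial observation is that, because $\ord{g}$ divides $n$, we have $g \in S_p$ if and only if $g^{p^{v_p(n)}} = 1$; thus membership in $S_p$ is decided by a \emph{single} power computation, which lies in $\DTISPpll$ by \Lem{lem:fastexp} (the exponent is at most $n = \abs{G}$), followed by a comparison against the identity element (itself locatable in $\DTISPpll$). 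The product $gh$ needed for the closure test is just a table lookup.

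Assembling these pieces, $G$ is nilpotent if and only if
\[
\bigwedge_{\substack{p \text{ prime}\\ p \mid n}} \ \bigwedge_{g,h \in G} \Big[(g \in S_p \wedge h \in S_p) \Rightarrow gh \in S_p\Big].
\]
The outer conjunction ranges over $\Oh(\log n)$ primes and the inner over the $n^2$ pairs, and each leaf is a fixed Boolean combination of three $\DTISPpll$-predicates (``$g \in S_p$'', ``$h \in S_p$'', ``$gh \in S_p$''). Hence the entire test is computed by a constant-depth, polynomial-size circuit whose oracle gates query problems in $\DTISPpll$, placing it in $\ACz(\DTISPpll)$ (which, by the earlier Fact, is contained in $\FOLL \cap \LogSpace \cap \qACz$, recovering and improving the $\LogSpace \cap \FOLL$ bound of \cite{BKLM}). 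The main obstacle, and the crux of the argument, is the choice of characterization: a test phrased via chief series or explicit Sylow subgroups would require global computation beyond constant depth, whereas the ``$S_p$ closed under multiplication'' formulation turns nilpotency into independent, pair-local exponentiation queries that \Lem{lem:fastexp} answers within $\DTISPpll$.
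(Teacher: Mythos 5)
Your proposal is correct and follows essentially the same route as the paper: both characterize nilpotency by checking, for each prime $p \mid |G|$ (found via Lemma~\ref{lem:factor}), that the set of $p$-power-order elements forms a subgroup, with membership in that set decided by exponentiation in $\DTISPpll$ and the closure/axiom check done by a constant-depth polynomial-size circuit over these oracle queries. The only (harmless) difference is that you test $g^{p^{v_p(n)}}=1$ directly via Lemma~\ref{lem:fastexp} rather than invoking the order-finding Corollary~\ref{cor:fastexp}, and you spell out the group-theoretic equivalence in more detail than the paper does.
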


\begin{proof}
A finite group $G$ is nilpotent if and only if it is the direct product of its Sylow subgroups. As $G$ is given by its multiplication table, we may in $\ACz$ compute the prime factors of $|G|$ (Lemma~\ref{lem:factor}).  Thus, for each prime $p$ dividing $|G|$, we identify the elements $X_{p} := \{ g : \ord{g} \text{ is a power of } p \}$ and then test whether $X_{p}$ forms a group. By \Cor{cor:fastexp}, we can identify $X_{p}$ in $\ACz(\DTISPpll)$. Verifying the group axioms is $\ACz$-computable. The result follows.
\end{proof}

\subsection{Application to membership testing}
A group $G$ has the \emph{$\log n$ power basis property} (as defined in \cite{BKLM,Fleischer}) if for every subset $X \subseteq G$ every $g \in \langle X \rangle$ can be written as $g = g_1^{e_1} \cdots g_m^{e_m}$ with $m \leq \log n$ and suitable $g_i \in X$ and $e_i \in \Z$.

\begin{observation} \label{obs:logpowerbasis}
    Membership testing for semigroups with the $\log n$ power basis property is in $\NTISPpll$. 
\end{observation}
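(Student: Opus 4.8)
The plan is to exploit the short representation guaranteed by the power basis property together with the fast exponentiation of \Lem{lem:fastexp}. Concretely, given a semigroup $S$ (by its multiplication table), a subset $X \subseteq S$, and a target $x \in S$, the property says that $x \in \langle X \rangle$ if and only if $x = g_1^{e_1} \cdots g_m^{e_m}$ for some $m \leq \log n$, some $g_i \in X$, and some exponents $e_i$. Since the distinct powers $s, s^2, s^3, \ldots$ of any semigroup element become periodic within $\abs{S} \leq n$ steps, we may assume the exponents are positive and satisfy $e_i \leq \abs{S}$, so each $e_i$ fits in $O(\log n)$ bits, as does each index $g_i$. (Associativity means the parenthesization of the product is irrelevant here, which is what makes the semigroup case cleaner than the general quasigroup case.)

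First I would have the machine guess the witness \emph{on the fly} rather than all at once, in order to control the space. It maintains a running product $p$, initialized to the first guessed factor $g_1^{e_1}$, together with a counter $i$ (needing only $O(\log\log n)$ bits since $m \leq \log n$). At each of the at most $\log n$ steps it nondeterministically guesses the next index $g_i$ and exponent $e_i$, verifies $g_i \in X$ by a single query to the input, computes $g_i^{e_i}$ via \Lem{lem:fastexp}, and updates $p \leftarrow p \cdot g_i^{e_i}$ using one multiplication-table lookup. After the final factor it accepts if and only if $p = x$.

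For the resource bounds: each factor $g_i^{e_i}$ is computed in $\DTISPpll$, and each update is a table lookup costing $O(\log n)$ time, so with at most $\log n$ factors the total time is $\poly(\log n)$. For space, the crucial point is that the work space used by the fast-exponentiation subroutine may be reused from one factor to the next, so the machine only ever stores $p$ (an element of $S$, hence $O(\log n)$ bits), the counter $i$, and the current guessed pair $(g_i, e_i)$, for $O(\log n)$ space overall. Hence the whole computation lies in $\NTISPpll$.

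Correctness is then immediate from the definition: completeness holds because the power basis property supplies exactly such a representation of any $x \in \langle X \rangle$, and soundness holds because any accepting run exhibits $x$ as a product of powers of elements of $X$, which lies in $\langle X \rangle$ by closure. The main obstacle is not conceptual but lies in the simultaneous time-and-space accounting above, and in particular the reuse of the exponentiation subroutine's work space across the $\log n$ factors together with guessing the witness incrementally; this is precisely what keeps us inside $\NTISP$ rather than merely $\NTIME(\mathrm{polylog}(n))$ or $\NL$.
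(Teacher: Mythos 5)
Your proposal is correct and follows essentially the same route as the paper's proof: guess the factors $g_i$ and exponents $e_i$ sequentially, keep only the running product to stay within $O(\log n)$ space, and invoke \Lem{lem:fastexp} for each power. Your additional remarks on bounding the exponents by $\abs{S}$ via eventual periodicity and on reusing the exponentiation subroutine's work space are sensible elaborations of details the paper leaves implicit.
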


\begin{proof}
This follows by guessing suitable exponents and using \Lem{lem:fastexp} to compute the respective powers. In order to keep the space logarithmically bounded, we do not simply guess $g_1, \dotsc, g_m \in X$---which could take up to $\log^2 n$ bits to store---but rather guess the $g_i$ sequentially and only store the running product. That is, first guess $g_1$ and $e_1$ using $2 \log n$ bits, and compute $g_1^{e_1}$. Inductively suppose we have computed $g_1^{e_1} \dotsb g_i^{e_i}$. Note that we do not need to store the elements $g_1, \dotsc, g_i$ and exponents $e_1,\dotsc,e_i$, we only keep the product $g_1^{e_1} \dotsb g_i^{e_i}$, which takes $\log n$ bits. Then we guess $g_{i+1}$ and $e_{i+1}$, and continue.
\end{proof}

This observation allows us to improve several results from \cite{BKLM,Fleischer} from \FOLL to \\ \NTISPpll.

\begin{corollary} \label{cor:cayley_abelian}
\algprobm{Membership} for commutative semigroups is in $\NTISPpll$. 
\end{corollary}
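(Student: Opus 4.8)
The plan is to reduce to \Obs{obs:logpowerbasis}: I will show that every finite commutative semigroup $S$ of order $n$ has the $\log n$ power basis property, suitably reinterpreted for semigroups (so that the exponents $e_i$ are positive integers rather than arbitrary elements of $\Z$, and the $g_i$ may be taken pairwise distinct by collecting repeated bases via commutativity). Granting this, \algprobm{Membership} for commutative semigroups lands in $\NTISPpll$ immediately by \Obs{obs:logpowerbasis}, since that observation uses only the power-basis structure together with the fast exponentiation of \Lem{lem:fastexp}. Thus the entire content is the combinatorial claim that elements of $\langle X\rangle$ are expressible using few distinct bases with small exponents.

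To prove the power-basis claim, fix $X \subseteq S$ and $g \in \langle X \rangle$. By commutativity, any witnessing product can be collected into the form $g = y_1^{c_1} \cdots y_m^{c_m}$ with the $y_i \in X$ pairwise distinct and each $c_i \geq 1$; I would choose such an expression with $m$ \emph{minimal}. The key step is to bound $m = O(\log n)$. For each subset $I \subseteq \{1,\dots,m\}$ set $P_I = \prod_{i \in I} y_i^{c_i}$, interpreting $P_\emptyset$ as the adjoined identity of $S^1 = S \cup \{1\}$. I claim these $2^m$ elements are pairwise distinct. Indeed, if $P_I = P_J$ with $I \neq J$, then by symmetry we may assume $I \setminus J \neq \emptyset$; writing $g = P_I \cdot \prod_{i \notin I} y_i^{c_i}$ and substituting $P_J$ for $P_I$ rewrites $g$ as a product whose base set is $J \cup (\{1,\dots,m\} \setminus I)$, i.e. using only the bases in $\{1,\dots,m\} \setminus (I \setminus J)$, which is a \emph{proper} subset. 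This contradicts the minimality of $m$. Hence $2^m \leq |S^1| = n+1$, giving $m \leq \log_2(n+1) = O(\log n)$.

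It then remains to bound the exponents. Since $S$ is finite, the powers of any fixed $y_i$ are eventually periodic with index and period at most $n$, so each $c_i$ may be reduced (without changing the value $y_i^{c_i}$, and hence without changing $m$) to lie in $\{1,\dots,n\}$; each exponent is then storable in $O(\log n)$ bits. This establishes the $\log n$ power basis property for $S$, and the corollary follows from \Obs{obs:logpowerbasis}, which guesses the $O(\log n)$ bases and exponents sequentially and evaluates the running product in $\NTISPpll$ using \Lem{lem:fastexp}.

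The main obstacle is precisely the bound $m = O(\log n)$ on the number of distinct bases. For abelian groups this is the standard subgroup-doubling argument, $|\langle y_1,\dots,y_{j+1}\rangle| \geq 2\,|\langle y_1,\dots,y_j\rangle|$ via Lagrange's theorem; but finite commutative semigroups obey no such doubling law (adjoining an absorbing element, for instance, barely enlarges the generated subsemigroup), so that route is unavailable. The subset-product pigeonhole above is what replaces it, and the one place needing care is verifying distinctness of the $P_I$ in the non-nested case where neither $I \subseteq J$ nor $J \subseteq I$, which the symmetric reduction to ``$I \setminus J \neq \emptyset$'' handles uniformly.
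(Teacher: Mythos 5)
Your proposal is correct and follows the same overall route as the paper: establish the $\log n$ power basis property for commutative semigroups and then invoke \Obs{obs:logpowerbasis} together with \Lem{lem:fastexp}. The difference is that the paper's proof is a one-line citation of \cite[Lem.~4.2]{Fleischer} for the power basis property, whereas you supply a self-contained proof of that lemma. Your subset-product pigeonhole argument is sound: taking a representation $g = y_1^{c_1}\cdots y_m^{c_m}$ with $m$ (the number of distinct bases) minimal, the substitution $P_I \mapsto P_J$ does produce a representation whose base set is contained in $\{1,\dots,m\}\setminus(I\setminus J)$, a proper subset, so all $2^m$ subset products are distinct in $S^1$ and $m \leq \log_2(n+1)$; the reduction of each exponent into $\{1,\dots,n\}$ via eventual periodicity of powers is also fine and matches the hypothesis $k \leq |S|$ of \Lem{lem:fastexp}. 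You are also right to flag that the abelian-group subgroup-doubling argument is unavailable here and that the pigeonhole on subset products is what replaces it --- this is essentially the content of Fleischer's lemma. What your version buys is independence from the external reference (and an explicit statement of how the property must be reinterpreted for semigroups, with positive exponents and no inverses); what the paper's version buys is brevity. One cosmetic point: your bound is $m \leq \log_2(n+1)$ rather than $\log_2 n$, but since \Obs{obs:logpowerbasis} only needs $m \in O(\log n)$ this is immaterial.
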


\begin{proof} 
Commutative semigroups have the $\log n$ power basis property \cite[Lem.~4.2]{Fleischer}.
\end{proof}

\begin{corollary} 
Let $d$ be a constant. 
\algprobm{Membership} for solvable groups of class bounded by $d$ is in $\NTISPpll$. 
\end{corollary}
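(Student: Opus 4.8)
The plan is to reduce \algprobm{Membership} for solvable groups of bounded derived length to \Obs{obs:logpowerbasis} by showing that such groups have the $\log n$ power basis property, where the implicit constant depends on the derived length bound $d$. Concretely, I would argue that if $G$ is solvable of derived length at most $d$, then for any $X \subseteq G$ and any $g \in \langle X\rangle$, the element $g$ admits a short power-product expression $g = g_1^{e_1}\cdots g_m^{e_m}$ with $g_i \in X$ and $m \leq O(\log n)$ (the constant absorbing the dependence on $d$). Once this structural fact is in hand, the corollary is immediate: the nondeterministic machine from \Obs{obs:logpowerbasis} simply guesses the $g_i$ and $e_i$ sequentially, maintaining only the running product, and invokes \Lem{lem:fastexp} to compute each power in $\DTISPpll$, giving membership in $\NTISPpll$.

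The key steps are the structural ones. I would proceed by induction on the derived length. The derived series $G = G^{(0)} \trianglerighteq G^{(1)} \trianglerighteq \cdots \trianglerighteq G^{(d)} = 1$ has each quotient $G^{(i)}/G^{(i+1)}$ abelian. The base case $d=0$ (or $d=1$, abelian) uses \Cor{cor:cayley_abelian}: abelian groups have the $\log n$ power basis property since any product of powers of generators can be reordered and collapsed, and each element of an abelian group of order $\leq n$ lies in a product of at most $\log n$ generator-powers (as $\langle X\rangle$ is abelian with a generating set of size $\leq \log n$). For the inductive step, I would pass to the abelian quotient $G/G^{(d-1)}$, express the image $\overline{g}$ as a short word of length $O(\log n)$ in the images of $X$, lift this to an honest product in $G$, and correct the resulting "error" term, which lives in the subgroup $G^{(d-1)}$ of derived length $d-1$, by the inductive hypothesis. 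The main technical point is controlling how the lengths accumulate across the $d$ layers.

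The hard part will be keeping the power-basis length $m$ bounded by $O(\log n)$ rather than letting it blow up multiplicatively through the $d$ levels of the derived series. A naive layer-by-layer lifting could multiply the length by a factor at each level, yielding a bound like $(\log n)^d$ rather than $O(\log n)$. The resolution is that $d$ is a fixed constant, so even a bound of the form $c^d \log n$ (or more carefully $O(d \log n)$ using that the "error terms" are already expressed as products of conjugates/commutators of the $g_i$, each rewritable as a bounded-length word) remains $O(\log n)$ with the constant depending only on $d$. I would therefore phrase the power-basis bound as $m \leq f(d)\cdot \log n$ for a suitable function $f$ of the constant $d$, which still satisfies the hypothesis of \Obs{obs:logpowerbasis} (whose $\log n$ bound is stated up to the usual $O(\cdot)$); note that \Obs{obs:logpowerbasis} as applied here allows $m \leq O(\log n)$ with the constant independent of the input, which is exactly what bounded derived length delivers. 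Once the length bound is established uniformly in $n$, the membership algorithm and its $\NTISPpll$ complexity follow verbatim from the proof of \Obs{obs:logpowerbasis}.
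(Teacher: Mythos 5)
There is a genuine gap in the structural claim on which your whole reduction rests. You assert that solvable groups of derived length at most $d$ have the (flat) $\log n$ power basis property over $X$ itself, i.e.\ that every $g \in \langle X\rangle$ equals $g_1^{e_1}\cdots g_m^{e_m}$ with $g_i \in X$ and $m \leq O_d(\log n)$. Your induction does not deliver this. In the inductive step, the ``error term'' lies in $G^{(d-1)}\cap\langle X\rangle$, and the inductive hypothesis expresses it as a short power product over a generating set of that \emph{subgroup} --- a set of commutators --- not over $X$. To convert this into a power product over $X$ you must (i) rewrite each such commutator as a word in $X$, which already costs a factor of $\Theta(\log n)$ per level rather than the constant factor $c$ you posit, and worse, (ii) flatten powers: if $y$ is a word of length $\ell$ over $X$ and $e$ is an exponent (which can be as large as $n$), then $y^{e}$ as a \emph{flat} power product over $X$ has length up to $\ell e$, not $\ell + O(1)$. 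This second point is fatal to any flat length bound, polylogarithmic or otherwise; it is precisely why one cannot simply invoke \Obs{obs:logpowerbasis} here. Your parenthetical ``resolution'' ($c^d\log n$, or $O(d\log n)$ via bounded-length rewriting of the error terms) is therefore unjustified, and the flat power basis property for bounded derived length is, to our knowledge, not known.

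The paper avoids this by never flattening. It invokes the structure underlying Theorem~3.5 of the cited work of Barrington et al.: $\langle X\rangle = X_d$, where $X_0 = X$ and $X_i$ consists of products of at most $C\log n$ powers of elements of $X_{i-1}$. Membership is then decided by a depth-$d$ \emph{recursive} guessing procedure: to guess an element of $X_i$ one sequentially guesses $O(\log n)$ elements of $X_{i-1}$ (each guessed recursively) together with exponents, maintaining only a constant number of running products per recursion level and computing each power via \Lem{lem:fastexp}. Since $d$ is a constant, the recursion depth is constant, the total time is polylogarithmic, and the space stays $O(\log n)$, giving $\NTISPpll$ without ever producing a short flat expression over $X$. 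Your algorithmic outer shell (sequential guessing with a running product) is the right one, but it must be applied to this nested decomposition rather than to a flat power basis that you have not established.
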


\begin{proof}
    Let $X$ be the input set for \algprobm{Membership} and $G$ the input group. 
Theorem 3.5 in \cite{BKLM} is proved by showing that $\langle X  \rangle = X_{d}$ where $X_0 = X$ and for some suitable constant $C$
\[X_i = \{x_1^{e_1} \cdots x_m^{e_m} \mid m \leq C \log n, x_j \in X_{i-1}, e_j \in \Z \text{ for } j \in \{1, \dots, m\}\}.\]

Now, as in \cref{obs:logpowerbasis}, we can guess an element of $X_i$ in \NTISPpll{} given that we have the elements of $X_{i-1}$. As $d$ is a constant, we can guess the elements of $X_{i-1}$ in \NTISPpll{} by induction. Thus, we can decide membership in   $\langle X  \rangle = X_{d}$   in \NTISPpll.
\end{proof}

For nilpotent groups, we can do even better and show \NTISPpll{} even without a bound on the nilpotency class---thus, improving considerably over \cite{BKLM}. This also underlines that the class \NTISPpll{} is useful not only because it provides a better complexity bound than \FOLL, but it also facilitates some proofs.
 
\begin{theorem} 
\algprobm{Membership} for nilpotent groups is in \NTISPpll.
\end{theorem}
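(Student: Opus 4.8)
The plan is to show that nilpotent groups enjoy a $\poly(\log n)$ power basis property and then invoke \Obs{obs:logpowerbasis}. Recall that the machine in the proof of \Obs{obs:logpowerbasis} guesses the factors $g_i \in X$ and exponents $e_i$ one at a time, keeping only the running product $g_1^{e_1}\cdots g_i^{e_i}$ and using \Lem{lem:fastexp} for the powers, so it never stores more than $O(\log n)$ bits and runs in $\poly(\log n)$ steps. Consequently the argument goes through verbatim as long as the number of factors $m$ is bounded by $\poly(\log n)$, not merely by $\log n$; I would use it in this slightly more general form.

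To produce a short power-basis expression for an arbitrary $g \in \langle X\rangle$, I would first reduce to the case of $p$-groups via the Sylow decomposition. Since $G$ is nilpotent, $G = P_1 \times \cdots \times P_r$ is the direct product of its Sylow subgroups, where $r$ is the number of primes dividing $\abs{G}$ and hence $r \le \log n$; moreover elements of distinct $P_i$ commute. For a prime $p = p_i$, the $p$-component $x_{(p)}$ of any $x \in G$ is a fixed power $x^{t}$ of $x$ (with $t$ determined from the $p$-part of $\abs{G}$ by the Chinese Remainder Theorem), so a product of powers of the $p$-components $\{x_{(p)} : x \in X\}$ is again a product of powers of elements of $X$. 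Writing $g = \prod_{i=1}^r g_{(p_i)}$ and expressing each $g_{(p_i)}$ as a product of generator-powers inside $P_i$ (with generating set $\{x_{(p_i)} : x \in X\}$), I would then concatenate these to obtain a single product of generator-powers for $g$ whose length is $r$ times the worst per-prime length.

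It therefore suffices to bound the per-prime length, i.e.\ to show that a $p$-group $P$ of order $p^a$ (so $a \le \log_p n \le \log n$) has the $\log n$ power basis property with respect to any generating set. This structural fact is exactly what underlies the \FOLL{} membership bound for nilpotent groups in \cite{BKLM}, and I expect establishing (or carefully citing) it to be the main point to pin down; the subtlety is that a large nilpotency class does \emph{not} force long products --- for instance in a maximal-class group such as the dihedral group $D_{2^k}$ every element is already a product of at most two generator-powers, even though the class is $k-1$. Granting the per-prime bound of $O(\log n)$, the combined expression from the previous paragraph has length $O(r \log n) = O(\log^2 n) = \poly(\log n)$, so every element of $\langle X\rangle$ admits a power-basis expression of $\poly(\log n)$ length. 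Applying the (generalized) \Obs{obs:logpowerbasis} then places \algprobm{Membership} for nilpotent groups in \NTISPpll. In contrast to the bounded-class solvable case, no bound on the nilpotency class is needed: passing to the Sylow factors removes the class dependence entirely, and the only uses of nondeterminism and of \Lem{lem:fastexp} occur inside the guessing machine of \Obs{obs:logpowerbasis}, which is precisely what improves the previous \FOLL{} bound of \cite{BKLM} to \NTISPpll.
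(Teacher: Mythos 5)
Your reduction to the Sylow factors is sound (the $p$-component of $x$ is indeed a power $x^t$ with $t$ computable from $|G|$, so $\{x_{(p)} : x\in X\}$ generates $\langle X\rangle\cap P_i$, and concatenating per-prime expressions costs only a factor $r\le\log n$), and your generalization of \Obs{obs:logpowerbasis} to $\poly(\log n)$ many factors is harmless. But the argument then rests entirely on the claim that a $p$-group has the $O(\log n)$ power basis property with respect to an \emph{arbitrary} generating set, and this is exactly where the proposal has a genuine gap. That claim is not in \cite{BKLM}: if it were, membership for nilpotent groups would already have been in \FOLL{} via their machinery, whereas the whole point of this theorem (as the paper states explicitly) is that no \FOLL{} bound was previously known for nilpotent groups of unbounded class. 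What \cite{BKLM} gives is the property for commutative (semi)groups and a weaker iterated version for bounded class; for general $p$-groups the $\poly(\log n)$ power basis property over $X$ itself is, to our knowledge, not established, and you neither prove it nor cite a correct source. Your dihedral example shows the nilpotency class alone is not an obstruction, but it is not a proof, and the natural attempts (expanding iterated commutators of weight up to $\log n$ as words in $X$) blow up to length $2^{\Theta(\log n)}$.

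The paper avoids this issue by \emph{not} seeking a power basis in terms of $X$. Instead it works with the lower central series of $\langle X\rangle$: the $m$-th term is generated by iterated commutators $[x_1,\dots,x_k]$ of weight $k\ge m$ with $x_i\in X$, and refining this series yields a polycyclic generating sequence $c_1,\dots,c_m$ of iterated commutators with $m\le\log n$ such that every $g\in\langle X\rangle$ equals $c_1^{e_1}\cdots c_m^{e_m}$. The \NTISPpll{} machine then guesses each $c_j$ \emph{on the fly} (guess $x_1$, then repeatedly guess $x_i$ and replace $a$ by $[a,x_i]$, storing only a constant number of group elements), raises it to a guessed power via \Lem{lem:fastexp}, and accumulates the running product. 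If you want to salvage your route, you would need to either prove the per-prime power basis bound or replace it with a commutator-based generating sequence as the paper does.
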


\begin{proof}
 Let $X$ be the input set for \algprobm{Membership} and $G$ the input group. Write $n = \abs{G}$. Note that the nilpotency class of $G$ is at most $\log n$. Define $C = \{[x_1, \dots, x_\ell]\mid x_i \in X, \ell \leq \log n\}$ where $[x_1, \dots, x_\ell]$ is defined inductively by $[x_1, \dots, x_\ell] = [[x_1, \dots, x_{\ell-1}],x_\ell]$ for $\ell \geq 3$. When $\ell = 1$, define $[x_{1}, \ldots, x_{\ell}] := x_{1}$; and when $\ell = 2$, $[x_{1}, \ldots, x_{\ell}] := [x_{1}, x_{2}]$.

We claim that there is a subset $C'= \{c_1, \dots, c_m\} \subseteq C$ with $m \leq \log n$ such that every $g \in \langle X \rangle$ can be written as $g = c_1^{e_1} \cdots c_m^{e_m}$ with $e_i \in \Z$.

Let $\Gamma_m$ denote the $m$-th term of the lower central series of $\langle X \rangle$ (meaning that $\Gamma_0 = \langle X \rangle$ and $\Gamma_{m+1} = [\Gamma_m, \langle X \rangle]$). 
Then $\Gamma_m$ is generated by $C_m = \{[x_1, \dots, x_k]\mid x_i \in X, k\geq m\}$ (e.\,g., \cite[Lemma 2.6]{ClementMZ17}). 
Observe that, although $k$ is unbounded, the terms with $k >\log n$ are trivial because $\log n$ is a bound on the nilpotency class.
We obtain the desired set $C'$ by first choosing a minimal generating set for the Abelian group $\Gamma_0/\Gamma_1$, then a minimal generating set of $\Gamma_1/\Gamma_2$ and so on (meaning that $(c_1, \dots, c_m)$ is a so-called \emph{polycyclic generating sequence} of $\langle X \rangle$-- see \cite[Chapter~8]{Holt2005HandbookOC}). Note that $m$ is bounded by $\log n$ since $\langle c_i, \dots, c_m\rangle$ is a proper subgroup of $\langle c_{i-1}, \dots, c_m\rangle$ for all $i$.

Next, let us show that in \NTISPpll we can non-deterministically guess elements such that (1) all guessed elements are in $\langle X \rangle$ and (2) every power-product of elements of $C'$ occurs on at least one non-deterministic branch. 

This is not difficult: we start by guessing $x_1$ and set $a=x_1$. Then, guess whether or not to continue; as long as we guess to continue, we guess $i$ and compute $a := [a,x_i]$. As $\ell$ (in the definition of $C$) is bounded by $\log n$, we can guess any element of $C'$ in this way in polylogarithmic time. Moreover, note that at any point we only need to store a constant number of group elements. By construction, all such guessed elements are visibly in $\langle X \rangle$.

Once we have guessed an element of $C'$, we can guess a power of it as in  \cref{obs:logpowerbasis} and then guess the next element of $C'$ and so on.
As $m\leq \log n$, this gives an \NTISPpll\xspace algorithm.
\end{proof}

\section{Abelian Group Isomorphism}
In this section, we consider isomorphism testing of Abelian groups. Our main result in this regard is:

\begin{theorem} \label{thm:abelian}
Let $G$ be an Abelian group, and let $H$ be arbitrary. We can decide isomorphism between $G$ and $H$ in $\forall^{\log \log n} \mathsf{MAC}^0(\DTISPpll)$.
\end{theorem}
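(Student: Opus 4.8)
The plan is to use the fundamental structure theorem for finite abelian groups, recast as a counting criterion. Every finite abelian group $G$ decomposes as $\bigoplus_{p \mid |G|} \bigoplus_i \Z/p^{\lambda^{(p)}_i}\Z$, and for each prime power $p^k$ the count $N_{p^k}(G) := \abs{\{g \in G : g^{p^k} = 1\}}$ equals $p^{\sum_i \min(k,\lambda^{(p)}_i)}$. The sequence $(\sum_i \min(k,\lambda^{(p)}_i))_{k \ge 1}$ determines the partition $\lambda^{(p)}$, so: \emph{two abelian groups $G, H$ of equal order are isomorphic if and only if $N_{p^k}(G) = N_{p^k}(H)$ for every prime power $p^k$}. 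Thus I would first reduce the problem to checking, in $\ACz$, that $\abs{G} = \abs{H}$ and that $H$ is abelian (a conjunction over the $n^2$ pairs of the table lookups $[xy = yx]$), and then to verifying the family of count equalities $N_{p^k}(G) = N_{p^k}(H)$.

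The key observation that brings the quantifier down to $\log \log n$ bits is that only $O(\log n)$ of these invariants are relevant. Indeed, for the $p$-part it suffices to take $1 \le k \le v_p(\abs{G})$ (for $k \ge v_p(\abs{G})$ the count $N_{p^k}$ stabilizes at $\abs{G_p|$, which is fixed once $\abs{G} = \abs{H}$), and $\sum_{p} v_p(\abs{G}) \le \log_2 \abs{G} = O(\log n)$. The primes dividing $\abs{G}$ and their valuations are computable in $\ACz$ by \Lem{lem:factor} (since $\abs{G}$ has $O(\log n)$ bits), so I can enumerate the $L \le \log_2\abs{G}$ relevant pairs $(p,k)$ in $\ACz$. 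I would let the universal quantifier range over branches indexed by a pair $(\iota, d)$, where $\iota \in \{1,\dots,L\}$ selects the $\iota$-th invariant $(p,k)$ and $d \in \{0,1\}$ selects a \emph{direction}; since $2L = O(\log n)$, these branches are indexed by $\log\log n + O(1)$ bits, giving the outer $\forall^{\log \log n}$.

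The inner computation on a fixed branch must lie in $\MACz(\DTISPpll)$, i.e.\ constant depth with a single majority gate and $\DTISPpll$ oracle gates. The predicate $[g^{p^k} = 1]$ is in $\DTISPpll$: by \Lem{lem:fastexp} we compute $g^{p^k}$ (the exponent $p^k \le \abs{G}$ is within range), and comparison to the identity (located by an $\ACz$/$\DTISPpll$ scan of the table) is trivial. To compare the two counts with a \emph{single} majority gate I would split equality into the two inequalities handled by the direction bit and balance the sums: $N_{p^k}(G) \ge N_{p^k}(H)$ holds iff at least $n$ of the $2n$ bits $\{[g^{p^k}=1]\}_{g \in G} \cup \{[h^{p^k}\neq 1]\}_{h \in H}$ are $1$, which is exactly a $\mathsf{Majority}$ over $2n$ inputs; the direction $d=1$ swaps the roles of $G$ and $H$ to test $N_{p^k}(G) \le N_{p^k}(H)$. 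The conjunction over both directions, supplied by the universal quantifier, yields exact equality. The global $\ACz$ checks ($\abs{G}=\abs{H}$ and $H$ abelian) are folded into every branch by conditioning each majority-gate input on the global predicate, so that a failed global check zeroes all inputs and forces that branch to reject; this keeps each branch inside $\MACz(\DTISPpll)$, and the whole construction in $\forall^{\log\log n}\MACz(\DTISPpll)$.

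The main obstacle is the comparison step: exact equality of two $\Theta(n)$-sized counts is not itself a threshold, and $\MACz$ permits only one majority gate with no further gates above it. The resolution—splitting equality into two one-sided thresholds carried by an extra direction bit, realizing each as a balanced $\mathsf{Majority}$ over $2n$ inputs, and absorbing the $\ACz$ side conditions into the majority inputs rather than above the gate—is the crux; combined with the invariant-counting bound $\sum_p v_p(\abs{G}) = O(\log n)$, which is what lets the co-nondeterminism shrink from $\log n$ (as in \cite{GrochowLevetWL}) to $\log\log n$, it delivers the claimed bound.
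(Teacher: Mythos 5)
Your proposal is correct and follows essentially the same route as the paper's proof: both reduce abelian isomorphism to comparing $O(\log n)$ prime-power order statistics indexed by $\log\log n$ co-nondeterministic bits, compute the relevant prime powers in $\ACz$ via Lemma~\ref{lem:factor} and iterated multiplication, evaluate the exponentiation predicate in $\DTISPpll$ via Lemma~\ref{lem:fastexp}, and compare counts over $G$ and $H$ with a single balanced $\mathsf{Majority}$ gate over $2n$ inputs. The only (harmless) differences are that you count solutions of $g^{p^k}=1$ rather than elements of order exactly $p^e$, and you test two-sided equality with a direction bit where the paper instead existentially certifies \emph{non}-isomorphism by a single one-sided witness (which exists whenever the multisets differ, since both have total size $n$).
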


Chattopadhyay, Torán, and Wagner \cite{ChattopadhyayToranWagner} established a $\mathsf{TC}^{0}(\FOLL)$ upper bound on this problem. Grochow \& Levet \cite[Theorem 5]{GrochowLevetWL} gave a tighter analysis of their algorithm, placing it in the sub-class $\alphacc{1}\MACz(\FOLL)$.\footnote{Grochow \& Levet consider $\alphacc{1}\MACz \circ \FOLL$, where $\circ$ denotes composition (see \cite{GrochowLevetWL} for a precise formulation). We note that as $\ACz \circ \FOLL = \FOLL = \ACz(\FOLL)$, we have that $\alphacc{1}\MACz \circ \FOLL = \alphacc{1}\MACz(\FOLL)$. Thus, \Thm{thm:abelian} improves upon the previous bound of $\alphacc{1}\MACz(\FOLL)$ obtained by Grochow \& Levet.} 
We note that Chattopadhyay, Tor\'an, \& Wagner also established an upper bound of $\LogSpace$ for this problem, which is incomparable to the result of Grochow \& Levet (\emph{ibid.}). We improve  upon both these bounds by (i) showing that $O(\log \log n)$ non-deterministic bits suffice instead of $O(\log n)$ bits, and (ii) using an $\ACz(\DTISPpll)$ circuit for order finding rather than an $\FOLL$ circuit. We note that while $\alphacc{1}\MACz(\FOLL)$ is contained in $\TCz(\FOLL)$, it is open whether this containment is strict. In contrast, \Cor{cor:MAC} shows that our new bound of $\forall^{\log \log n} \mathsf{MAC}^0(\DTISPpll)$ is a class that is in fact \textit{strictly} contained in $\LogSpace \cap \mathsf{TC}^{0}(\FOLL)$.

Jeřábek also previously established bounds of $\Sigma_{2}$-$\mathsf{TIME}(\log^{2} n)$ for isomorphism testing of Abelian groups, which can be simulated by depth-$3$ $\qACz$-circuits of size $n^{O(\log n)}$ \cite{JerabekTCSStackExchange}; Jeřábek's result and our Theorem~\ref{thm:abelian} are incomparable.

\begin{proof}[Proof of \Thm{thm:abelian}]
Following the strategy of \cite[Theorem~7.15]{GrochowLevetWL}, we show that being \textit{non}-isomorphic can be decided in the same class but with existentially quantified non-deterministic bits. 

We may check in \ACz whether a group is Abelian. So if $H$ is not Abelian, we can decide in \ACz that $G \not \cong H$. So suppose now that $H$ is Abelian. By the Fundamental Theorem of Finite Abelian Groups, $G$ and $H$ are isomorphic if and only if their multisets of orders are the same. In particular, if $G \not \cong H$, then there exists a prime power $p^{e}$ such that there are more elements of order $p^{e}$ in $G$ than in $H$. We first identify the order of each element, which is $\ACz(\DTISPpll)$-computable by \Lem{lem:fastexp}.

We will show how to nondeterministically guess and check the prime power $p^e$ such that $G$ has more elements of order $p^e$ than $H$ does. Let $n = p_{1}^{e_{1}} \cdots p_{\ell}^{e_{\ell}}$ be the prime factorization of $n$. We have that $\ell \leq \log_{2} n$, and the number of distinct prime powers dividing $n$ is $e_{1} + \cdots + e_{\ell} \leq \log_{2} n$. Nondeterministically guess a pair $(i,e)$ with $1 \leq i \leq \lfloor \log_2 n \rfloor$ and $1 \leq e \leq \lfloor \log_2 n \rfloor$. We treat this pair as representing the prime power $p_i^e$ at which we will test that $G$ has more elements of that order than $H$. Because both $i$ and $e$ are bounded in magnitude by $\log_2 n$, the number of bits guessed is at most $\log_2 \log_2 n$. So we may effectively guess $p^{e}$ using $O(\log \log n)$ bits to specify $p$ (implicitly, i.e., by its index $i$) and to specify $e$ (explicitly, i.e., in its binary expansion). 

However, to identify elements of order $p^e$, we will need the number $p^e$ explicitly, in its full binary expansion. First we show that we can get $p^e$ explicitly if we can get $p$ explicitly. Once we have $p$ in its binary expansion, the function $(p,e) \mapsto p^e$ can be computed in $\mathsf{AC}^0$, by \Thm{lem:multiplication}. 
Thus all that remains is to get $p$ explicitly.

By Lemma~\ref{lem:factor} we can compute in \ACz the prime factors of $|G|$. Consider this list of primes as a list of length $O(\log n)$, consisting of numbers each of $O(\log n)$ bits. Now for each pair of primes $p_{j}, p_{h}$, we define an indicator $X(j,h) = 1 \iff p_{j} > p_{h}$. As $p_{j}, p_{h}$ are representable using $O(\log n)$ bits, we may compute $X(j, \ell)$ in $\mathsf{AC}^{0}$. Now as the number of primes $\ell \leq \log_{2}(n)$, we may in $\ACz$ find a prime $j$ with:
\[
\sum_{h=1}^{\ell} X(j,h) = i.
\]

\noindent The result now follows. 
\end{proof}

As with many of our other results, we show that this class is restrictive enough that it cannot compute \algprobm{Parity}. To do this, we appeal to the following theorem of Barrington \& Straubing:

\begin{theorem}[{Barrington \& Straubing \cite[Thm.~7]{BS}}] \label{thm:BS}
Let $k > 1$. Any $\mathsf{TC}$ circuit family of constant depth, size $2^{n^{o(1)}}$, and with at most $n^{o(1)}$ $\mathsf{Majority}$ gates cannot compute the $\algprobm{Mod}_k$ function.
\end{theorem}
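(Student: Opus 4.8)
The plan is to argue by the polynomial method, reducing to the fact that $\algprobm{Mod}_k$ cannot be approximated by low-degree polynomials over a suitable finite field. Fix a prime $p$ with $p \nmid k$ (such a prime exists for every $k>1$), and recall from the Razborov--Smolensky machinery \cite{Razborov,Smolensky87algebraicmethods} that any polynomial over $\F_p$ agreeing with $\algprobm{Mod}_k$ on a $(1-o(1))$-fraction of the Boolean cube must have degree $\Omega(\sqrt n)$. The goal is to show that a constant-depth $\mathsf{TC}$ circuit $C$ of size $2^{n^{o(1)}}$ with only $t = n^{o(1)}$ majority gates, if it computed $\algprobm{Mod}_k$, would yield an $\F_p$-polynomial of degree $n^{o(1)} = o(\sqrt n)$ agreeing with $\algprobm{Mod}_k$ almost everywhere, a contradiction. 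The $\mathsf{AND}/\mathsf{OR}/\mathsf{NOT}$ portion is standard: each such gate of fan-in at most $S = 2^{n^{o(1)}}$ admits a probabilistic $\F_p$-approximating polynomial of degree $\poly(\log S) = n^{o(1)}$, and composing through the $O(1)$ levels keeps the degree $n^{o(1)}$ while keeping the total error $o(1)$.

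The heart of the matter is the majority gates, and this is exactly where the hypothesis $t = n^{o(1)}$ is indispensable: a single majority gate on $m$ inputs has $\F_p$-degree $\Theta(m)$ and admits no nontrivial low-degree approximation, so one cannot simply replace it by a polynomial. Instead I would put $C$ into a normal form that isolates the majority gates. Topologically order them as $g_1,\dots,g_t$ so that each $g_i$ depends, through the circuit, only on the inputs and on $g_1,\dots,g_{i-1}$. Branching over the $2^t = 2^{n^{o(1)}}$ possible output patterns $\vec b \in \{0,1\}^t$, and replacing each $g_i$ by the constant $b_i$, turns the circuit into a pure $\mathsf{AC}^0$ circuit $C_{\vec b}$ guarded by $t$ \emph{consistency conditions}, each of the form ``$g_i$ evaluates to $b_i$ once $g_1,\dots,g_{i-1}$ are fixed to $b_1,\dots,b_{i-1}$''. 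Each consistency condition is a single majority applied to $\mathsf{AC}^0$-computable bits, i.e.\ an $\mathsf{MAC}^0$-type function in the sense of \cite{VotingPolynomials}, and $C(x) = \bigvee_{\vec b}\bigl(\bigwedge_{i}[\text{$i$-th condition}]\bigr)\wedge C_{\vec b}(x)$.

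This reduces the problem to the central estimate underlying \cite{VotingPolynomials} and \cite{BS}: a Boolean combination of $t = n^{o(1)}$ thresholds of low-degree $\F_p$-polynomials cannot agree everywhere with $\algprobm{Mod}_k$. Here one replaces the $\mathsf{AC}^0$ sub-circuits by their degree-$n^{o(1)}$ approximating polynomials, so that each of the $t$ guards becomes the sign of a low-degree polynomial, and then bounds the correlation of the resulting ``few thresholds of low-degree polynomials'' with $\algprobm{Mod}_k$ by a rank/dimension-counting argument over $\F_p$; the small number of thresholds forces the associated space of attainable functions to have dimension too small to realize $\algprobm{Mod}_k$, which requires degree $\Omega(\sqrt n)$. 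The main obstacle, and the technical core, is precisely this last step: quantifying how much ``non-polynomial'' power $t$ majority gates contribute, and showing that for $t = n^{o(1)}$ it falls far short of what is needed. Everything else---the degree bookkeeping for the $\mathsf{AC}^0$ part, and the branching over the $2^{n^{o(1)}}$ patterns (an $\mathsf{OR}$ of $2^{n^{o(1)}}$ terms, which the polynomial method again handles with degree $\log(2^{n^{o(1)}}) = n^{o(1)}$)---is routine once this estimate is in hand.
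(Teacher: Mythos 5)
First, a caveat: the paper does not prove Theorem~\ref{thm:BS} at all --- it is imported verbatim from Barrington and Straubing \cite{BS} and used as a black box in Corollary~\ref{cor:MAC} --- so there is no in-paper proof to compare against, and your argument has to stand on its own.

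Its high-level architecture (isolate the $t = n^{o(1)}$ majority gates by branching over their $2^t$ output patterns, approximate the $\ACz$ portions by low-degree polynomials, then invoke a statement of the form ``a Boolean combination of few thresholds of low-degree polynomials cannot compute $\algprobm{Mod}_k$'') matches the known proofs in outline, but there is a genuine gap at exactly the point you flag as the ``technical core.'' The lemma you defer --- that $n^{o(1)}$ thresholds applied to low-degree polynomials, further combined by $\ACz$, cannot agree with $\algprobm{Mod}_k$ --- \emph{is} the content of the theorem; everything else in your sketch is routine preprocessing, so no proof has actually been given. Worse, the route you propose for that lemma cannot work as stated: you place the approximating polynomials over $\F_p$, but $\F_p$ carries no order, so ``the sign of a low-degree $\F_p$-polynomial'' is not a defined object, and the majority function on $m$ Boolean inputs has neither low degree nor any exploitable low-rank structure over $\F_p$ --- this is precisely why the Razborov--Smolensky machinery \cite{Razborov, Smolensky87algebraicmethods} handles $\mathsf{Mod}_q$ gates but says nothing by itself about majority gates. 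The actual arguments work over $\mathbb{R}$ or $\mathbb{C}$: one either collapses the $n^{o(1)}$ majority gates to a single output majority at the cost of a $2^{n^{o(1)}}$ size blow-up (Beigel) and then applies the voting-polynomial correlation/degree lower bounds of \cite{VotingPolynomials} for a single threshold of low-degree real polynomials against \algprobm{Parity}, or, for general $\algprobm{Mod}_k$, uses Barrington and Straubing's complex-polynomial (roots-of-unity) correlation bounds. Some such quantitative correlation estimate over an ordered or complex domain is indispensable, and it is exactly what is missing from your write-up.
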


\begin{corollary} \label{cor:MAC}
Let $k > 1$, and let $Q^{o(\log n)}$ be any finite sequence (of $O(1)$ length) of alternating $\exists$ and $\forall$ quantifiers, where the total number of bits quantified over is $o(\log n)$. Then 
\[
\algprobm{Mod}_k \notin Q^{o(\log n)} \mathsf{MAC}^0(\DTISPpll).
\]
\end{corollary}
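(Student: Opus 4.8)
The plan is to reduce the statement to the theorem of Barrington \& Straubing (\Thm{thm:BS}) by showing that the whole machine model $Q^{o(\log n)} \mathsf{MAC}^0(\DTISPpll)$ can be \emph{unfolded} into a single constant-depth $\mathsf{TC}$ circuit of subexponential size using only few $\mathsf{Majority}$ gates. The key observation is that each ingredient contributes only a mild blow-up: the oracle calls to $\DTISPpll$ can be replaced by quasi-polynomial-size CNFs/DNFs (by \Fact{fact:NTISP} and its deterministic variant), which are constant-depth and contain \emph{no} $\mathsf{Majority}$ gates; the $\MACz$ layer contributes exactly one $\mathsf{Majority}$ gate per circuit; and the $o(\log n)$ quantified bits can be eliminated by brute force.

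First I would handle the quantifiers. Since $Q^{o(\log n)}$ quantifies over a total of $o(\log n)$ bits, there are only $2^{o(\log n)} = n^{o(1)}$ distinct settings of the quantified string. I would replace each block of existential (resp.\ universal) quantifiers by a giant $\mathsf{OR}$ (resp.\ $\mathsf{AND}$) gate of fan-in $n^{o(1)}$ over all relevant settings, feeding in a copy of the underlying $\MACz(\DTISPpll)$ circuit for each setting. Because $Q$ has $O(1)$ alternation blocks, this adds only $O(1)$ to the depth and multiplies the size by $n^{o(1)}$, and crucially introduces no new $\mathsf{Majority}$ gates.

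Next I would expand each copy of the base circuit. An $\MACz$ circuit has a single $\mathsf{Majority}$ gate at the output and is otherwise an $\ACz$ circuit with oracle gates. Each oracle gate queries a $\DTISPpll$ (equivalently $\DTIME(\mathrm{polylog})$) problem, which by the deterministic half of \Fact{fact:NTISP} is computable by a quasi-polynomial-size CNF, i.e.\ a depth-$2$ $\mathsf{AND}/\mathsf{OR}$ circuit with no $\mathsf{Majority}$ gates. Substituting these CNFs in place of the oracle gates yields a constant-depth $\mathsf{AND}/\mathsf{OR}$ circuit of quasi-polynomial size $2^{O(\log^c n)} = 2^{n^{o(1)}}$, topped by exactly one $\mathsf{Majority}$ gate. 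Combining across all $n^{o(1)}$ quantifier settings, the full circuit has constant depth, size $2^{n^{o(1)}}$, and at most $n^{o(1)}$ $\mathsf{Majority}$ gates (one per copy). This is precisely the regime of \Thm{thm:BS}, which then forbids computing $\algprobm{Mod}_k$ for $k > 1$, giving the claim.

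The main obstacle I anticipate is bookkeeping the two size-budgets against the hypotheses of \Thm{thm:BS}: I must confirm that quasi-polynomial size $2^{O(\log^c n)}$ genuinely fits inside the $2^{n^{o(1)}}$ size bound, and that the number of $\mathsf{Majority}$ gates stays $n^{o(1)}$ rather than creeping up to quasi-polynomial when the CNFs for the oracle calls are spliced in. The first is immediate since $\log^c n = n^{o(1)}$; the second requires noting that the oracle-replacement CNFs are \emph{purely} Boolean (\mbox{$\mathsf{AND}/\mathsf{OR}$}) and contribute \emph{zero} $\mathsf{Majority}$ gates, so the only $\mathsf{Majority}$ gates are the $n^{o(1)}$ copies of the single $\MACz$ output gate. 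With both bounds verified, the application of \Thm{thm:BS} is direct.
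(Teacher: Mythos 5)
Your proposal is correct and follows essentially the same route as the paper's proof: both unfold the $o(\log n)$ quantified bits into a constant-depth circuit with $n^{o(1)}$ copies of the base circuit, replace the $\DTISPpll$ oracle gates by quasi-polynomial-size purely Boolean circuits via Fact~\ref{fact:NTISP}, observe that only $n^{o(1)}$ $\mathsf{Majority}$ gates remain (one per copy), and invoke the Barrington--Straubing bound (Theorem~\ref{thm:BS}). The only difference is the order of the two expansion steps, which is immaterial.
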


\begin{proof}
Let $L \in Q^{o(\log n)} \mathsf{MAC}^0(\DTISPpll)$. 
Since by Fact~\ref{fact:NTISP} we know that $\DTISPpll \subseteq \qACz$, we have $\mathsf{MAC}^0(\DTISPpll) \subseteq \mathsf{quasiMAC^0}$, that is, quasi-polynomial size circuits of constant depth with a single $\mathsf{Majority}$ gate at the output. Thus $L \in Q^{o(\log n)} \mathsf{quasiMAC}^0$.

Let $C$ be the $\mathsf{quasiMAC}^0$ circuit such that $x \in L \iff (Qy) C(x,y)$, where $|y| < o(\log |x|)$, for all strings $x$. There are $2^{o(\log n)}=n^{o(1)}$ possible choices for $y$; let $C_y(x) = C(x,y)$, where $C_y$ denotes the circuit $C$ with the second inputs fixed to the string $y$.

Now, to compute $L$, the quantifiers $Q^{o(\log n)}$ can be replaced by a constant-depth circuit (whose depth is equal to the number of quantifier alternations), and whose total size is $2^{o(\log n)} = n^{o(1)}$, where at each leaf of this circuit, we put the corresponding circuit $C_y$. The resulting circuit is a $\mathsf{quasiTC}^0$ circuit with only $n^{o(1)}$ $\mathsf{Majority}$ gates, hence by Theorem~\ref{thm:BS}, cannot compute \algprobm{Parity}.
\end{proof}

\section{Isomorphism for Groups of Almost All Orders}

Dietrich \& Wilson \cite{DietrichWilson} previously established that there exists a dense set $\Upsilon \subseteq \mathbb{N}$ such that if $n \in \Upsilon$ and $G_{1}, G_{2}$ are magmas of order $n$ given by their multiplication tables, we can (i) decide if $G_{1}, G_{2}$ are groups, and (ii) if so, decide whether $G_{1} \cong G_{2}$ in time $O(n^{2} \log^{2} n)$, which is quasi-linear time relative to the size of the multiplication table.

In this section, we establish the following.

\begin{theorem} \label{thm:ParallelDW}
Let $\Upsilon \subseteq \mathbb{N}$ be the dense set considered by Dietrich and Wilson \cite{DietrichWilson}. Let $n \in \Upsilon$, and let $G_{1}, G_{2}$ be groups of order $n$. We can decide isomorphism between $G_{1}$ and $G_{2}$ in $\ACz(\DTISPpll)$.   
\end{theorem}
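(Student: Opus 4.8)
The plan is to carefully reanalyze the algorithm of Dietrich and Wilson \cite{DietrichWilson}, identifying the two main subroutines and showing that each can be implemented in $\ACz(\DTISPpll)$. Recall that the density of $\Upsilon$ comes from number theory: the set $\Upsilon$ omits orders divisible by large prime powers, so that for $n \in \Upsilon$, groups of order $n$ have a very restricted structure (roughly, their Sylow structure is well-controlled and they are handled via the structure of solvable groups / abelian Sylow subgroups and coprime factorizations). The Dietrich--Wilson test decides isomorphism by reducing to comparing a collection of numerical and structural invariants that can be extracted from the multiplication table. My first step would be to enumerate precisely which invariants the Dietrich--Wilson algorithm computes and to verify that each is extractable by low-depth circuits with a $\DTISPpll$ oracle.

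The key observation enabling the $\ACz(\DTISPpll)$ bound is that the heavy lifting in the algorithm consists of \emph{order computations} and \emph{small products}, both of which we have already placed in $\DTISPpll$: by Lemma~\ref{lem:fastexp} and Corollary~\ref{cor:fastexp}, order finding, exponentiation, and comparisons of element orders are all in $\DTISPpll$ (indeed computable by depth-2 $\qACz$ circuits). First I would use Lemma~\ref{lem:factor} to compute the prime factorization of $n$ in $\ACz$ (valid since $|n| = O(\log(\text{input size}))$). Then, for each prime $p \mid n$, I would identify the relevant Sylow or related subgroup structure by collecting elements whose order is a power of $p$ (exactly as in the nilpotency test Corollary following Corollary~\ref{cor:fastexp}), using $\ACz(\DTISPpll)$. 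The comparison of the resulting invariants between $G_1$ and $G_2$---multisets of element orders, sizes of characteristic subgroups, and the bounded combinatorial data that Dietrich--Wilson show determines isomorphism type on $\Upsilon$---is then a matter of $\ACz$ post-processing over polynomially many $\DTISPpll$-computable quantities, which remains in $\ACz(\DTISPpll)$ since $\ACz(\DTISPpll)$ is closed under $\ACz$-Turing reductions.

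The main obstacle I anticipate is verifying that \emph{every} structural test in the Dietrich--Wilson procedure genuinely reduces to order-finding plus $\ACz$-level bookkeeping, rather than requiring, say, membership testing or canonical-form computation of unbounded depth. In particular I would need to check that the step distinguishing isomorphism types of the coprime factors does not secretly invoke a subroutine of higher complexity (e.g., solving systems over the Sylow subgroups that would need $\LogSpace$-hard connectivity or $\FOLL$-depth recursion). The resolution should come from the fact that on $\Upsilon$ the groups have abelian or otherwise highly constrained Sylow subgroups, so the relevant isomorphism invariants are captured entirely by the multiset of element orders together with a bounded amount of additional data---precisely the regime in which Theorem~\ref{thm:abelian} and Corollary~\ref{cor:simple} already operate---thereby keeping the whole computation inside $\ACz(\DTISPpll) \subseteq \qACz \cap \LogSpace \cap \FOLL$. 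A secondary check is to confirm that the algorithm's decision that $G_1, G_2$ are in fact groups of the relevant restricted type is itself verifiable in $\ACz(\DTISPpll)$, which follows from the same order-based tests.
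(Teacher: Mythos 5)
Your proposal correctly identifies several of the low-level ingredients the paper uses (prime factorization of $n$ via Lemma~\ref{lem:factor}, order-finding via Lemma~\ref{lem:fastexp} and Corollary~\ref{cor:fastexp}, and closure of $\ACz(\DTISPpll)$ under $\ACz$-Turing reductions), but it mischaracterizes what the Dietrich--Wilson test actually does, and the mischaracterization is load-bearing. You assert that on $\Upsilon$ the isomorphism type is ``captured entirely by the multiset of element orders together with a bounded amount of additional data,'' and you propose to compare such invariants. That is not the structure of the argument, and it is not true: groups of order $n \in \Upsilon$ are not determined by element-order statistics. The actual content of \cite[Theorem~2.5]{DietrichWilson} is a \emph{semidirect product decomposition} $G = H \ltimes B$ with $\gcd(|B|,|H|)=1$, where $B$ is cyclic of squarefree order built from the large primes dividing $n$ exactly once, and $H$ is tiny ($|H| = (\log n)^{\poly\log\log n}$). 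The paper's proof then has three genuinely distinct steps that your plan does not supply: (i) constructing the decomposition inside $\ACz(\DTISPpll)$ by finding an element of order $\overline{p} = p_1\cdots p_\ell$ generating $B$; (ii) deciding $H_1 \cong H_2$ by \emph{brute-force generator enumeration} over all $\poly(n)$ cube generating sequences of the small group $H_j$, with each candidate checked via the machinery of Theorem~\ref{thm:QuasigroupIsoPolylogtime} in $\alphacc{1}\betacc{1}\DTISPpll$ --- this works only because $|H_j|$ is so small that the number of candidate sequences is polynomial, which is exactly the point of restricting to $\Upsilon$; and (iii) invoking Taunt's Lemma to test that the two actions $\theta_j \colon H_j \to \mathrm{Aut}(B_j)$ are equivalent, i.e., finding \emph{compatible} isomorphisms $\alpha\colon H_1 \cong H_2$ and $\beta\colon B_1 \cong B_2$ simultaneously, again by parallel enumeration over generators of $B_2$ and cube generating sequences of $H_2$.

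The paragraph where you flag ``the main obstacle I anticipate'' is in fact pointing at the real issue --- whether distinguishing the coprime factors and their interaction needs more than order-finding --- but you then dismiss it by appeal to Theorem~\ref{thm:abelian} and Corollary~\ref{cor:simple}, neither of which applies: $H$ need not be abelian or simple, and even when $H_1 \cong H_2$ and $B_1 \cong B_2$ separately, the groups $G_1$ and $G_2$ can fail to be isomorphic because the actions differ. Without the decomposition, the polynomial bound on the number of generating sequences of $H$, and Taunt's Lemma, the proposal does not close. I would recommend reworking the argument around these three steps rather than around invariant comparison.
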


Note that verifying the group axioms is $\ACz$-computable. 

\begin{remark}
Theorem~\ref{thm:ParallelDW} provides that for almost all orders, \algprobm{Group Isomorphism} belongs to \\$\ACz(\DTISPpll)$, which is contained within $\LogSpace \cap \FOLL \subsetneq \mathsf{P}$ and cannot compute \algprobm{Parity}. While it is known that \algprobm{Group Isomorphism} belongs to complexity classes such as $\betacc{2}\LogSpace \cap \betacc{2}\mathsf{FOLL}$ \cite{ChattopadhyayToranWagner} and $\qACz$ (Theorem~\ref{thm:QuasigroupIsoPolylogtime}) that cannot compute \algprobm{Parity}, membership within $\mathsf{P}$---let alone a subclass of $\mathsf{P}$ that cannot compute \algprobm{Parity}---is a longstanding open problem.
\end{remark}

\begin{proof}[Proof of Theorem~\ref{thm:ParallelDW}]
Dietrich \& Wilson showed \cite[Theorem~2.5]{DietrichWilson} that if $G$ is a group of order $n \in \Upsilon$, then $G = H \ltimes B$, where $\gcd(|B|,|H|)=1$ and:
\begin{itemize}
    \item $B$ is a cyclic group of order $p_{1} \cdots p_{\ell}$, where for each $i \in [\ell]$, $p_{i} > \log \log n$ and $p_{i}$ is the maximum power of $p_{i}$ dividing $n$.
    \item $|H| = (\log n)^{\poly \log \log n}$; and in particular, if a prime divisor $p$ of $n$ satisfies $p \leq \log \log n$, then $p$ divides $|H|$. 
\end{itemize} 

As $G_{1}, G_{2}$ are given by their multiplication tables, we may in $\ACz$ compute (i) the prime divisors $p_{1}, \ldots, p_{k}$ of $n$, and (ii) determine whether, for each $i \in[k]$, $p_{i}$ is the maximal power of $p_{i}$ dividing $n$. Furthermore, in $\AC^{0}$, we may write down $\lfloor \log \log n \rfloor$ and test whether $p_{i} > \lfloor \log \log n \rfloor$.

Fix a group $G$ of order $n$. We will first discuss how to decompose $G = H \ltimes B$, as prescribed by \cite[Theorem~2.5]{DietrichWilson}.
Without loss of generality, suppose that $p_{1}, \ldots, p_{\ell}$ ($\ell \leq k$) are the unique primes where $p_{i}$ ($i \in [\ell]$) divides $n$ only once and $p_{i} > \log \log n$. Now as each $p_{i}$ can be represented as a string of length $\leq \lceil \log(n)\rceil+1$, we may in $\AC^{0}$ compute $\overline{p} := p_{1} \cdots p_{\ell}$ (\Thm{lem:multiplication}). Using \Lem{lem:fastexp}, we may in $\ACz(\DTISPpll)$ identify an element $g \in G$ of order $\overline{p}$.

Now in $\ACz$, we may write down the multiplication table for $H_{j} \cong G_{j}/B_{j}$.
As $|H_{j}| \leq (\log n)^{\poly \log \log n}$, there are $\poly(n)$ possible generating sequences for $H_{j}$ of length at most $\log\abs{H_j}$.
By \cite{BabaiSzemeredi} it actually suffices to consider cube generating sequences for $H_{j}$.
Now given cube generating sequences $\overline{x_{j}}$ for $H_{j}$, we may by the proof of Theorem~\ref{thm:QuasigroupIsoPolylogtime} decide whether $\overline{x_{1}} \mapsto \overline{x_{2}}$ extends to an isomorphism of $H_{1}$ and $H_{2}$ in $\alphacc{1}\betacc{1}\DTISPpll \subseteq \ACz(\DTISPpll)$.
As there are only $\poly(n)$ such generating sequences to consider, we may decide whether $H_{1} \cong H_{2}$ in $\ACz(\DTISPpll)$.

Suppose $H_{1} \cong H_{2}$, $B_{1} \cong B_{2}$, and $\text{gcd}(|B_{j}|, |H_{j}|) = 1$ for $j = 1,2$. We have by the Schur--Zassenhaus Theorem that $G_{j} = H_{j} \ltimes_{\theta_{j}} B_{j}$ ($j = 1,2$) for some action $\theta_j \colon H_j \to \text{Aut}(B_j)$. By Taunt's Lemma \cite{Taunt1955}, it remains to test whether the actions $\theta_{1}$ and $\theta_{2}$ are equivalent in the following sense: do there exist  isomorphisms $\alpha \colon H_1 \cong H_2$ and $\beta \colon B_1 \cong B_2$ such that
\[
\beta(h bh^{-1}) = \alpha(h) \beta(b) \alpha(h^{-1}) \qquad \forall h \in H_1, b \in B_1?
\]
Note that, as $B_{j}$ is Abelian, for any two elements $h_{1}, h_{2}$ of $G_{j}$ belonging to the same coset of $B_j$ and any element $b \in B_{j}$, that $h_{1}bh_{1}^{-1} = h_{2}bh_{2}^{-1}$, so the above conjugation action is well-defined, independent of the choice of isomorphic copy of $H_j$ in $G_j$. Next, since the $B_j$ are cyclic, if $b_1$ generates $B_1$, then the above condition is satisfied iff for all $h \in H_1$ we have
\[
\beta(h b_1 h^{-1}) = \alpha(h) \beta(b_1) \alpha(h^{-1}) \qquad \forall h \in H_1.
\]

As above, in $\ACz(\DTISPpll)$ we can find a generator $b_1 \in B_1$ and a cube generating sequence $h_1,\dotsc,h_k \in H_1$. In parallel, for all $\poly(n)$ generators $b_2 \in B_2$ and all $\poly(n)$ cube generating sequences $h'_1,\dotsc,h'_k \in H_2$, we then test the above condition on the isomorphisms specified by $\beta(b_1) = b_2$ and $\alpha(h_i) = h'_i$ for $i=1,\dotsc,k$. That is, for each choice of $b_2$ we write down the isomorphism $B_1 \to B_2$ defined by $\beta(b_1) = b_2$, and for each choice of cube generating sequence $h'_1,\dotsc,h'_k \in H_2$, we check that
\[
\beta(h_i b_1 h_i^{-1}) = h'_i b_2 (h'_i)^{-1} \qquad \forall i=1,\dotsc,k.
\]
Checking the preceding condition only involves a constant number of group multiplications, which can thus be done in $\DTISPpll$.\qedhere
\end{proof}

\section{Quasigroup Isomorphism} \label{sec:QuasigroupIsomorphism}

In this section, we establish the following.

\begin{theorem} \label{thm:QuasigroupIsoPolylogtime}
\algprobm{Quasigroup Isomorphism} belongs to $\betacc{2}\alphacc{1}\betacc{1}\DTISPpll$. In particular, it can be solved by $\qACz$ circuits of depth 4 and size $n^{\Oh(\log n)}$.
\end{theorem}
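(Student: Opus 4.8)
The plan is to realize the generator-enumeration strategy of Chattopadhyay, Torán, and Wagner, but to replace the costly computation of cube coordinates (which forces depth $\Theta(\log\log n)$ in their analysis) by a guess that is verified in $\DTISPpll$. Fix once and for all the left-associative parenthesization $P$, so that $\psi(e) := P(s_0 s_1^{e_1}\cdots s_k^{e_k})$ is evaluated by a single left-to-right pass maintaining only the running product; since $k = \Oh(\log n)$, such a cube product costs $\Oh(\log n)$ table lookups, hence lies in $\DTISPpll$ (it uses $\Oh(\log n)$ space and $\poly(\log n)$ time). Recall from the preliminaries that for this parenthesization every quasigroup of order $n$ admits a cube generating sequence of length $k = \Oh(\log n)$.

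The outermost $\betacc{2}$ guesses two sequences $\overline{x} = (x_0,\dots,x_k)$ in $Q_1$ and $\overline{y} = (y_0,\dots,y_k)$ in $Q_2$; since each has $\Oh(\log n)$ entries of $\Oh(\log n)$ bits, this costs $\Oh(\log^2 n)$ bits. Writing $\psi_1,\psi_2$ for the induced cube maps of $Q_1,Q_2$, I would then verify, inside $\alphacc{1}\betacc{1}\DTISPpll$, that the assignment $\psi_1(e)\mapsto\psi_2(e)$ is a well-defined isomorphism, via the following conjuncts: (A) $\overline{x}$ is cube generating, i.e.\ $\forall a\in Q_1\,\exists e\colon \psi_1(e)=a$; (B) the analogous statement for $\overline{y}$; (C) consistency, $\forall (e,e')\colon [\psi_1(e)=\psi_1(e')]\Leftrightarrow[\psi_2(e)=\psi_2(e')]$; and (D) homomorphism, $\forall (e_a,e_b)\,\exists e_{ab}$ with $\psi_1(e_{ab})=\psi_1(e_a)\psi_1(e_b)$ and $\psi_2(e_{ab})=\psi_2(e_a)\psi_2(e_b)$. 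Each universally quantified object (an element, a pair of exponent vectors, or a constant-size selector) has $\Oh(\log n)$ bits, each existentially guessed exponent vector has $\Oh(\log n)$ bits, and each conjunct's body is a constant number of cube products compared for equality, hence $\DTISPpll$.

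For correctness: given (C), the map $\varphi(\psi_1(e)) := \psi_2(e)$ is well-defined and injective; with (A) it is a bijection of $Q_1$ onto $Q_2$ (same cardinality $n$); and (D) makes it a homomorphism, so $\varphi$ is an isomorphism. Conversely any isomorphism $\varphi$ gives a valid guess by taking $\overline{y} = \varphi(\overline{x})$, since then $\psi_2 = \varphi\circ\psi_1$. The four conjuncts combine into a single $\alphacc{1}\betacc{1}\DTISPpll$ predicate by universally quantifying an extra $\Oh(1)$-bit selector that chooses which conjunct to test. Translating quantifiers to gates then yields the $\qACz$ bound: the $\betacc{1}$ existential merges with the top OR of the quasi-polynomial-size DNF for the $\DTISPpll$ body (Fact~\ref{fact:NTISP}) into one OR layer of AND gates (depth $2$); the $\alphacc{1}$ adds an AND layer on top (depth $3$); and the $\betacc{2}$ adds the final OR layer of fan-in $2^{\Oh(\log^2 n)} = n^{\Oh(\log n)}$ (depth $4$), which also dominates the size.

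The step I expect to be the crux is the design of conjunct (D): non-associativity means there is no canonical cube coordinate for the product $\psi_1(e_a)\psi_1(e_b)$, and computing one would reintroduce the $\log\log n$ depth. Guessing $e_{ab}$ with the innermost $\betacc{1}$ and checking it simultaneously against both quasigroups side-steps this, but it is correct only because (C) guarantees that $\psi_2(e_{ab})$ depends solely on $\psi_1(e_{ab})$; getting this interplay between (C) and (D) right---together with confirming that the fixed left-associative parenthesization keeps every cube product in logarithmic space---is where the care is needed.
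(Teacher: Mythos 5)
Your proposal is correct and follows essentially the same route as the paper's proof: existentially guess cube generating sequences for both quasigroups with $\Oh(\log^2 n)$ bits, verify cube generation and the induced isomorphism in $\alphacc{1}\betacc{1}\DTISPpll$ by evaluating left-associated cube products one table lookup at a time in polylogarithmic time and logarithmic space, merge the universal blocks, and then convert $\betacc{2}\alphacc{1}\betacc{1}$ plus the DNF for the $\DTISPpll$ body into a depth-$4$, size-$n^{\Oh(\log n)}$ circuit. The only (harmless) variation is in the isomorphism check itself: the paper uses the purely universal Chattopadhyay--Tor\'an--Wagner triple condition (for all exponent vectors $c,d,e$, the relation $P(g_0g_1^{c_1}\cdots g_k^{c_k})\cdot P(g_0g_1^{d_1}\cdots g_k^{d_k})=P(g_0g_1^{e_1}\cdots g_k^{e_k})$ holds in $Q_1$ iff the corresponding relation holds in $Q_2$), whereas you split it into an explicit well-definedness conjunct and a homomorphism conjunct with an inner existential guess --- both are valid and land in the same class.
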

Note that we have 

\begin{align*}
\betacc{2}\alphacc{1}\betacc{1}\DTISPpll 
&\subseteq\betacc{2}\ACz(\DTISPpll)  \\
&\subseteq\qACz \cap \betacc{2}\LOGSPACE \cap \betacc{2}\FOLL.    
\end{align*}

This statement is inspired by \cite{DietrichEPQW22} where a similar problem is shown to be in the third level of the polynomial-time hierarchy using the same approach. Our proof follows closely the algorithm for \cite[Theorem~3.4]{ChattopadhyayToranWagner}.

\begin{proof}[Proof of Theorem \ref{thm:QuasigroupIsoPolylogtime}]  
Let $G$ and $H$ be quasigroups given as their multiplication tables. We assume that the elements of the quasigroups are indexed by integers $1 , \dots, \abs{G}$. If $\abs{G} \neq \abs{H}$ (this can be tested in $\ComplexityClass{DTIME}(\log n)$ by a standard binary search), we know that $G$ and $H$ are not isomorphic. 
Otherwise, let us write $n= \abs{G}$  and $k = \ceil{2 \log n} + 1$ (while \cite[Theorem 3.3]{ChattopadhyayToranWagner} only states that some $k \in O(\log n)$ suffices, the corresponding proof states that for $k = \ceil{2 \log n} + 1$, there is actually a cube generating set).

 The basic idea is to guess cube generating sequences $(g_0, \dots, g_k)$ and $(h_0, \dots, h_k)$ for $G$ and $H$ and verify that the map $g_i \mapsto h_i$ induces an isomorphism between $G$ and $H$. Hence, we start by guessing cube generating sequences $(g_0, \dots, g_k)$ and $(h_0, \dots, h_k)$ with respect to the parenthesization $P$ where the elements are evaluated left-to-right (so $P(g_{1}g_{2}g_{3}) = (g_{1}g_{2})g_{3}$), with $g_i \in G$, $h_i \in H$.
 This amounts to guessing $2k\cdot \log(n) \in \Oh(\log^2 n)$ many bits (thus, $\betacc{2}$).
Now, we need to verify two points in $\alphacc{1}\betacc{1}\DTISPpll$:
\begin{itemize}
    \item that these sequences are actually cube generating sequences,
    \item that $g_i \mapsto h_i$ induces an isomorphism.
\end{itemize}

Let us describe the first point for $G$ (for $H$ this follows exactly the same way): we universally verify for every element $g \in G$ (which can be encoded using $\Oh(\log n)$ bits, hence, $\alphacc{1}$) that we can existentially guess a sequence $(e_1, \dots, e_k) \in \{0,1\}^k$ (i.e.\ $\betacc{1}$) such that $g = P(g_0g_1^{e_1} \cdots g_k^{e_k})$. 
We can compute this product in $\DTISP(\log^{2+o(1)} n, \log n)$ by multiplying from left to right:\footnote{Here, we could impose the additional requirement that the length of each row/column of the multiplication table is padded up to a power of two in order to get a bound of $\betacc{2}\alphacc{1}\betacc{1}\DTISP(\log^2n,\log n)$).\label{fn:16}} Each multiplication can be done in time $\Oh(\log^{1+o(1)} n)$ because we simply need to compute $i+j\cdot n$ for two addresses $i,j$ of quasigroup elements, write the result on the index tape and then read the corresponding product of group elements from the multiplication table. Moreover, note that for this procedure we only need to store one intermediate result on the working tape at any time
, and one quasigroup multiplication only queries $O(\log n)$ bits (this point will be important for our size analysis below); the ``$+o(1)$'' in the exponent is merely to perform the arithmetic. 
Thus, computing the product $P(g_0g_1^{e_1} \cdots g_k^{e_k})$ can be done in time $\Oh(\log^{2+o(1)} n)$ and space $\Oh(\log n)$ and it can be checked whether the result is $g$.

To check the second point, by \cite{ChattopadhyayToranWagner}, we need to verify universally that for all $(c_1, \dots, c_k)$, $(d_1, \dots, d_k)$, $(e_1, \dots, e_k) \in \{0,1\}^k$ (hence, $\alphacc{1}$) whether 
\begin{align*}
& P(g_0g_1^{c_1} \cdots g_k^{c_k})\cdot P(g_0g_1^{d_1} \cdots g_k^{d_k}) = P(g_0g_1^{e_1} \cdots g_k^{e_k}) \\
 \iff & P(h_0h_1^{c_1} \cdots h_k^{c_k})\cdot P(h_0h_1^{d_1} \cdots h_k^{d_k}) = P(h_0h_1^{e_1} \cdots h_k^{e_k})
\end{align*}
These products can be computed in the same asymptotic complexity bounds as the product above, using the same technique. Now, it remains to observe that we can combine the two $\alphacc{1}$ blocks into one  $\alphacc{1}$ block because the check whether $g_i \mapsto h_i$ induces an isomorphism does not depend on the existentially guessed bits in the first check.

The depth 4 circuit bound then follows from the DNF for $\DTIME(\text{polylog}(n))$ (Fact~\ref{fact:NTISP}): the quantifier blocks $\betacc{2}\alphacc{1}\betacc{1}$ turn into $\bigvee^{\log^2 n} \bigwedge^{\log n} \bigvee^{\log n}$, and then the disjunction at the top of the DNF for $\DTIME(\text{polylog}(n))$ can be merged into the final $\bigvee$ from the quantifier blocks. The size bound is computed as follows: the $\bigvee^{\log^2 n} \bigwedge^{\log n} \bigvee^{\log n}$ multiplies the size by $2^{\Oh(\log^2 n + 2 \log n)}$. For each of the above two points, we showed it can be decided in $\mathsf{DTIME}((\log n)^{2 + o(1)})$ using at most $\Oh(\log^2 n)$ queries to the input. By the proof of Fact~\ref{fact:NTISP}, those yield decision trees of depth $\Oh((\log n)^2)$, giving a DNF of size at most $n \cdot 2^{\Oh(\log n)^2}$. Multiplying all these factors together gives size $2^{\Oh(\log^2 n)} = n^{\Oh(\log n)}$.
\end{proof}

\begin{corollary} \label{cor:SRGs}
The following problems belong to $\betacc{2}\ACz(\DTISPpll)$.
\begin{enumerate}[label=(\alph*)]
\item \algprobm{Latin Square Isotopy}. In particular, it belongs to \\ $\betacc{2}\alphacc{1}\betacc{1}\DTISPpll$, which yields depth $4$ $\qACz$ circuits.

\item Isomorphism testing of Steiner triple systems. In particular, this problem belongs to \\ $\betacc{2}\alphacc{1}\betacc{1}\NTISPpll$, which yields depth $4$ $\qACz$ circuits.

\item Isomorphism testing of Latin square graphs.
\item Isomorphism testing of Steiner $(t, t+1)$-designs.
\item Isomorphism testing of pseudo-STS graphs.
\end{enumerate}
\end{corollary}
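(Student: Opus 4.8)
The plan is to deduce every item of the corollary from Theorem~\ref{thm:QuasigroupIsoPolylogtime} by exhibiting $\ACz$ reductions --- in fact disjunctive-truth-table reductions --- to \algprobm{Quasigroup Isomorphism}, and then invoking the closure of $\betacc{2}\ACz(\DTISPpll)$ under $\leq_{dtt}^{\ACz}$ reductions (Fact~\ref{fact:dtt}). Since $\ACz$-dtt reductions compose (a composition having at most the product of the two polynomial query-counts) and $\ACz$-computable reduction functions compose, it suffices to reduce each problem, possibly through one intermediate problem, to \algprobm{Quasigroup Isomorphism}. The one twist is that for the problems whose input is an incidence matrix rather than a multiplication table, computing a product in the associated quasigroup requires a nondeterministic search for the relevant block, which is why those items land in the $\NTISPpll$ variant $\betacc{2}\alphacc{1}\betacc{1}\NTISPpll$ rather than the $\DTISPpll$ one; by the proof of Fact~\ref{fact:NTISP} all of these are still depth-$4$ $\qACz$ circuits.

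For (a), \algprobm{Latin Square Isotopy}, I would use the classical fact that two quasigroups are isotopic if and only if one is isomorphic to a \emph{principal isotope} of the other (cf.~\cite{Albert}). For a quasigroup $L_2$ and elements $a,b \in L_2$, the principal isotope $(L_2)_{a,b}$ may be taken to have operation $x * y = (x/a)(b\backslash y)$, whose multiplication table is computable entrywise in $\ACz$ by three table lookups ($x/a$, then $b\backslash y$, then their product). Thus $L_1$ is isotopic to $L_2$ if and only if $L_1 \cong (L_2)_{a,b}$ for one of the $n^2$ choices of $(a,b)$, giving a $\leq_{dtt}^{\ACz}$ reduction to \algprobm{Quasigroup Isomorphism}; the disjunction over the $\poly(n)$ principal isotopes merges into the leading $\betacc{2}$ block, preserving the bound $\betacc{2}\alphacc{1}\betacc{1}\DTISPpll$.

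For (b), isomorphism of Steiner triple systems, the associated Steiner quasigroup has $x \cdot x = x$ and, for $x \neq y$, $x\cdot y = z$ where $\{x,y,z\}$ is the unique block through $x$ and $y$; two systems are isomorphic if and only if their Steiner quasigroups are. Running the algorithm of Theorem~\ref{thm:QuasigroupIsoPolylogtime} on these quasigroups requires computing products, and with the incidence matrix as input this means nondeterministically guessing the block $B$ through $x$ and $y$, verifying $I_{x,B}=I_{y,B}=1$, and reading off the third point --- a $\NTISPpll$ rather than $\DTISPpll$ operation. Absorbing these guesses into the innermost verification machine yields $\betacc{2}\alphacc{1}\betacc{1}\NTISPpll$. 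Items (c)--(e) are then obtained by chaining known $\ACz$ reductions (for details see~\cite{LevetLatinSquares}): for (c), Miller's theorem \cite{MillerTarjan} gives that two Latin square graphs are isomorphic iff the underlying Latin squares are main class isomorphic, and a main class isomorphism is the composition of one of the $|S_3| = 6$ parastrophies with an isotopy, so (c) reduces by a constant-fan-in disjunction (each parastrophe computed in $\ACz$) to (a); for (e), Bose's theorem \cite{Bose} identifies pseudo-STS graphs on more than $67$ vertices with genuine STS graphs, reducing (e) to (b), with the finitely many small orders decided by table lookup in $\ACz$; and (d) follows from the analogous design-to-quasigroup reductions of \cite{LevetLatinSquares}.

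The main obstacle I anticipate is not the algebra but the \emph{constant-depth} recovery of the combinatorial object from its graph: passing from a Latin square graph back to a Latin square, or from a pseudo-STS graph back to a Steiner triple system, in $\ACz$. The threshold results of Bruck \cite{Bruck} and Bose \cite{Bose} guarantee the combinatorial correspondence above their respective bounds ($n > 23$ and more than $67$ vertices), and the small cases are of bounded size and hence handled trivially; but implementing the recovery in constant depth --- rather than merely polynomial time --- is delicate and is exactly what the $\ACz$ reductions of \cite{LevetLatinSquares} supply, so I would invoke those directly rather than re-derive them. A secondary point requiring care is checking that feeding a $\NTISPpll$ product oracle into the quantifier structure of Theorem~\ref{thm:QuasigroupIsoPolylogtime} does not inflate the depth; this holds because the block-guessing nondeterminism can be threaded into the innermost verification, leaving the $\betacc{2}\alphacc{1}\betacc{1}$ prefix and the depth-$4$ $\qACz$ bound intact.
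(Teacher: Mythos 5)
Your treatment of (b)--(e) matches the paper's: the Steiner quasigroup with block lookups done nondeterministically in $\NTISPpll$ for (b), and the $\ACz$ recovery/reduction machinery of \cite{LevetLatinSquares}, \cite{MillerTarjan}, \cite{Bose} plus closure under $\leq_{dtt}^{\ACz}$ (Fact~\ref{fact:dtt}) for (c)--(e). For (a), however, you take a genuinely different route---reducing isotopy to isomorphism against the $n^2$ principal isotopes $(L_2)_{a,b}$---whereas the paper never leaves the original two multiplication tables: it guesses \emph{three} cube generating sequences for each of $Q_1,Q_2$, inducing the three bijections $\alpha,\beta,\gamma$ of an isotopy, and verifies the homotopism condition $\alpha(g)\beta(h)=\gamma(gh)$ directly, so that every operation performed by the innermost machine is a single multiplication-table lookup.

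The difference matters, and it is where your argument has a gap. You assert that the table of $(L_2)_{a,b}$, with $x*y=(x/a)(b\backslash y)$, is "computable entrywise in $\ACz$ by three table lookups,'' and that the disjunction over $(a,b)$ therefore merges into the $\betacc{2}$ block while "preserving the bound $\betacc{2}\alphacc{1}\betacc{1}\DTISPpll$.'' But the input is only the \emph{multiplication} table: computing $x/a$ or $b\backslash y$ is not a lookup, it is a search over a column or row of length $n$. As a circuit this is still $\ACz$ (a parallel $n$-way comparison), so the $\leq_{dtt}^{\ACz}$ reduction and hence the headline bound $\betacc{2}\ACz(\DTISPpll)$ do go through via Fact~\ref{fact:dtt}. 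However, once you fold the choice of $(a,b)$ into the $\betacc{2}$ block, the inner sequential machine must evaluate $*$ on the fly, and a deterministic machine cannot perform quasigroup division in $\mathrm{polylog}(n)$ time from the multiplication table alone; guessing and verifying the quotient puts you in $\NTISPpll$, not $\DTISPpll$ (and pre-guessing all quotients needed along a length-$k$ product costs $\Theta(\log^2 n)$ bits, which does not fit in the inner $\betacc{1}$ block). So your route yields at best $\betacc{2}\alphacc{1}\betacc{1}\NTISPpll$ for (a)---which still gives depth-$4$ $\qACz$ circuits via Fact~\ref{fact:NTISP}, but not the class $\betacc{2}\alphacc{1}\betacc{1}\DTISPpll$ claimed in the statement. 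The paper's three-sequence homotopism check is precisely the device that avoids division and secures the $\DTISPpll$ bound; to repair your proof of (a) you should either adopt it or explicitly downgrade the refined claim to the $\NTISPpll$ variant.
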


\noindent For \algprobm{Latin Square Isotopy} and isomorphism testing of Steiner triple systems, a careful analysis yields depth-$4$ $\qACz$ circuits. In contrast, the reductions from \cite{LevetLatinSquares} for isomorphism testing of Latin square graphs, Steiner $(t, t+1)$-designs, and pseudo-STS graphs all use circuits of depth at least $4$. Obtaining reductions of depth less than $4$ will likely require new techniques.

\begin{proof}
We proceed as follows.
\begin{enumerate}[label=(\alph*)]
\item We note that the reductions outlined in \cite[Theorem~2]{MillerTarjan} and \cite[Remark~1.6]{LevetLatinSquares} in fact allow us to determine whether two quasigroups are isotopic, and not just main class isomorphic. Thus, we have an $\ACz$-computable disjunctive truth-table reduction from \algprobm{Latin Square Isotopy} to \algprobm{Quasigroup Isomorphism}. This suffices to yield the bound of $\betacc{2}\ACz(\DTISPpll)$. 

We now turn to establishing the stronger $\betacc{2}\alphacc{1}\betacc{1}\DTISPpll$ bound. We carefully analyze the $\betacc{2}\LogSpace \cap \betacc{2}\FOLL$ algorithm from \cite[Section~3]{LevetLatinSquares} using cube generating sequences. Our analysis adapts the proof technique of Theorem~\ref{thm:QuasigroupIsoPolylogtime}. Let $Q_1, Q_2$ be quasigroups, and let $k \in O(\log n)$. We begin by guessing three cube generating sequences $\bar{a} = (a_0, a_1, \ldots, a_k), \bar{b} = (b_0, b_1, \ldots, b_k), \bar{c} = (c_0, c_1, \ldots, c_k)$ for $Q_1$, and $\bar{a'} = (a_{0}', a_{1}', \ldots, a_{k}'), \bar{b'} = (b_{0}', b_{1}', \ldots, b_{k}'),\bar{c'} = (c_{0}', c_{1}', \ldots, c_{k}')$ for $Q_2$. This requires $O(\log^2 n)$ non-deterministic bits ($\betacc{2}$). From the proof of Theorem~\ref{thm:QuasigroupIsoPolylogtime}, we may check in $\alphacc{1}\betacc{1}\DTISPpll$ whether each of $\bar{a}, \bar{b}, \bar{c}$ generates $Q_1$, and whether $\bar{a'}, \bar{b'}, \bar{c'}$ generates $Q_2$. 

So now suppose that each of $\bar{a}, \bar{b}, \bar{c}$ generates $Q_1$, and each of $\bar{a'}, \bar{b'}, \bar{c'}$ generates $Q_2$. Let $\alpha, \beta, \gamma : Q_1 \to Q_2$ be the bijections induced by the respective maps $\bar{a} \mapsto \bar{a'}, \bar{b} \mapsto \bar{b'}$, and $\bar{c} \mapsto \bar{c'}$. Now for each pair $(g, h) \in Q_1$, there exist $x, y, z \in \{0,1\}^{k}$ s.t.:
\begin{align*}
&g = a_{0}a_{1}^{x_{1}} \cdots a_{k}^{x_{k}}, \\
&h = b_{0}b_{1}^{y_{1}} \cdots b_{k}^{y_{k}} \\
&gh = c_{0}c_{1}^{z_{1}} \cdots c_{k}^{z_{k}}.
\end{align*}

We require $O(\log n)$ universally quantified co-nondeterministic bits ($\alphacc{1}$) to represent $(g, h)$, and $O(\log n)$ existentially quantified non-deterministic bits to represent $x, y, z$ ($\betacc{1}$). Given $x, y, z$, we may write down $g, h, gh$ in $\DTISPpll$ (following the proof of Theorem~\ref{thm:QuasigroupIsoPolylogtime}). Similarly, let:
\begin{align*}
&g' = a_{0}'(a_{1}')^{x_{1}} \cdots (a_{k}')^{x_{k}}, \\
&h' = b_{0}'(b_{1}')^{y_{1}} \cdots (b_{k}')^{y_{k}} \\
&\ell' = c_{0}(c_{1}')^{z_{1}} \cdots (c_{k}')^{z_{k}}.
\end{align*}

We can check in $\DTISPpll$ whether $g' h' = \ell'$. Similar to the proof of Theorem~\ref{thm:QuasigroupIsoPolylogtime}, we can combine the two $\alphacc{1}$ blocks into one $\alphacc{1}$ block, as the check whether $(\bar{a}, \bar{b}, \bar{c}) \mapsto (\bar{a'}, \bar{b'}, \bar{c'})$ preserves the quasigroup operation (the \emph{homotopism} condition) is independent of whether each of $\bar{a}, \bar{b}, \bar{c}$ generates $Q_1$, and each of $\bar{a'}, \bar{b'}, \bar{c'}$ generates $Q_2$. 

By Fact~\ref{fact:NTISP}, an $\NTISPpll$ machine can be simulated by a depth-$2$ $\qACz$ circuit, taking care of the final $\betacc{1}\DTISPpll$. The additional quantifiers $\betacc{2}\alphacc{1}$ then yield depth-$4$ $\qACz$ circuits. 

\item Given a Steiner triple system, we may obtain a quasigroup $Q$ in the following manner. For each block $\{a,b,c\}$ in the Steiner triple system, we include the products $ab = c, ba = c, ac = b, ca = b, bc = a, cb = a$. We can write down the multiplication table using an $\ACz$ circuit, which suffices to obtain a bound of $\betacc{2}\ACz(\DTISPpll)$. 

However, we can use the blocks of the Steiner triple system to look up the relevant products in $\NTISPpll$, and so we need not write down the multiplication table. This allows us to directly apply the proof of Theorem~\ref{thm:QuasigroupIsoPolylogtime}, which yields a bound of 
\[
\betacc{2}\alphacc{1}\betacc{1}\NTISPpll,
\]
and so we obtain depth-$4$ $\qACz$ circuits, as desired. (Analyzing the $\ACz$ reduction in terms of circuits, yielded only depth 6 in the end, whereas realizing the reduction can be computed in $\NTISPpll$ allows us to get depth 4 overall.)

\item Given two Latin square graphs $G_{1}, G_{2}$, we may recover corresponding Latin squares $L_{1}, L_{2}$ in $\ACz$ (cf. \cite[Lemma~3.9]{LevetLatinSquares}). Now $G_{1} \cong G_{2}$ if and only if $L_{1}$ and $L_{2}$ are main class isomorphic \cite[Lemma~3]{MillerTarjan}. We may decide whether $L_{1}, L_{2}$ are main class isomorphic using an $\ACz$-computable disjunctive truth-table reduction to \algprobm{Quasigroup Isomorphism} (cf. \cite[Remark~1.6]{LevetLatinSquares}). By Fact~\ref{fact:dtt},  $\betacc{2}\ACz(\DTISPpll)$ is closed under $\ACz$-computable dtt reductions, thus yielding the bound of \\$\betacc{2}\ACz(\DTISPpll)$.

\item This immediately follows from the fact that  isomorphism testing of Steiner $(t,t+1)$-designs is $\betacc{2}\ACz$-reducible to isomorphism testing of Steiner triple systems \cite{BabaiWilmes} (cf., \cite[Corollary~4.11]{LevetLatinSquares}).

\item Bose \cite{Bose} previously showed that pseudo-STS graphs with $> 67$ vertices are STS graphs. Now given a block-incidence graph from a Steiner $2$-design with bounded block size, we can recover the underlying design in \ACz \cite[Proposition~4.7]{LevetLatinSquares}. This yields the desired bound.\qedhere
\end{enumerate}
\end{proof}

\begin{remark} \label{rmk:LatinSquares}
 Latin square graphs are one of the four families of strongly regular graphs under Neumaier's classification \cite{Neumaier} (the other families being line graphs of Steiner $2$-designs, conference graphs, and graphs whose eigenvalues satisfy the claw bound). Levet \cite{LevetLatinSquares} previously established an upper bound of $\betacc{2}\ACz$ for isomorphism testing of conference graphs, which is a stronger upper bound than we obtain for Latin square graphs. In contrast, the best known algorithmic runtime for isomorphism testing of conference graphs is $n^{2\log(n) + O(1)}$ due to Babai \cite{BabaiCanonicalLabeling1}, whereas isomorphism testing of Latin square graphs is known to admit an $n^{\log(n) + O(1)}$-time solution \cite{MillerTarjan}.
\end{remark}

\section{Minimum Generating Set}

In this section, we consider the \algprobm{Minimum Generating Set} (\algprobm{MGS}) problem for quasigroups, as well as arbitrary magmas.

\subsection{MGS for Groups in \texorpdfstring{$\AC^1(\LogSpace)$}{AC1(LogSpace)}}

In this section, we establish the following.

\begin{theorem} \label{thm:MGSSAC2}
\algprobm{MGS} for groups belongs to $\AC^{1}(\LogSpace)$.
\end{theorem}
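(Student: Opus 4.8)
The plan is to show that each stage of the polynomial-time algorithm of Lucchini \& Thakkar \cite{LucchiniThakkar} can be realized by an $O(\log n)$-depth, unbounded-fan-in circuit whose oracle gates decide a \LogSpace\ language. Two facts make this plausible: \algprobm{Membership} for groups is in \LogSpace\ (via the Cayley-graph connectivity reduction of \cite{Reingold}, as noted in the preliminaries), and every chief series $1 = N_0 \leq N_1 \leq \cdots \leq N_k = G$ of normal subgroups has length $k \leq \log_2 n$, since each chief factor $N_i/N_{i-1}$ has order at least $2$. The number $O(\log n)$ of chief factors is exactly what will supply the circuit depth.

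First I would compute a chief series together with the isomorphism type of each factor. A minimal normal subgroup of a quotient $G/N$ is the normal closure $\ncl(x)$ of any of its nontrivial elements (minimality forces $\ncl(x)$ to equal the whole minimal normal subgroup), so finding one reduces to computing normal closures---a \algprobm{Membership}-style computation in \LogSpace---and selecting one minimal under inclusion. Iterating this inside the successive quotients $G/N_i$ (whose multiplication tables are \LogSpace-computable once coset representatives are fixed) produces the whole series in $k = O(\log n)$ sequential \LogSpace\ passes, i.e.\ in $\AC^1(\LogSpace)$; along the way I record whether each factor is an elementary abelian $p$-group or a product of isomorphic non-abelian simple groups, and, in the abelian case, its $G$-module structure, all of which is \ACz-computable from the tables.

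The core is then a greedy lift along the series, following the Gasch\"utz/crown theory of generation: maintain a minimum generating sequence of $G/N_i$ and, passing from $G/N_i$ to $G/N_{i-1}$, decide whether $d(G/N_{i-1})$ equals $d(G/N_i)$ or $d(G/N_i)+1$ and update the lifted generators accordingly. A Frattini factor ($N_i/N_{i-1} \leq \Phi(G/N_{i-1})$) forces equality, so lifts of the current generators already generate; for a complemented factor the decision is governed by the crown data---the number of chief factors $G$-isomorphic to $N_i/N_{i-1}$ and, in the abelian case, a rank/fixed-space computation over $\F_p$ in the module $N_i/N_{i-1}$. Crucially these modules have dimension $O(\log n)$, so the required linear algebra runs on $O(\log n)\times O(\log n)$ matrices over $\F_p$ with $p \leq n$; by the $\mathsf{NC}^2$ determinant/rank algorithms this costs depth $O(\log^2 \log n) = o(\log n)$, comfortably inside the $\AC^1$ budget, while the non-abelian case is handled by an analogous count of complements. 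Each level's decision is a constant-depth combination of \LogSpace\ oracle calls and this cheap linear algebra, but the lift at level $i$ consumes the generating sequence produced at level $i-1$; the $k = O(\log n)$ levels therefore compose sequentially, giving total depth $O(\log n)$. Recovering an actual minimum generating set (not merely the number $d(G)$) amounts to carrying explicit coset representatives of the generators through the lift.

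The step I expect to be the main obstacle is this per-factor lifting decision. The routine ingredients---membership, order finding, normal closures, and rank over $\F_p$ on the tiny chief-factor modules---all sit comfortably in \LogSpace\ or in low-depth circuits; the difficulty is organizational and structural, namely showing that the Lucchini--Thakkar case analysis (Frattini vs.\ complemented, abelian vs.\ non-abelian, and the correct aggregation of crown counts across all $G$-isomorphic chief factors) is uniformly realizable as a depth-$O(1)$ oracle computation at each level, so that the sequential composition over the $O(\log n)$ levels does not exceed $\AC^1$ depth. In other words, the content is to reorganize a general polynomial-time dynamic program into an $O(\log n)$-deep chain of \LogSpace\ queries, rather than to introduce new algebra.
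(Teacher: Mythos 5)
Your overall architecture matches the paper's: compute a chief series in $\AC^1(\LogSpace)$ via normal closures and membership tests (this is exactly Lemma~\ref{lem:ChiefSeriesSAC2}), then lift a minimum generating sequence level by level down the $O(\log n)$ chief factors, spending constant oracle depth per level. The base case and the accounting of depth are also as in the paper. However, the step you yourself flag as ``the main obstacle''---the per-factor lifting decision---is precisely where the paper's proof has its one real idea, and your proposal does not supply it. You propose to decide whether $d(G/N_{i-1})$ equals $d(G/N_i)$ or $d(G/N_i)+1$ by computing crown data, counting $G$-isomorphic chief factors, and doing rank/fixed-space linear algebra over $\F_p$ in the module $N_i/N_{i-1}$; but you give no reduction of that crown computation to a constant-depth combination of \LogSpace\ queries, and the aggregation ``across all $G$-isomorphic chief factors'' itself requires deciding $G$-module isomorphism of factors, which you do not address. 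As written, the heart of the argument is an acknowledged placeholder.

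The paper sidesteps all of this machinery. The point of invoking Lucchini--Menegazzo and \cite[Corollary 13]{LucchiniThakkar} is not to compute crowns but to bound the \emph{number and shape of candidate lifts}: in the abelian case a minimum generating sequence of $G/N_{i-1}$ is obtained either by replacing a single $g_j$ with $g_j n$ for some $n \in N$ (at most $d\cdot|N|$ candidates) or by appending an arbitrary nonidentity element of $N$ (one more candidate); in the non-abelian case the candidates have the form $g_1 n_1,\dotsc,g_t n_t, g_{t+1},\dotsc,g_d$ with $t$ small enough that there are only polynomially many of them. Each level then reduces to writing down $\poly(n)$ explicit candidate sets in $\ACz$ and testing each with a \LogSpace\ membership oracle---no linear algebra, no isomorphism testing of modules, no case analysis beyond abelian versus non-abelian. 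If you want to complete your proof along the route you sketched, you would need to establish that the crown-theoretic quantities are themselves $\AC^0(\LogSpace)$-computable, which is a substantial additional argument; the enumeration route makes that unnecessary.
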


We begin with the following lemma.

\begin{lemma} \label{lem:ChiefSeriesSAC2}
Let $G$ be a group. We can compute a chief series for $G$ in $\AC^{1}(\LogSpace)$.
\end{lemma}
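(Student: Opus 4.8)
The plan is to build the chief series one term at a time, exploiting that its length is at most $\log_2|G| = O(\log n)$ (each chief factor is nontrivial, so $|N_{i+1}| \geq 2|N_i|$). Starting from $N_0 = 1$, at each stage I would pass from $N_i$ to $N_{i+1}$ by finding a minimal normal subgroup $N_{i+1}/N_i$ of the quotient $G/N_i$. If each such step can be realized by a logspace oracle, then the whole computation becomes $O(\log n)$ sequential phases of logspace oracle gates, i.e.\ an $\AC^1(\LogSpace)$ circuit of depth $O(\log n)$: layer $i$ reads the characteristic vector of $N_i$ and, using $n$ oracle gates (one per element of $G$), writes the characteristic vector of $N_{i+1}$.

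The heart of the argument is to show that the map $N \mapsto N'$, where $N'$ is a canonically chosen minimal normal subgroup of $G/N$ pulled back to $G$, is logspace-computable bitwise, i.e.\ the language $\{(G, N, j) : j \in N'\}$ lies in $\LogSpace$. The key structural fact is that every minimal normal subgroup of $G/N$ equals $\ncl_{G/N}(\bar x)$ for some nonidentity $\bar x$, and conversely any member of $\{\ncl_{G/N}(\bar x) : \bar x \neq 1\}$ of minimum cardinality is minimal normal. Working in $G$ and using that $N$ is normal, $\ncl_{G/N}(\bar x)$ lifts to $\ncl_G(\{x\} \cup N) = \langle \{gxg^{-1} : g \in G\} \cup N\rangle$ for a representative $x \in G \setminus N$. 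Membership in the subgroup generated by an (efficiently describable) set is in $\LogSpace$ by the results quoted in the preliminaries (Reingold~\cite{Reingold}, Barrington--McKenzie~\cite{BarringtonMcKenzie}), so for fixed $x$ I can decide $j \in \ncl_G(\{x\} \cup N)$ in logspace, and I can compute $|\ncl_G(\{x\} \cup N)|$ by iterating over all $n$ candidate elements, running the membership decider on each and maintaining an $O(\log n)$-bit counter. To produce $N'$ I would then select the lexicographically least $x \in G \setminus N$ minimizing this cardinality and return $\ncl_G(\{x\} \cup N)$.

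The main obstacle is keeping this cardinality computation and the ensuing selection of the minimizing $x$ inside $\LogSpace$, rather than escaping into a counting class such as $\#\LogSpace$. The resolution is that membership is a \emph{deterministic} logspace predicate: iterating over $x$ (and, inside, over $j$) while recomputing the membership decider and reusing its workspace keeps the total space at $O(\log n)$, since only the loop indices, a running minimum, and the current witness must be retained. Composing the (logspace) search for the optimal $x$ with the (logspace) final membership test for $j$ stays in $\LogSpace$ by closure of logspace under composition, establishing that the oracle language is in $\LogSpace$.

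Finally I would assemble the circuit: run $\lceil \log_2 n\rceil$ layers of the above oracle, defining $N' := N$ whenever $N = G$ so that the sequence stabilizes once it reaches $G$. The successive layer values $N_0 \subseteq N_1 \subseteq \cdots$ are then exactly the terms of a chief series (with repetitions only after reaching $G$), which the circuit emits via multiple output gates. The depth is $O(\log n)$ and the size is polynomial with logspace oracle gates, giving the claimed $\AC^1(\LogSpace)$ bound.
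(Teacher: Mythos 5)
Your proposal is correct and follows essentially the same route as the paper: both identify a minimal normal subgroup of the current quotient as a normal closure $\ncl(x)$ computed via the logspace \algprobm{Membership} test, and iterate $O(\log n)$ times down the chief series inside an $\AC^1(\LogSpace)$ circuit. The only (immaterial) differences are organizational: the paper extracts \emph{all} inclusion-minimal normal closures at once, chains their products into a chief series of $\Soc(G)$, and recurses on $G/\Soc(G)$, whereas you add one chief factor per layer and select a minimum-cardinality normal closure rather than an inclusion-minimal one.
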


\begin{proof}
We will first show how to compute the minimal normal subgroups $N_{1}, \ldots, N_{\ell}$ of $G$. We proceed as follows. We first note that the normal closure $\text{ncl}(x)$ is the subgroup generated by $\{ gxg^{-1} : g \in G \}$. Now we may write down the elements of $\{ gxg^{-1} : g \in G \}$ in $\ACz$, and then compute $\text{ncl}(x)$ in $\LogSpace$ using a membership test. Now in $\LogSpace$, we may identify the minimal (with respect to inclusion) subgroups amongst those obtained. 

Given $N_{1}, \ldots, N_{\ell}$, we may easily in $\LogSpace$ compute $\prod_{i=1}^{k} N_{i}$ for each $k \leq \ell$. In particular, we may compute $\Soc(G)$ in $\LogSpace$. 
We claim that $\prod_{i=1}^k N_i$, for $k=1,\dotsc,\ell$, is in fact a chief series of $\Soc(G)$ (which will then fit into a chief series for $G$). To see this, we have that $\prod _{i=1} ^k N_i$ is normal in $\left( \prod _{i=1} ^k N_i \right) \times N_{k+1}$ and that $(\prod _{i=1} ^{k+1} N_i ) / (\prod _{i=1} ^{k} N_i ) \cong N_{k+1}$ is a normal subgroup of $G / \prod_{i=1} ^k N_i $. By the Lattice Isomorphism Theorem, $(\prod _{i=1} ^{k+1} N_i ) / (\prod _{i=1} ^{k} N_i )$ is in fact minimal normal in $G / \prod_{i=1} ^k N_i $.

We iterate on this process starting from $G/\Soc(G)$. Note that, as we have computed  $\Soc(G)$ from the previous paragraph, we may write down the cosets for $G/\Soc(G)$ in $\ACz$. Furthermore, given a subgroup $H \leq G/\Soc(G)$, we may write down the elements of  $H\Soc(G)$ in $\ACz$. By the above, the minimal normal subgroups of a group are computable in $\LogSpace$. 
As there are at most $\log n$ terms in a chief series, we may compute a chief series for $G$ in $\AC^{1}(\LogSpace)$, as desired.  
(Recall that we use this notation to mean an $\AC^1$ circuit with oracle gates calling a $\LogSpace$ oracle, not function composition such as $\AC^1 \circ \LogSpace$.)
\end{proof}

We now prove the Theorem~\ref{thm:MGSSAC2}.

\begin{proof}[Proof of Theorem~\ref{thm:MGSSAC2}]
By \Lem{lem:ChiefSeriesSAC2}, we can compute a chief series for $G$ in $\AC^{1}(\LogSpace)$. So let $N_1 \triangleleft \cdots \triangleleft N_k$ be a chief series $S$ of $G$.  Lucchini and Thakkar \cite{LucchiniThakkar} showed that minimum generating sets of $G / N_{i+1}$ have specific structure depending on whether or not $N_{i+1}/N_{i}$ is Abelian. We proceed inductively down $S$ starting from $N_{k-1}$. As $G / N_{k-1}$ is a finite simple group, and hence at most $2$-generated, we can write all $\binom{n}{2}$ possible generating sets in parallel with a single $\AC^0$ circuit and test whether each generates the group with a membership test. This can be done in $\LogSpace$. 

Fix $i < k$. Suppose we are given a minimum generating sequence $g_{1}, \ldots, g_{d} \in G$ for $G/N_{i}$. We will construct a minimum generating sequence for $G/N_{i-1}$ as follows. We consider the following cases:
\begin{itemize}
\item \textbf{Case 1:} Suppose that $N = N_{i}/N_{i-1}$ is Abelian. By \cite[Theorem~4]{LucchiniMenegazzo}, we have two cases: 
\begin{itemize}

\item \textbf{Case 1a:} We have $G/N_{i-1} = \langle g_1, \cdots, g_i, g_j n, g_{j+1}, \cdots, g_d \rangle$ for some $j \in [d]$ and some $n \in N$ (possibly $n = 1$). There are at most $d \cdot |N|$ generating sets to consider in this case and we can test each of them in $\LogSpace$. 

\item \textbf{Case 1b:} If Case 1a does not hold, then we necessarily have that \\ $G/N_{i-1} = \langle g_1, \cdots, g_d, x \rangle$ for any non-identity element $x \in N$.
\end{itemize}

Note that there are at most $d \cdot |N| + 1$ generating sets to consider, we may construct a minimum generating set for $G/N_{i-1}$ in $\LogSpace$ using a membership test.

\item \textbf{Case 2:} Suppose instead that $N = N_{i}/N_{i-1}$ is non-Abelian. Then by \cite[Corollary 13]{LucchiniThakkar}, the following holds. Let $\eta_{G}(N)$ denote the number of factors in a chief series with order $|N|$. Let $u = \max\{d, 2\}$ and $t = \min\{u, \lceil \frac{8}{5} + \log_{\lvert N \rvert}\eta_G(N) \rceil\}$. Then there exist $n_{1}, \ldots, n_{t} \in N_{i-1}$ (possibly $n_1 = \cdots = n_t = 1$) such that $G/N_{i-1} = \langle g_1n_1, \cdots, g_tn_t, g_{t+1}, \cdots, g_d \rangle$.

By \cite[Corollary 13]{LucchiniThakkar}, there are at most $\lvert N \rvert ^{\lceil \frac{8}{5} + \log_{\lvert N \rvert}\eta_G(N)  \rceil }$ generating sets of this form. As $\log_{|N|} \eta_{G}(N) \in O(1)$, we may write down these generating sets in parallel with a single $\AC^0$ circuit and test whether each generate $G/N_{i-1}$ in $\LogSpace$ using \algprobm{Membership}.
\end{itemize}

\noindent Descending along the chief series in this fashion, we compute quotients $N_i / N_{i-1}$ and compute a generating set for $G/N_{i-1}$. The algorithm terminates when we've computed a generating set for $G/N_0 = G$. Since a chief series has $O(\log n)$ terms, this algorithm requires $O(\log n)$ iterations and each iteration is computable in $\LogSpace$. Hence, we have an algorithm for \algprobm{MGS} in $\AC^1(\LogSpace)$.
\end{proof}

Improving upon the $\AC^{1}(\LogSpace)$ bound on \algprobm{MGS} for groups appears daunting. It is thus natural to inquire as to families of groups where \algprobm{MGS} is solvable in complexity classes contained within $\AC^{1}(\LogSpace)$. To this end, we examine the class of nilpotent groups. Arvind \& Tor\'an previously established a polynomial-time algorithm for nilpotent groups \cite[Theorem~7]{ArvindToran}. We improve their bound as follows.

\begin{proposition} \label{prop:Nilpotent}
For a nilpotent group $G$, we can compute $d(G)$ in
\[
\LogSpace \cap \ACz(\NTISPpll).
\]
\end{proposition}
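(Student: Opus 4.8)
The plan is to reduce the computation of $d(G)$ for a nilpotent $G$ to a dimension computation on the Frattini quotients of its Sylow subgroups, and then to realize that computation in each of the two target classes. First I would exploit the structure theorem: a finite nilpotent group is the internal direct product of its Sylow subgroups, $G = \prod_{p \mid n} P_p$, and since these have pairwise coprime orders, $d(G) = \max_{p \mid n} d(P_p)$ (the upper bound follows by pairing up generators across the factors via the Chinese Remainder Theorem, the lower bound because each $P_p$ is a quotient of $G$). Computationally, I would factor $n = \abs{G}$ in $\ACz$ by \Lem{lem:factor}, and for each prime $p$ with $p^a \| n$ identify the Sylow subgroup $P_p = \{\, g \in G : g^{p^a} = 1 \,\} = \{\, g : \ord{g} \text{ is a power of } p \,\}$. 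By \Lem{lem:fastexp} the predicate $g \mapsto [g^{p^a} = 1]$ lies in $\DTISPpll$, so each $P_p$ is identified in $\ACz(\DTISPpll) \subseteq \LogSpace \cap \ACz(\NTISPpll)$. As there are at most $\log_2 n$ primes and taking a maximum over that many $O(\log n)$-bit numbers is cheap, it remains to compute $d(P)$ for a single $p$-group $P = P_p$ in each class.

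The unifying reduction is through the Frattini subgroup. By the Burnside Basis Theorem, $\Phi(P) = P^p[P,P]$ and $P/\Phi(P) \cong (\Z/p\Z)^{d(P)}$, so $d(P) = \log_p[P : \Phi(P)] \le \log_2 n$. The generating set $S = \{[g,h] : g,h \in P\} \cup \{g^p : g \in P\}$ of $\Phi(P)$ has an $\ACz$-computable membership indicator (inverses and products of elements indexed by $O(\log n)$-bit numbers are $\ACz$-computable, and $g \mapsto g^p$ is in $\DTISPpll$ by \Lem{lem:fastexp}), so we may write down $S$ and reduce testing ``$x \in \Phi(P)$'' to \algprobm{Membership}. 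For the $\LogSpace$ bound this is now easy: \algprobm{Membership} for groups is in $\LogSpace$, so I would test $x \in \Phi(P)$ in $\LogSpace$, count $\abs{\Phi(P)}$ with a $\LogSpace$ counter by sweeping over all $g \in G$, and output $d(P) = \log_p(\abs{P}/\abs{\Phi(P)})$ by arithmetic on $O(\log n)$-bit numbers, finishing with the maximum over $p$ — all sequential and hence in $\LogSpace$.

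For the $\ACz(\NTISPpll)$ bound I would instead use that $\Phi(P)$ is itself nilpotent, so membership in $\Phi(P) = \langle S \rangle$ lies in $\NTISPpll$ by the $\NTISPpll$ algorithm for \algprobm{Membership} in nilpotent groups established above. Since $d(P) = \log_p[P : \Phi(P)] \le \log_2 n$ is \emph{small}, I would determine it by deciding, for each candidate $k \in \{1, \dots, \lfloor \log_2 n \rfloor\}$ \emph{in parallel}, whether $\dim_{\F_p}(P/\Phi(P)) \ge k$ — equivalently, whether $P$ contains $k$ elements that are $\F_p$-linearly independent modulo $\Phi(P)$ — and then selecting the largest such $k$ with an $\ACz$ circuit over the $O(\log n)$ candidates.

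The main obstacle is realizing the single test ``$\dim_{\F_p}(P/\Phi(P)) \ge k$'' within the stringent $\NTISPpll$ budget. The obvious certificate is an independent sequence $g_1, \dots, g_k$, but there are two traps: storing the whole sequence costs $\Theta(\log^2 n)$ space, overshooting the $O(\log n)$ space bound, while the naive alternatives — counting $\abs{\Phi(P)}$ or computing an $\F_p$-rank — are unavailable in $\ACz(\NTISPpll)$, since either would compute \algprobm{Parity}. The resolution I envisage is to guess the $g_i$ sequentially, spending only $k \log n \le \log^2 n$ nondeterministic bits in total (which is polylogarithmic, hence within budget), and to verify the independence chain $\Phi(P) = W_0 \subsetneq W_1 \subsetneq \cdots \subsetneq W_k$ (with $W_i = \langle g_1, \dots, g_i\rangle\,\Phi(P)$) while maintaining only a bounded amount of running state rather than the full sequence, in the spirit of \Obs{obs:logpowerbasis}. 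Making the non-membership check ``$g_i \notin W_{i-1}$'' work without ever holding all of $g_1, \dots, g_{i-1}$ in $O(\log n)$ workspace — leveraging the power-product representations guaranteed for nilpotent groups to re-derive the needed data on the fly — is the crux of the argument and the step I expect to require the most care.
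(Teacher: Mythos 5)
Your overall skeleton --- the Sylow decomposition with $d(G) = \max_p d(P_p)$, the reduction to the Frattini quotient via the Burnside Basis Theorem, and the entire $\LogSpace$ half (test membership in $\Phi(P) = P^p[P,P]$ with the logspace group-membership algorithm, count $\abs{\Phi(P)}$, read off $d(P)$) --- matches the paper's proof. The front half of your $\ACz(\NTISPpll)$ argument is also fine: obtaining the indicator vector of $\Phi(P)$ via positive $\NTISPpll$ membership queries is exactly what the paper does, except that the paper gets membership in $[P,P]$ directly from the fact that every element of $[P,P]$ is a product of at most $\log\abs{P}$ commutators (citing Serre), whereas you invoke the general nilpotent-membership theorem; both are valid.

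The gap is in your back half, and it is not merely a matter of care. Certifying $\dim_{\F_p}(P/\Phi(P)) \ge k$ requires verifying, for each $i$, that $g_i \notin \langle g_1,\dots,g_{i-1}\rangle\,\Phi(P)$. This is a \emph{non}-membership condition, i.e.\ co-nondeterministic: a single $\NTISPpll$ computation that has itself nondeterministically guessed $g_1,\dots,g_k$ cannot verify it, no matter how cleverly you stream the $g_i$ to save space --- the obstruction is the direction of the quantifier, not the $O(\log n)$ space bound. The only way to regain access to negation is to move the choice of $(g_1,\dots,g_k)$ out of the oracle machine and into the circuit, where oracle gates may be negated; but that is a disjunction over $n^k = n^{\Theta(\log n)}$ tuples, which exceeds polynomial size, so the approach lands in $\qACz$-style classes rather than $\ACz(\NTISPpll)$. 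The paper sidesteps rank certification entirely: having written down the indicator vector of $\Phi(P)$ with one positive $\NTISPpll$ oracle gate per element, it recovers $d(P)$ from $\abs{P/\Phi(P)}$ by ordinary post-processing and iterated multiplication of $p$. (Your worry that this final step amounts to counting is fair to raise against the paper, but replacing it with the independence-chain test as you propose does not repair matters within the stated class.)
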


\begin{proof}
Let $G$ be our input group. Recall that a finite nilpotent group is the direct product of its Sylow subgroups (which by the Sylow theorems, implies that for a given prime $p$ dividing $|G|$, the Sylow $p$-subgroup of $G$ is unique). We can, in $\ACz(\DTISPpll)$ (using \Cor{cor:fastexp}), decide if $G$ is nilpotent; and if so, compute its Sylow subgroups. So we write $G = P_{1} \times P_{2} \times \cdots \times P_{\ell}$, where each $P_{i}$ is the Sylow subgroup of $G$ corresponding to the prime $p_{i}$. Arvind \& Tor\'an (see the proof of \cite[Theorem~7]{ArvindToran}) established that $d(G) = \max_{1 \leq i \leq \ell} d(P_{i})$. Thus, it suffices to compute $d(P_{i})$ for each $i \in [\ell]$.

The Burnside Basis Theorem provides that $\Phi(P) = P^{p}[P,P]$. We may compute $P^{p}$ in \\$\LogSpace \cap \ACz(\DTISPpll)$ (the latter using \Cor{cor:fastexp}). We now turn to computing $[P,P]$.
Using a membership test, we can compute $[P,P]$ in $\LogSpace$.
By \cite[I.§4~Exercise~5]{Serre}, every element in $[P,P]$ is the product of at most $\log |P|$ commutators. 
Therefore, we can also decide membership in $[P,P]$ in $\NTISPpll$, and so we can write down the elements of $[P,P]$ in $\ACz(\NTISPpll)$.

Thus, we may compute $\Phi(P)$ in $\LogSpace \cap \ACz(\NTISPpll)$. Given $\Phi(P)$, we may compute $|P/\Phi(P)|$ in $\ACz$. Thus, we may recover $d(P)$ from $|P/\Phi(P)|$ in $\ACz$, by iterated multiplication of the prime divisor $p$ of $|P|$. As the length of the encoding of $p$ is at most $\log |P|$ and we are multiplying $p$ by itself $\log |P|$ times, iterated multiplication is $\AC^{0}$-computable. Thus, in total, we may compute $d(P)$ in $\LogSpace \cap \ACz(\NTISPpll)$. It follows that for an arbitrary nilpotent group $G$, we may compute $d(G)$ in $\LogSpace \cap \ACz(\NTISPpll)$. 
\end{proof}

\begin{remark}
While \Prop{prop:Nilpotent} allows us to compute $d(G)$ for a nilpotent group $G$, the algorithm is non-constructive. It is not clear how to find such a generating set in $\LogSpace$. We can, however, compute such a generating set in  $\AC^{1}(\NTISPpll)$. Note that this bound is incomparable to $\AC^{1}(\LogSpace)$. We outline the algorithm here. 

The Burnside Basis Theorem provides that for a nilpotent group $G$, (i) every generating set of $G$ projects to a generating set of $G/\Phi(G)$, and (ii) for every generating set $S$ of $G/\Phi(G)$, \emph{every} lift of $S$ is a generating set of $G$. Furthermore, every minimum generating set of $G$ can be obtained from the Sylow subgroups in the following manner. Write $G = P_{1} \times \cdots \times P_{\ell}$, where $P_{i}$ is the Sylow $p_{i}$-subgroup of $G$. Suppose that $P_{i} = \langle g_{i1}, \ldots, g_{ik}\rangle$ (where we may have $g_{ij} = 1$ for certain values of $j$). Write $g_{j} = \prod_{i=1}^{\ell} g_{ij}$. As in the proof of \cite[Theorem~7]{ArvindToran} we obtain $G = \langle g_{1}, \ldots, g_{k} \rangle.$

Given generating sets for $P_{1}, \ldots, P_{\ell}$, we may in $\LogSpace \cap \ACz(\NTISPpll)$ recover a generating set for $G$. Thus, it suffices to compute a minimum generating set for $P/\Phi(P) \cong (\mathbb{Z}/p\mathbb{Z})^{d(P)}$, where $P$ is a $p$-group.  Note that we may handle each Sylow subgroup of $G$ in parallel. To compute a minimum generating set of $P/\Phi(P)$, we use the generator enumeration strategy. As $P/\Phi(P)$ is Abelian, we may check in $\ACz(\NTISPpll)$ (by \Cor{cor:cayley_abelian}) whether a set of elements generates the group. As $d(P) \leq \log |P|$, we have $\log |P|$ steps where each step is $\ACz(\NTISPpll)$-computable. Thus, we may compute a minimum generating set for $P$ in $\AC^{1}(\NTISPpll)$, as desired. 
\end{remark}

\subsection{\algprobm{MGS} for Quasigroups}
In this section, we consider the \algprobm{Minimum Generating Set} problem for quasigroups. Our goal is to establish the following.

\begin{theorem} \label{thm:MGSqAC0}
For \algprobm{MGS} for quasigroups, 
\renewcommand{\theenumi}{\alph{enumi}}
\begin{enumerate}
\item \label{thm:MGSqAC0:NTIME} The decision version belongs to $\betacc{2}\alphacc{1}\betacc{1}\DTISPpll \subseteq \mathsf{DSPACE}(\log^2 n)$;

\item \label{thm:MGSqAC0:search} The search version belongs to $\qACz \cap \mathsf{DSPACE}(\log^2 n)$.
\end{enumerate}
\end{theorem}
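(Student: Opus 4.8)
My plan is to reduce \algprobm{MGS} to certifying that a guessed set generates the whole quasigroup, and then to make that certificate cheap enough in nondeterminism to land in $\betacc{2}\alphacc{1}\betacc{1}\DTISPpll$. Since every quasigroup of order $n$ has a generating set of size at most $\lceil\log n\rceil$ \cite{MillerTarjan}, for the decision version I would first use $O(\log^2 n)$ existential bits ($\betacc{2}$) to guess a candidate set $X$ of size at most the bound $k$ given in the input. The naive way to verify $\langle X\rangle = G$ is to check, for every $g\in G$ ($\alphacc{1}$), that $g\in\langle X\rangle$ via \Thm{thm:main-cqm}; but that membership test guesses a straight-line program of length $\Theta(\log^2 n)$, costing another $\betacc{2}$ and yielding only the weaker bound $\betacc{2}\alphacc{1}\betacc{2}\DTISPpll$. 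The whole point is to shave the innermost block down to $\betacc{1}$, and for this I would use a cube-like generating sequence coming from the Reachability Lemma for quasigroups (our adaptation of \cite{BabaiSzemeredi}).

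Concretely, in the $\betacc{2}$ phase I would guess, in addition to $X$, a cube-like generating sequence $\bar z=(z_0,\dots,z_t)$ with $t\le\lceil\log_2 n\rceil$, together with a per-step certificate that each $z_i$ lies in $\langle X\rangle$: following the inductive construction of the Reachability Lemma, for each $i$ this is the data witnessing that $z_i$ is either an element of $X$ or of the form $\mathrm{op}(a\backslash b,\,c\backslash d)$ for a single quasigroup operation $\mathrm{op}\in\{\cdot,/,\backslash\}$ and cube products $a,b,c,d$ over $z_0,\dots,z_{i-1}$ (each specified by an exponent vector). Over all $O(\log n)$ steps this is $O(\log^2 n)$ guessed bits, so it still fits inside the same $\betacc{2}$ block. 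Let $\Psi$ be the $\DTISPpll$ predicate asserting that $z_0\in X$, that $|X|\le k$, and that every step certificate checks out; each such check is a constant number of left-to-right cube products and divisions, computable in $\DTISPpll$ by the same table-lookup evaluation as in the proof of \Thm{thm:QuasigroupIsoPolylogtime}. Given $\Psi$, every $z_i$, and hence every left-quotient $a\backslash b$ with $a,b\in\mathrm{Cube}(\bar z)$, lies in $\langle X\rangle$, so it remains only to certify that these quotients exhaust $G$. That I would do with one $\alphacc{1}$ block (for all $g\in G$, using $O(\log n)$ bits) and one $\betacc{1}$ block (guessing two exponent vectors $e,e'\in\{0,1\}^{t}$, i.e.\ $O(\log n)$ bits), verifying $g=P(z_0z_1^{e_1}\cdots z_t^{e_t})\backslash P(z_0z_1^{e'_1}\cdots z_t^{e'_t})$ in $\DTISPpll$. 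Since $\Psi$ does not depend on $g,e,e'$, it can be pulled inside both quantifier blocks, giving $\betacc{2}\alphacc{1}\betacc{1}\DTISPpll$, exactly as claimed. The containment in $\mathsf{DSPACE}(\log^2 n)$ then follows by the standard deterministic simulation: cycle through the $2^{O(\log^2 n)}$ settings of the $\betacc{2}$ bits with an $O(\log^2 n)$-bit counter (which dominates the space budget), and through the $\poly(n)$ settings of the $\alphacc{1}$ and $\betacc{1}$ bits together with the $\LogSpace$ base computation using only $O(\log n)$ further space.

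For the search version I would first locate the minimum size $m=d(G)$ by running the decision procedure for each $k\in\{1,\dots,\lceil\log n\rceil\}$; both target classes absorb this $O(\log n)$-fold combination ($\qACz$ trivially, and $\mathsf{DSPACE}(\log^2 n)$ by reusing space across the calls). To actually output a set, I would exploit that the classes are size- or space-bounded rather than time-bounded. In $\qACz$ there are only quasi-polynomially many candidate sets $X$ of size $m$ and quasi-polynomially many cube-like sequences $\bar z$, so a constant-depth quasi-polynomial-size circuit can, in parallel, run the certificate check above for every pair $(X,\bar z)$ and output the lexicographically least $X$ whose certification succeeds for some $\bar z$. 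In $\mathsf{DSPACE}(\log^2 n)$ the same is achieved by deterministically scanning candidate sets in lexicographic order (each described by $O(\log^2 n)$ bits) and printing the first one that generates $G$, where the generation check cycles over all $\bar z$ within $O(\log^2 n)$ space as above.

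The main obstacle I anticipate is precisely the quantifier bookkeeping that keeps the innermost block at $\betacc{1}$ rather than $\betacc{2}$: this is where the cube-like generating sequence, as opposed to one straight-line program per element as in \Thm{thm:main-cqm}, is essential, since a single guessed sequence of length $O(\log n)$ serves all $n$ membership queries via a one-shot cube-division witness of only $O(\log n)$ bits. The correctness of this step rests entirely on the Reachability Lemma for quasigroups guaranteeing such a short sequence whose cube-quotients exhaust $G$ and all of whose $z_i$ are reachable from $X$ by few operations; once that is in hand, everything else reduces to routine products and divisions in $\DTISPpll$ together with the standard simulations recorded above.
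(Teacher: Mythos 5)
Your proposal is correct and follows essentially the same route as the paper: guess the candidate set $X$ together with a cube-like generating sequence and its per-step reachability certificates in the outer $\betacc{2}$ block (so the inner membership check only needs $\betacc{1}$ bits for the two exponent vectors), verify everything in $\DTISPpll$, and handle the search version by enumeration plus the decision procedure. The key idea of amortizing the Reachability Lemma certificate across all $n$ membership queries is exactly the paper's argument.
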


In the paper in which they introduced (polylog-)limited nondeterminism, Papadimitriou and Yannakakis conjectured that MGS for quasigroups was $\betacc{2}\P$-complete \cite[after Thm.~7]{PY}. While they did not specify the type of reductions used, it may be natural to consider polynomial-time many-one reductions. Theorem~\ref{thm:MGSqAC0} refutes two versions of their conjecture under other kinds of reductions, that are incomparable to polynomial-time many-one reductions: $\qACz$ reductions unconditionally and polylog-space reductions conditionally. We note that their other $\betacc{2}\P$-completeness result in the same section produces a reduction that in fact can be done in logspace and (with a suitable, but natural, encoding of the gates in a circuit) also in $\ACz$, so our result rules out any such reduction for \algprobm{MGS}. 
(We also note: assuming $\EXP \neq \PSPACE$, showing that this problem is complete under polynomial-time reductions would give a separation between poly-time and log-space reductions, an open problem akin to $\P \neq \LogSpace$.)

\renewcommand{\theenumi}{\alph{enumi}}
\begin{corollary}
\algprobm{MGS} for quasigroups and \algprobm{Quasigroup Isomorphism} are 
\begin{enumerate}
\item not $\betacc{2} \P$-complete under $\qACz[p]$ Turing reductions, or even such reductions up to depth $o(\log n / \log \log n)$.

\item not $\betacc{2} \P$-complete under polylog-space Turing reductions unless $\mathsf{EXP} = \mathsf{PSPACE}$.
\end{enumerate}
\end{corollary}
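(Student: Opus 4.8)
The plan is to prove both parts by the template used elsewhere in this section: place each problem in a complexity class $\mathcal{C}$ small enough that $\betacc{2}\P \not\subseteq \mathcal{C}$, observe that $\mathcal{C}$ (or a mild enlargement of it) is closed under the relevant reductions, and conclude that no $\betacc{2}\P$-complete problem can live in $\mathcal{C}$. For part (a) the class is $\qACz$ and the obstruction is a $\mathsf{Mod}_q$ function; for part (b) the class is $\mathsf{DSPACE}(\log^2 n)$ and the obstruction is a padding argument collapsing $\EXP$ to $\PSPACE$.

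For part (a), I would first record that both problems lie in $\qACz$: \algprobm{MGS} for quasigroups by \Thm{thm:MGSqAC0} together with Fact~\ref{fact:NTISP}, and \algprobm{Quasigroup Isomorphism} by \Thm{thm:QuasigroupIsoPolylogtime}. Since $\qACz \subseteq \qACz[p]$, it suffices to show that the class of languages computed by quasipolynomial-size circuits with $\mathsf{Mod}_p$ gates and depth $o(\log n / \log\log n)$ is closed under the corresponding Turing reductions and fails to contain all of $\betacc{2}\P$. Closure is by inlining: a Turing reduction of depth $d(n)$ with oracle gates for a problem $A \in \qACz$ becomes, upon replacing each oracle gate by the constant-depth $\qACz$ circuit for $A$, a quasipolynomial-size $\mathsf{Mod}_p$-circuit of depth $O(d(n))$, so the regime $d(n) = o(\log n/\log\log n)$ is preserved (and constant depth stays constant). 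Finally I would invoke Razborov--Smolensky \cite{Razborov,Smolensky87algebraicmethods}: for distinct primes $p,q$, any depth-$d$ $\mathsf{Mod}_p$-circuit computing $\mathsf{Mod}_q$ has size $2^{\Omega(n^{1/2d})}$, so quasipolynomial size forces $d = \Omega(\log n/\log\log n)$. Since $\mathsf{Mod}_q \in \P \subseteq \betacc{2}\P$ is therefore not computed in the target class, no $\betacc{2}\P$-complete problem can reduce to a problem in $\qACz$ by such a reduction (the constant-depth $\qACz[p]$ case is the special case $d = O(1)$).

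For part (b), I would record that both problems lie in $\mathsf{DSPACE}(\log^2 n)$: \algprobm{MGS} for quasigroups by \Thm{thm:MGSqAC0}(\ref{thm:MGSqAC0:NTIME}), and \algprobm{Quasigroup Isomorphism} by \cite{ChattopadhyayToranWagner} (or \Thm{thm:QuasigroupIsoPolylogtime}). If either were $\betacc{2}\P$-complete under polylog-space Turing reductions, then $\betacc{2}\P$ would sit inside the closure of $\mathsf{DSPACE}(\log^2 n)$ under such reductions, namely $\mathsf{polyL} = \bigcup_{k} \mathsf{DSPACE}(\log^k n)$. A standard padding argument then forces $\EXP = \PSPACE$: given $L \in \EXP$ with $L \in \DTIME(2^{n^k}+k)$, the padded language $L_{pad} = \{(x,1^{2^{|x|^k}+k}) : x \in L\}$ is in $\P \subseteq \mathsf{polyL}$, and I would simulate the resulting $\mathsf{DSPACE}(\mathrm{polylog})$ algorithm on the unpadded input $x$, keeping only an $O(n^k)$-bit counter for the virtual head position inside the all-ones padding and answering reads into the padding as $1$, which yields a $\PSPACE$ algorithm for $L$.

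The main obstacle I anticipate is the quantitative bookkeeping in part (a): making the size--depth tradeoff precise enough that quasipolynomial size genuinely forces depth $\Omega(\log n/\log\log n)$, and verifying that inlining the constant-depth oracle circuits into a depth-$o(\log n/\log\log n)$ reduction—whose oracle queries may be adaptive—multiplies the depth only by a constant while keeping the size quasipolynomial. Part (b) is routine once the padding template is in place, the only care being to implement the head-in-the-padding simulation within polynomial space.
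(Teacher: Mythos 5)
Your proposal is correct and follows essentially the same route as the paper: cite the $\qACz$ (resp.\ $\mathsf{DSPACE}(\log^2 n)$) upper bounds from Theorem~\ref{thm:MGSqAC0} and Theorem~\ref{thm:QuasigroupIsoPolylogtime}, invoke the Razborov--Smolensky size bound $\exp(\Omega(n^{1/2d}))$ to rule out depth-$o(\log n/\log\log n)$ quasipolynomial-size $\mathsf{AC}[p]$ reductions in part (a), and use the polyL closure plus the standard padding simulation (tracking the virtual head position in the all-ones padding with an $O(n^k)$-bit counter) to derive $\EXP=\PSPACE$ in part (b). Your choice of $\mathsf{Mod}_q$ for $q\neq p$ as the hard function is in fact slightly more careful than the paper's appeal to \algprobm{Parity}, since it also covers the case $p=2$.
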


\begin{proof}
(a) \Thm{thm:MGSqAC0}(\ref{thm:MGSqAC0:NTIME}) for \algprobm{MGS}, resp. \Thm{thm:QuasigroupIsoPolylogtime} for \algprobm{Quasigroup Isomorphism}, place both problems into classes that are contained in $\qACz$ by Fact~\ref{fact:NTISP}. If either problem were $\betacc{2}\P$-complete under quasi-polynomial size, $o(\log n / \log \log n)$-depth, $\mathsf{AC}[p]$ circuits, then, since $\algprobm{Parity} \in \P \subseteq \betacc{2}\P$, we would get such circuits for \algprobm{Parity}, which do not exist \cite{Razborov, Smolensky87algebraicmethods}  (Smolensky's argument yields a minimum size of $\exp(\Omega(n^{1/(2d)}))$ for depth-$d$ circuits, which is super-polynomial when $d \in o(\log n / \log \log n)$). 

(b) Both \algprobm{MGS} for quasigroups and \algprobm{Quasigroup Isomorphism} are in $\mathsf{DSPACE}(\log^2 n)$ by \Thm{thm:MGSqAC0}, resp.\ \cite{ChattopadhyayToranWagner}. The closure of $\mathsf{DSPACE}(\log^2 n)$ under poly-log space reductions is contained in $\mathsf{polyL} = \bigcup_{k \geq 0} \mathsf{DSPACE}(\log^k n)$. If either of these two quasigroup problems were complete for $\betacc{2} \P$ under polylog-space Turing reductions, we would get $\betacc{2} \P \subseteq \mathsf{polyL}$. Under the latter assumption, by a straightforward padding argument, we now show that $\EXP = \PSPACE$.

Let $L \in \EXP$; let $k$ be such that $L \in \DTIME(2^{n^k}+k)$. Define $L_{pad} = \{(x,1^{2^{|x|^k}+k}) : x \in L\}$. By construction, $L_{pad} \in \P$. Let us use $N$ to denote the size of the input to $L_{pad}$, that is, $N = 2^{n^k}+k+n$. By assumption, we thus have $L_{pad} \in \mathsf{polyL}$. Suppose $\ell$ is such that $L_{pad} \in \mathsf{DSPACE}(\log^\ell N)$. We now give a $\PSPACE$ algorithm for $L$. In order to stay within polynomial space, we cannot write out the padding $1^{2^{n^k}+k}$ explicitly. What we do instead is simulate the $\mathsf{DSPACE}(\log^\ell N)$ algorithm for $L_{pad}$ as follows. Whenever the head on the input tape would move off the $x$ and into the padding, we keep track of its index into the padding, and the simulation responds \emph{as though} the tape head were reading a 1. When the tape head moves right the index increases by 1, when it moves left it decreases by 1, and if the index is zero and the tape head moves left, then we move the tape head onto the right end of the string $x$. The index itself is a number between $0$ and $2^{n^k}+k$, so can be stored using only $O(n^k)$ bits. The remainder of the $L_{pad}$ algorithm uses only $O(\log^\ell N) = O(n^{k\ell})$ additional space, thus this entire algorithm uses only a polynomial amount of space, so $L \in \PSPACE$, and thus $\EXP=\PSPACE$.
\end{proof}

Now we return to establishing the main result of this section, \Thm{thm:MGSqAC0}. To establish \Thm{thm:MGSqAC0}(\ref{thm:MGSqAC0:NTIME}) and (\ref{thm:MGSqAC0:search}), we will crucially leverage the \algprobm{Membership} for quasigroups problem. To this end, we will first establish the following.
\begin{theorem}\label{thm:CQMqAC0}
\algprobm{Membership} for quasigroups belongs to $\betacc{2}\DTISPpll$.
\end{theorem}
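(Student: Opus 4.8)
The plan is to guess a succinct, \emph{structured} straight-line program for $x$ over $S$ in the form supplied by our Reachability Lemma for quasigroups (\Lem{lem:Reachability}), and then verify it deterministically within the $\DTISPpll$ budget. Applying \Lem{lem:Reachability} to the sub-quasigroup $\langle S\rangle$ with generating set $S$: if $x\in\langle S\rangle$ then there is a cube-like sequence $z_0,z_1,\dots,z_t$ with $t\le\lceil\log_2 n\rceil$, all $z_i\in\langle S\rangle$, such that (i) each $z_i$ is either an element of $S$ or has the form $g\circ h$ with $\circ\in\{\star,\backslash,/\}$ and $g,h\in L(i-1)=K(i-1)\backslash K(i-1)$, where $K(j)=\{P(z_0z_1^{e_1}\cdots z_j^{e_j}):e\in\{0,1\}^{j}\}$ is the left-to-right cube of the first $j{+}1$ terms; and (ii) $x=a\backslash b$ for suitable $a,b\in K(t)$. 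Crucially, every $g,h$ occurring in (i) and the elements $a,b$ in (ii) are left-to-right cube products over the earlier $z_j$, so each is pinned down by a single exponent vector of length $O(\log n)$.

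First I would fix the nondeterministic guess $y$, of total length $O(\log^2 n)$ (hence $\betacc{2}$), consisting of: the \emph{values} $z_0,\dots,z_t\in G$ ($O(\log^2 n)$ bits); for each $i$ a \emph{recipe}, namely either an index into $S$, or the operation symbol $\circ$ together with the four exponent vectors in $\{0,1\}^{i}$ that define the cube products for $g$ and $h$ ($O(\log n)$ bits per $i$, hence $O(\log^2 n)$ overall); and two exponent vectors $e,d\in\{0,1\}^{t+1}$ for the final witnesses $a,b$ ($O(\log n)$ bits). The deterministic verifier then, for $i=0,\dots,t$ in turn, recomputes $z_i$ from its recipe---evaluating each required cube product by a single left-to-right pass that keeps only the running product in memory (reading each needed $z_j$ and each exponent bit directly from $y$ by random access), then applying at most one division and one operation via table lookups---and checks the result against the guessed value $z_i$; finally it forms $a,b$ the same way and checks $a\star x=b$. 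These products and the individual table operations are exactly the kind of repeated, space-reusing multiplication already used in \Lem{lem:fastexp} and in the proof of \Thm{thm:QuasigroupIsoPolylogtime}, so each pass runs in $\poly\log(n)$ time and $O(\log n)$ space; since there are only $O(\log n)$ passes, the whole verification lies in $\DTISPpll$.

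Correctness is immediate in both directions. If all checks pass then, by induction on $i$, every $z_i\in\langle S\rangle$ (the base case $z_i\in S$ is clear, and otherwise $g,h\in K(i-1)\backslash K(i-1)\subseteq\langle S\rangle$ and $\langle S\rangle$ is closed under $\star,\backslash,/$), whence $a,b\in\langle S\rangle$ and $x=a\backslash b\in\langle S\rangle$; the converse is precisely the content of \Lem{lem:Reachability}.

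The main obstacle is meeting the $\betacc{2}$ and the $O(\log n)$-space budgets \emph{simultaneously}. A generic SLP of length $\Theta(\log^2 n)$ has $\Theta(\log^2 n)$ intermediate values, so writing it down as a guess costs $\Theta(\log^3 n)$ bits (only $\betacc{3}$), while recomputing it on the fly would need a recursion stack of depth $\Theta(\log n)$ holding an element per frame, i.e.\ $\Theta(\log^2 n)$ space. Both failures are avoided by exploiting the cube structure: there are only $O(\log n)$ \emph{generators} $z_i$ (rather than $\Theta(\log^2 n)$ arbitrary intermediates), each carrying an $O(\log n)$-bit recipe, and by placing the $z_i$ \emph{values} into the random-access guess string we can re-read any $z_j$ in $O(\log n)$ space instead of storing all of them. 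It is exactly the consistency check between the guessed values and the feed-forward recipes that lets the work tape stay bounded to a constant number of quasigroup elements at any moment.
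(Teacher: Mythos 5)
Your proposal is correct and follows essentially the same route as the paper's own proof: guess the $O(\log n)$ values $z_0,\dots,z_t$ of the cube-like sequence from the quasigroup Reachability Lemma (Lemma~\ref{extendedReachability}) together with, for each $i$, either an index into $S$ or the exponent vectors pinning down the witnesses in $L(i-1)$, plus two final exponent vectors for $x=a\backslash b$, all within $O(\log^2 n)$ nondeterministic bits, and then verify by left-to-right cube-product evaluation keeping only a running product in $O(\log n)$ space. The only cosmetic difference is that you guess the operation symbol explicitly while the paper lets the verifier try all three; your observation that the cube structure is what simultaneously beats the $\betacc{3}$ guess size of a raw SLP and the $\Theta(\log^2 n)$ space of recursive re-evaluation is exactly the point the paper makes in motivating its refinement of Fleischer's approach.
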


\Thm{thm:CQMqAC0} immediately yields the following corollary. 

\begin{corollary}
For quasigroups, \algprobm{Membership} and \algprobm{MGS} are not hard under $\ACz$-reductions for any complexity class containing \algprobm{Parity}.
\end{corollary}

\paragraph{The Reachability Lemma and proofs.} 
The proofs of \cref{thm:CQMqAC0} and \cref{thm:MGSqAC0} rely crucially on the following adaption of the Babai--Szemer\'edi  Reachability Lemma \cite[Theorem~3.1]{BabaiSzemeredi} to quasigroups. We first generalize the notion of a straight-line program for groups \cite{BabaiSzemeredi} to SLPs for quasigroups. We follow the same strategy as in the proof of \cite[Theorem~3.1]{BabaiSzemeredi}, but there are some subtle, yet crucial, modifications due to the fact that quasigroups are non-associative and need not possess an identity element.

Let $X$ be a set of generators for a quasigroup $G$. We call a sequence of elements $g_{1}, \ldots, g_{\ell} \in G$ a \textit{straight-line program} (SLP for short) if each $g_{i}$ ($i \in [\ell]$) either belongs to $X$, or is of the form or $g_{j}g_{k}$, $g_{j}\backslash g_{k}$, or $g_{j}/ g_{k}$ for some $j, k < i$ (where $g_{j}\backslash g_{k}$, resp. $g_{j}/ g_{k}$, denotes the quasigroup division as defined in \cref{sec:algebraprelims}). An SLP is said to \emph{compute} or \emph{generate} a set $S$ (or an element $g$) if $S \subseteq \{g_1, \dots, g_\ell\}$ (resp.\ $g\in\{g_1, \dots, g_\ell\}$).

For any sequence of elements $z_1,\dotsc,z_k$, let $P(z_1 z_2 \dotsb z_k)$ denote the left-to-right parenthesization, e.g., $P(z_1 z_2 z_3) = (z_1 z_2) z_3$. For some initial segment $z_0, z_{1}, \ldots, z_{i}$ define the cube
\[
K(i) = \{ P(z_0z_{1}^{e_{1}} \cdots z_{i}^{e_{i}}) : e_{1}, \ldots, e_{i} \in \{0,1\}\},
\]
where $e_j=0$ denotes omitting $z_j$ from the product (since there need not be an identity element).
 Define $L(i) = K(i)\backslash K(i) = \{ g\backslash h : g,h \in K(i)\}$.

Note that, if $L(k) = G$, then $z_1,\dotsc,z_k$ is very similar to a cube generating sequence and we call it a  \emph{cube-like} generating sequence (the difference is that $L(k)$ allows quotients of cube words, not only cube words, and in a quasigroup quotients of cube words need not always be cube words). 

\begin{lemma}[Reachability Lemma for quasigroups]\label{extendedReachability} 
Let $G$ be a finite quasigroup and let $X$ be a set of generators for $G$. Then there exists a sequence $z_0, \dots, z_t$ with $t\leq \log \abs{G}$ such that:
\begin{enumerate}[label*=(\arabic*)]
    \item $L(t) = G$
    \item\label{zinL} for each $i$ we have either $z_i \in X$ or $z_i \in \{ gh,\, g/h,\, h\backslash g\mid g,h \in L(i-1) \}$.
\end{enumerate}
In particular, for each $g \in G$, there exists a straight-line program over $X$ generating $g$ which has length $O(\log^2 |G|)$.
\end{lemma}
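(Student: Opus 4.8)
The plan is to mirror the Babai--Szemerédi Reachability Lemma \cite{BabaiSzemeredi}, building the sequence $z_0, z_1, \ldots, z_t$ greedily while tracking two quantities in parallel: the cube sizes $\abs{K(i)}$ and the \emph{straight-line cost} $c(i)$, defined as the length of a shortest SLP over $X$ generating $\{z_0, \ldots, z_i\}$. First I would set $z_0$ to be any element of $X$, so $K(0) = \{z_0\}$ and $c(0) = 1$. Then, as long as $L(i) \neq G$, I would choose $z_{i+1}$ to be an element \emph{outside} $L(i)$ minimizing the cost increment $c(i+1) - c(i)$, taken among candidates satisfying condition \ref{zinL}.

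The first key step is a doubling argument showing $\abs{K(i+1)} = 2\abs{K(i)}$. Since $K(i+1) = K(i) \cup K(i)z_{i+1}$ and right-multiplication by a fixed element is a bijection in a quasigroup, it suffices to prove $K(i) \cap K(i)z_{i+1} = \emptyset$. If some $a$ lay in the intersection, then $a = g z_{i+1}$ for some $g \in K(i)$, whence $z_{i+1} = g\backslash a \in L(i)$ (both $a, g \in K(i)$), contradicting $z_{i+1} \notin L(i)$. Thus $\abs{K(i)} = 2^i$, forcing termination within $t \leq \log_2\abs{G}$ steps with $L(t) = G$, which gives (1) and (2).

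The second key step bounds the cost increment and yields the ``in particular'' SLP claim. When $L(i) \neq G$, either some generator $g \in X$ lies outside $L(i)$---and then $z_{i+1} = g$ costs one extra SLP step---or $L(i)$ fails to be a sub-quasigroup, so there are $g, h \in L(i)$ with one of $gh$, $g/h$, $g\backslash h$ outside $L(i)$. In the latter case each element of $L(i)$ is a quotient $a\backslash b$ of cube words $a, b \in K(i)$, and each cube word is reachable by at most $i$ multiplications once $\{z_0, \ldots, z_i\}$ is available; so an SLP for each of $g, h$ costs at most $2i+1$ extra steps and one further step produces the required product or quotient, giving $c(i+1) - c(i) \leq 4i + 3$ and hence $c(i) \in \Oh(i^2) = \Oh(\log^2\abs{G})$. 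Finally, for arbitrary $g \in G = L(t)$ I would write $g = a\backslash b$ with $a, b \in K(t)$, prepend the cost-$\Oh(\log^2\abs{G})$ SLP for $\{z_0, \ldots, z_t\}$, and append at most $t$ multiplications for each of $a, b$ followed by one division, yielding an SLP for $g$ of length $\Oh(\log^2\abs{G})$.

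The hard part is coping with non-associativity and the absence of an identity, which is precisely where the argument departs from the group case. This is why $K(i)$ is defined with a fixed left-to-right parenthesization $P$, why $e_j = 0$ must mean literal \emph{omission} of $z_j$ rather than substitution of an identity, and why the closure set must be the left-quotient set $L(i) = K(i)\backslash K(i)$ rather than $K(i)$ itself: without inverses one cannot collapse a product into a single word, so reaching an element of $L(i)$ requires reconstructing two cube words plus a division. This is what inflates the per-step cost from the group bound $2i+1$ to $4i+3$; I expect the main work to be checking that this weaker bound still sums to $\Oh(\log^2\abs{G})$ and that the disjointness argument survives using only the one-sided division $g\backslash a$.
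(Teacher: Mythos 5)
Your proposal is correct and follows essentially the same route as the paper's proof: the same greedy construction of $z_0,\dots,z_t$ via the dichotomy ($X \not\subseteq L(i)$ or $L(i)$ not a sub-quasigroup), the same doubling argument for $\abs{K(i+1)} = 2\abs{K(i)}$ via the disjointness of $K(i)$ and $K(i)z_{i+1}$, and the same cost accounting $c(i+1)-c(i) \leq 4i+3$ leading to the $O(\log^2\abs{G})$ SLP bound. The only cosmetic difference is that you choose $z_{i+1}$ to minimize the cost increment whereas the paper takes any admissible element outside $L(i)$; this changes nothing in the analysis.
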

\begin{proof}
 We will inductively construct the sequence $z_0,z_{1}, \ldots, z_{t}$ as described in the lemma. To start, we take $z_0$ as an arbitrary element from $X$. Hence,  $K(0) = \{z_0\}$. 
Next, let us construct $K(i+1)$ from $K(i)$. If $L(i) \neq G$, we set $z_{i+1}$ to be some  element of $G \setminus L(i)$ that we can find as follows:
As $G \neq L(i)$, either $X \not\subseteq L(i)$ or $L(i)$ is not a sub-quasigroup. Hence, we have one of the following cases:
\begin{itemize}
    \item  If there is some $g \in X \setminus L(i)$, we simply take $z_{i+1} = g$.

    \item Otherwise, $L(i)$ is not a sub-quasigroup; hence, there exist $g,h \in L(i)$ with one of $gh$, $g/h$, or $g\backslash h \not \in L(i)$. We choose $z_{i+1}$ to be one of $gh$, $g/h$, or $g\backslash h$ that is not in $L(i)$.
    
\end{itemize}

Next, we first claim that $|K(i+1)| = 2 \cdot |K(i)|$.
Note that $K(i+1)= K(i) \cup K(i)z_{i+1}$ by definition.
As right-multiplication by a fixed element is a bijection in a quasi-group, it suffices to show that $K(i) \cap K(i)z_{i+1} = \emptyset$.
So, suppose that there exists some $a \in K(i) \cap K(i)z_{i+1}$. Then $a = gz_{i+1}$ for some $g \in K(i)$. Hence, $z_{i+1} = g\backslash a$, contradicting $z_{i+1} \not\in L(i)$, since both $a$ and $g$ are in $K(i)$. 

It now follows that $\abs{K(i)} = 2^i$ and, hence, for some $t \leq \lceil \log_{2}(|G|) \rceil$ we have $L(t) = G$. This completes the proof of the first part of the lemma.

It remains to see that for every $g \in G$ there is a straight-line program over $X$ of length $O(\log^2 \abs{G})$. 
To see this, let $c(i)$ denote the straight-line cost for $\{z_0,z_{1}, \ldots, z_{i}\}$ ($1\leq i \leq t$), which is defined as the length of the shortest SLP generating $\{z_0,z_{1}, \ldots, z_{i}\}$. 
We have $c(0) = 1$ and, if $z_{i+1} \in X \setminus L(i)$, then $c(i+1) \leq c(i) + 1$. 
If $z_{i+1} \in \{ gh,\, g/h,\, h\backslash g\mid g,h \in L(i)\}$, we can write each of $g$ and $h$ as SLP over the set $\{z_0,z_{1}, \ldots, z_{i}\}$ of length at most $2i + 1$, yielding an SLP for $z_{i+1} $ over the same set of length at most $4i+3$. Together this yields $c(i+1) \leq c(i) + 4i+3$.
 Hence, by induction, $c(t) \in O(\log ^2\abs{G})$. As $L(t)= G$, we obtain for any element $g \in G$ an SLP of length $c(t) + 2t+1\in O(\log^2\abs{G})$. 
\end{proof}

\begin{remark}
The proof of Lemma~\ref{extendedReachability} shows that for any quasigroup $Q$ and any generating set $S \subseteq Q$, every element $g \in Q$ can be realized with a parse tree of depth $O(\log^2 |Q|)$. In contrast, Wolf \cite{Wolf} establishes the existence of parse trees of depth $O(\log |Q|)$.   
\end{remark}

For proving \cref{thm:CQMqAC0} we follow essentially the ideas of \cite{Fleischer} (though we avoid introducing the notion of Cayley circuits); however, we have to use a more refined approach by not simply guessing an SLP but the complete information given in \cref{extendedReachability}. Fleischer obtained a $\qACz$ bound for \algprobm{Membership} for groups by then showing that the Cayley circuits for this problem can be simulated by a depth-$2$ $\qACz$ circuit. We will instead directly analyze the straight-line programs using an $\betacc{2}\DTISPpll$ (sequential) algorithm.

\begin{proof}[Proof of Theorem \ref{thm:CQMqAC0}]
    To decide whether $g \in \langle X \rangle$ with $X = \{x_1, \dots, x_\ell\}$, we guess a sequence of elements $z_1, \dots, z_k \in G$ with $k \leq \log n$ as in \cref{extendedReachability} together with the necessary information to verify condition \ref{zinL} in the lemma and whether $g \in L(k)$. This information consists of sequences $\ell_i \in [\ell]$ and $e_{i,j}^{(\mu)}\in \{0,1\}$ for $i \in [k]$, $j\in [i-1]$, $\mu \in [4]$ and, as witness that $g \in L(k)$, a sequence $f_j^{(\mu)}\in \{0,1\}$ for $j\in [k]$, $\mu \in [2]$. Note that all this information can be represented using $O(\log^2n)$ bits.

    In the \DTISPpll computation we compute for all $i$ the elements $g_i^{(\mu)} = P(z_0z_{1}^{e_{i,1}^{(\mu)}} \cdots z_{i-1}^{e_{i,i-1}^{(\mu)}})$ and verify that $z_i $ is either $x_{\ell_i}$ or can be written as $ab$, $a\backslash b$ or $a/b$ where $a = g_i^{(1)} \backslash g_i^{(2)}$ and $b = g_i^{(3)} \backslash g_i^{(4)}$~-- thus, verifying the conditions of \cref{extendedReachability}. Finally, with the same technique we verify that $g\in L(k)$ by checking whether $g = P(z_0z_{1}^{f_{1}^{(1)}} \cdots z_{k}^{f_{k}^{(1)}})  \backslash P(z_0z_{1}^{f_{1}^{(2)}} \cdots z_{k}^{f_{k}^{(2)}}) $.
\end{proof}

\begin{proof}[Proof of Theorem \ref{thm:MGSqAC0}]

(\ref{thm:MGSqAC0:NTIME}) Let $G$ denote the input quasigroup (of order $n$). First, every quasigroup has a generating set of size $\leq \lceil \log n \rceil$ \cite{MillerTarjan}. Therefore, we start by guessing a subset $X\subseteq G$ of size at most $\leq \lceil \log n \rceil$ (resp.\ the size bound given in the input). For this, we use $O(\log^2 n)$ existentially quantified non-deterministic bits ($\betacc{2}$). 
Furthermore, we also guess some additional information, namely, the sequence $z_1, \dots, z_k \in G$ from \cref{extendedReachability} and, just as in the proof of \cref{thm:CQMqAC0}, the witnesses that this is, indeed, a cube-like sequence. As in the proof of \cref{thm:CQMqAC0} this takes again $O(\log^2 n)$ existentially quantified bits.

 In the next step, we verify whether $X$ actually generates $G$. This is done by checking for all $g \in G$ (universally quantifying $O(\log n)$ bits, $\alphacc{1}$) whether $g \in \langle X \rangle$, which can be done in $\betacc{1}\DTISPpll \subseteq \qACz$ just as in \cref{thm:CQMqAC0}. 
 Here it is crucial to note that, as we have already guessed the cube-like generating set in the outer existential block, it suffices to guess $O(\log n)$ bits for the exponents. Finally, as in the proof of \cref{thm:CQMqAC0}, we verify in $\DTISPpll$ using the additional information guessed above that $z_1, \dots, z_k \in G$ is indeed, a cube-like sequence.
 This concludes the proof of the bound (a) for the decision variant. 

(\ref{thm:MGSqAC0:search})
To find a minimum-sized generating sequence,  we can enumerate all possible generating sets $X$ of size at most $\log n$ in $\qACz \cap \mathsf{DSPACE}(\log^2 n)$. 
If we want to compute the minimum generating set, we first, have to find a generating set $X$ and then we have to check whether $X$ is actually of smallest possible size. To do this, in a last step, we use the decision variant from above to check that there is no generating set of size at most $\abs{X} - 1$ for $G$.
\end{proof}

\begin{remark}
While it is possible to directly reduce \algprobm{MGS} to \algprobm{Membership} in $\qACz$, we obtain slightly better bounds by instead directly leveraging the Reachability Lemma~\ref{extendedReachability}. It is also possible to $\qACz$-reduce \algprobm{Quasigroup Isomorphism} to \algprobm{Membership}. This formalizes the intuition that, from the perspective of $\qACz$, membership testing is an essential subroutine for isomorphism testing and \algprobm{MGS}. 
 
This might seem surprising, as in the setting of groups, \algprobm{Membership} belongs to $\LogSpace$, while \algprobm{MGS} belongs to $\mathsf{AC}^{1}(\mathsf{L})$ (\Thm{thm:MGSSAC2}), yet it is a longstanding open problem whether \algprobm{Group Isomorphism} is in $\mathsf{P}$.   
\end{remark}

\subsection{MGS for Magmas}

In this section, we establish the following.

\begin{theorem} \label{thm:magma}
    The decision variant of \algprobm{Minimum Generating Set} for commutative magmas (unital or not) is \NP-complete under many-one $\ACz$ reductions.
\end{theorem}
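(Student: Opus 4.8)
The plan is to show membership in \NP\ by the obvious guess-and-check, and then to establish \NP-hardness by an \ACz\ many-one reduction from \algprobm{Set Cover} (which is \NP-complete under \ACz\ reductions). For membership, on input a magma $M$ and bound $k$ one guesses a set $S\subseteq M$ with $|S|\le k$ and verifies $\langle S\rangle = M$; the closure $\langle S\rangle$ is computable in polynomial time by repeatedly adjoining products until a fixed point is reached (at most $|M|$ rounds), so the decision version lies in \NP.

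For hardness, given a \algprobm{Set Cover} instance with universe $U$, sets $T_1,\dots,T_t$, and bound $k$, first I would \emph{amplify} it: replace each element $u\in U$ by $N=t+1$ copies and each $T_i$ by the union of the copies of its elements. The point of amplification is that covering a single element ``by hand'' now costs $N>t$, so an optimum cover that is additionally allowed to use singletons never uses them; hence the amplified optimum equals the original \algprobm{Set Cover} optimum. I would then build a finite commutative magma $M$ whose elements are: the (amplified) universe elements $w$, one selector $x_i$ per set, an absorbing zero $z$, and a fixed collection of auxiliary \emph{helper} elements. All products default to $z$ (which is absorbing), and the helpers are used to wire up, via binary products, two behaviors: (i) each selector $x_i$ together with the helpers generates exactly the universe elements of $T_i$, and (ii) each $x_i$ can be \emph{recovered} from a left-to-right product once \emph{all} universe elements of $T_i$ are already present. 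Every product entry is a simple local function of the set-membership incidence (and of positions in a fixed ordering of each $T_i$), so the multiplication table is \ACz-computable, and the target bound is $k'=h+k$ where $h$ is the number of helpers.

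The key structural facts to verify are: the indecomposable elements (those not equal to any product, hence forced into every generating set) are exactly the $h$ helpers, while the universe elements, selectors, and intermediates are all decomposable; and the recovery gadget never produces a \emph{new} universe element (recovering $x_i$ requires all of $T_i$'s universe, and $x_i$ only regenerates that same universe). Consequently, for $S\supseteq(\text{helpers})$, one shows $\langle S\rangle=M$ iff the universe elements of $S$ together with the selectors in $S$ cover the (amplified) universe: a genuine cover produces all universe elements, after which the recovery gadget fills in every remaining selector, intermediate, and $z$; conversely a non-cover leaves some universe element, hence also some selector, unproduced. This yields a generating set of size $\le h+k$ iff there is a cover of size $\le k$.

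The main obstacle is the exactness of the last equivalence, i.e.\ ruling out ``shortcut'' generating sets that exploit decomposable non-selector elements. I expect this to require a short exchange argument showing that an intermediate $x$ from a recovery chain is always dominated by the corresponding selector (adding $x_i$ generates its whole set immediately, whereas a partial intermediate is useless until the rest of that set's universe is already present), so a minimum generating set may be assumed to use, beyond the forced helpers, only universe elements and selectors; combined with amplification this pins the optimum to the original \algprobm{Set Cover} value. Finally, the unital case follows by adjoining an identity $e$ together with a single product that outputs $e$ (so that $e$ is generated and never needs to be a generator), which does not affect the analysis since multiplying by $e$ never yields a new element.
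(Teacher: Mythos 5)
Your overall strategy---a set of forced ``helper'' generators plus a choice gadget, with a progress/recovery mechanism that makes each chosen element reproducible once its job is done---is viable in principle and is structurally parallel to the paper's proof, which reduces from \algprobm{3SAT} rather than \algprobm{Set Cover}: there the clause elements $C_j$ play the role of your helpers, one literal per variable is additionally forced into every generating set, and the elements $S_{j,k}$ (with $C_jX=S_{j,j}$ when the literal $X$ satisfies $C_j$ and $S_{j,k}S_{k+1,\ell}=S_{j,\ell}$) implement the progress-tracking chain. The crucial difference is that in the paper's gadget \emph{every} generating set of the target size $n+m$ is forced to consist of exactly the $m$ clause elements plus one literal per variable (the only product yielding $X_i$ is $S\overline{X}_i$ and vice versa, so at least one of each pair must be a generator), so no analysis of ``shortcut'' generating sets is needed at all.

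That shortcut analysis is precisely where your writeup has a genuine gap. You never actually define the magma: the helpers, the intermediates of the recovery chains, and the multiplication table itself are left at the level of ``wire up, via binary products, two behaviors,'' yet the correctness of the reduction lives entirely in those details. Concretely: (i) the claim that the indecomposable elements are \emph{exactly} the helpers has to be checked against an explicit table --- the selectors are decomposable only via the recovery chains, the chain intermediates must themselves be decomposable, and commutativity forces each unordered pair to receive one consistent value, which constrains how chains for overlapping sets $T_i, T_j$ may share universe elements; (ii) an amplified universe element lying in no $T_i$ is indecomposable, hence forced, which silently changes your generator count $h$ and must be handled separately (e.g.\ by mapping such instances to a fixed no-instance); and (iii) the exchange argument ruling out generating sets that use intermediates or universe elements --- which you yourself flag as ``the main obstacle'' --- is only gestured at, and without the explicit table one cannot verify that an intermediate's products contribute nothing beyond what the corresponding selector contributes. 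As written this is a plan for a proof rather than a proof; completing it requires writing down the table and verifying these three points, all of which the paper's 3SAT gadget sidesteps by forcing the exact composition of any minimum generating set.
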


This \NP-completeness result helps explain the use of Integer Linear Programming in practical heuristic algorithms for the search version of this problem in magmas, e.\,g., \cite{JMV}.

For the closely related problem they called \algprobm{Log Generators}---given the multiplication table of a binary function (=magma) of order $n$, decide whether it has a generating set of size $\leq \lceil \log_2 n \rceil$---Papadimitriou and Yannakakis  proved that \algprobm{Log Generators of Magmas} is $\betacc{2} \mathsf{P}$-complete under polynomial-time reductions \cite[Thm.~7]{PY}. 
Our proof uses a generating set of size roughly $\sqrt{n}$, where $n$ is the order of the magma; this is analogous to the situation that $\algprobm{Log-Clique}$ is in $\betacc{2} \P$, while finding cliques of size $\Theta(\sqrt{n})$ in an $n$-vertex graph is $\NP$-complete.

\begin{proof}
It is straightforward to see that the problem is in \NP by simply guessing a suitable generating set. To show \NP-hardness
we reduce \algprobm{3SAT} to \algprobm{Minimum Generating Set} for commutative magmas. Let $F=\bigwedge_{j=1}^m C_j$ 
with variables $X_1, \dots, X_n$ and clauses $C_1, \dots, C_m$ be an instance of \algprobm{3SAT}. Our magma $M$ consists of the following elements: 
\begin{itemize}
\item for each variable $X_i$, two elements $X_i, \overline{X}_i$, 
\item for each clause $C_j$ an element $C_j$,
\item for each $1\leq j\leq k\leq m$ an element $S_{j,k}$, and 
\item a trash element $0$.
\end{itemize}
We use $S$ as an abbreviation for $S_{1,m}$. 

We define the multiplication as follows: 
\begin{align*}   
 C_j X &= S_{j,j} \text{ if the literal $X$ appears in $C_j$}\\
 S  X_i &= \overline{X}_i,\\ 
 S \overline{X}_i &= X_i, \\
  S_{j,k}S_{k+1,\ell}&= S_{j,\ell}.
\end{align*}
Aside from multiplication being commutative (e.\,g., we also have $X_i S = \overline{X}_i$, etc.), 
all other products are defined as $0$.

The idea is that the presence of $S_{j,k}$ in a word indicates that clauses $j,j+1,\dotsc,k$ have been satisfied. This interpretation aligns with the multiplication above: viz. $C_j X = S_{j,j}$ if $X$ satisfies $C_j$. If clauses $j$ through $k$ are satisfied ($S_{j,k}$) and clauses $k+1$ through $\ell$ are satisfied ($S_{k+1,\ell}$), then in fact clauses $j$ through $\ell$ are satisfied ($S_{j,k} S_{k+1,\ell} = S_{j,\ell}$). We use ``$S$'' for all of these as a mnemonic for ``satisfied'' and also because $S = S_{1,m}$ acts as a ``swap'' on literals.

Note that any generating set for $M$ must include all the $C_j$, since without them, there is no way to generate them from any other elements. Similarly, any generating set must include, for each $i=1,\dotsc,n$, at least one of $X_i$ or $\overline{X}_i$, since again, without them, there is no way to generate those from any other elements.

When $F$ is satisfiable, we claim that $M$ can be generated by precisely $n+m$ elements. Namely, include all $C_j$ in the generating set. Fix a satisfying assignment $\varphi$ to $F$. If $\varphi(X_i)=1$, then include $X_i$ in the generating set, and if $\varphi(X_i)=0$, include $\overline{X}_i$ in the generating set. Since $\varphi$ is a satisfying assignment, for each $j$, there is a literal $X$ in our generating set that appears in $C_j$, and from those two we can generate $C_j X = S_{j,j}$. Next, $S_{j,j} S_{j+1,j+1} = S_{j,j+1}$, and by induction we can generate all $S_{j,k}$. In particular, we can generate $S=S_{1,m}$, and then using $S$ and our literal generators, we can generate the remaining literals.

Conversely, suppose $M$ is generated by $n+m$ elements; we will show that $F$ is satisfiable. As argued above, those elements must consist of $\{C_j : j \in [m]\}$ together with precisely one literal corresponding to each variable. Since the final defining relation can only produce elements $S_{j,\ell}$ with $j$ strictly less than $\ell$, the only way to generate $S_{j,j}$ from our generating set is using the first relation. Namely, at least one of the literals in our generating set must appear in $C_j$. But then, reversing the construction of the previous paragraph, the literals in our generating set give a satisfying assignment to $F$. 

\textbf{Identity elements.} We may add a new element $e$, relations $ea = ae = a$ for all $a$, and include $e$ as a generator. When this is done, the reduction queries whether the algebra is generated by $n+m+1$ elements or more than that many, since the element $e$ must be contained in every generating set.
\end{proof}

As with most $\mathsf{NP}$-complete decision versions of optimization problems, we expect that the \emph{exact} version---given a magma $M$ and an integer $k$, decide whether the minimum generating set has size exactly $k$---is $\mathsf{DP}$-complete, but we leave that as a (minor) open question.

\section{Conclusion} \label{sec:conclusion}
The biggest open question about constant-depth complexity on algebras given by multiplication tables is, in our opinion, still whether or not \algprobm{Group Isomorphism} is in $\mathsf{AC}^0$ in the Cayley table model. 
Our results make salient some more specific, and perhaps more approachable, open questions that we now highlight. 

First, sticking to the generator-enumerator technique, it would be interesting if the complexity of any single part of our Theorem~\ref{thm:QuasigroupIsoPolylogtime} could be improved, even if such an improvement does not improve the complexity of the overall algorithm. Enumerating generators certainly needs $\log^2 n$ bits. Can one verify that a given list of elements is a cube generating sequence better than $\alphacc{1} \betacc{1}\DTISPpll$? Can one verify that a given mapping $g_i \mapsto h_i$ of generators induces an isomorphism with complexity lower than $\alphacc{1}\DTISPpll$? 

More strongly, is \algprobm{GpI} in $\betacc{2}\ACz$? We note that unlike the question of $\ACz$, a positive answer to this question does not entail resolving whether \algprobm{GpI} is in $\P$.

\begin{question}
Does \algprobm{MGS} for groups belong to $\LogSpace$?
\end{question}

\begin{question}
Does \algprobm{Membership} for quasigroups belong to $\LogSpace$?
\end{question}

The analogous result is known for groups, by reducing to the connectivity problem on Cayley graphs. The best known bound for quasigroups is $\SAC^{1}$ due to Wagner \cite{WagnerThesis}. Improvements in this direction would immediately yield improvements in $\algprobm{MGS}$ for quasigroups. Furthermore, a \textit{constructive} membership test would also yield improvements for isomorphism testing of $O(1)$-generated quasigroups. Note that isomorphism testing of $O(1)$-generated groups is known to belong to $\LogSpace$ \cite{TangThesis}.

\begin{question}
What is the complexity of \algprobm{Minimum Generating Set} for semigroups (in the Cayley table model)? More specifically, is it \NP-complete?
\end{question}

\printbibliography

\end{document}